\DeclareMathAlphabet{\mathsc}{OT1}{cmr}{m}{sc}
\newcommand{\pushright}[1]{\ifmeasuring@#1\else\omit\hfill$\displaystyle#1$\fi\ignorespaces}
\newcommand{\pushleft}[1]{\ifmeasuring@#1\else\omit$\displaystyle#1$\hfill\fi\ignorespaces}
\newcommand{\lo}[1]{\raisebox{-0.1ex}{$#1$}\,}
\newcommand{\loo}[1]{\raisebox{-0.2ex}{$#1$}\,}
\newcommand{\Lo}[1]{\raisebox{-0.3ex}{$#1$}\,}
\newcommand{\Loo}[1]{\raisebox{-0.4ex}{$#1$}\,}
\newcommand{\LO}[1]{\raisebox{-0.5ex}{$#1$}\,}
\newcommand{\LOO}[1]{\raisebox{-0.6ex}{$#1$}\,}
\newcommand{\R}{\mathbb R}
\newcommand{\C}{\mathbb C}
\newcommand{\N}{\mathbb N}
\newcommand{\Z}{\mathbb Z}
\newcommand{\abs}[1]{\lvert #1 \rvert}
\newcommand{\Abs}[1]{\left\lvert #1 \right\rvert}
\newcommand{\norm}[1]{\lVert #1 \rVert}
\newcommand{\Norm}[1]{\left\lVert #1 \right\rVert}
\newcommand{\D}{\mathrm{d}}
\newcommand{\I}{\mathrm{i}}
\newcommand{\e}{\mathrm{e}}
\newcommand{\Langle}{\left\langle}
\newcommand{\Rangle}{\right\rangle}
\newcommand{\Cb}[1]{C_\mathrm{b}\!\left(#1\right)}
\newcommand{\stfirst}[2]{\mspace{1mu}\genfrac{[}{]}{0pt}{}{#1}{#2}\mspace{1mu}}
\newcommand{\stsecond}[2]{\mspace{0mu}\genfrac{\{}{\}}{0pt}{}{#1}{#2}\mspace{0mu}}
\DeclareDocumentCommand{\EE}{ O{}mO{} }{
	\ifthenelse{\isempty{#3}}{
		\mathbb E_{#1}\!\left[#2\right]}{
		\mathbb E_{#1}\!\left[#2\,\middle|\,#3\right]}
}
\newcommand{\mc}[1]{\mathcal{#1}}
\newcommand{\mb}[1]{\mathbb{#1}}
\newcommand{\mf}[1]{\mathfrak{#1}}
\newcommand{\mr}[1]{\mathrm{#1}}
\newcommand{\cat}[1]{\mathsc{#1}}
\DeclareMathOperator{\supp}{supp}
\DeclareMathOperator{\dom}{dom}
\declaretheorem[
	name=Definition, style=definition%
%numberiwthin=section
]{defn}
\declaretheorem[
	numbered=no, name=Hypothesis R, style=definition
]{hypo}
\declaretheorem[
	name=Proposition, style=plain, sibling=defn
]{prop}
\declaretheorem[
	name=Lemma, style=plain, sibling=defn
]{lem}
\declaretheorem[
	name=Theorem, style=plain, sibling=defn
]{theo}
\declaretheorem[
	name=Example, style=remark, sibling=defn
]{ex}
\declaretheorem[
	name=Remark, style=remark, sibling=defn
]{rk}
\author{%
Horst Thaler%\thanks{{\tt horst.thaler@unicam.it}.}
}
\author{%
Rodrigo Vargas Le-Bert\thanks{Supported by Associazione LumbeLumbe.}
}
\affil{School of Architecture and Design, University of Camerino}%\vspace{-2ex}}
\date{\normalsize\today}%\vspace{-1.5ex}}
\title{Field Theory Done Right}
\begin{document}

\begin{titlepage}
\maketitle
\thispagestyle{empty}

%\vspace{-3ex}
\begin{abstract}
%A conceptually clean renormalization scheme, dispensing altogether with series expansions, is introduced for quantum and statistical fields in the Euclidean setting. Space-time is represented by a projective system of coarse regions, implementing varying energy scales, and fields are constructed as cylinder probability densities (martingales) with respect to a Lévy white noise. Instead of points, fields are evaluated on ultrafilters of the coarse region lattice, encompassing all scales in one stroke. Euclidean invariance is given from the start.
%
%A second ingredient is our Wick product of field evaluations which, together with the $\mc S$-transform, allows to define a rich observable algebra containing the Gibbs factors one would usually like to deal with. A last essential ingredient is Hypothesis R of \autoref{explicit-calc}, closing section 1 and completing our renormalization scheme by providing a tool to explicitly calculate (conditional) expectations. \autoref{reconstruction-theo} shows how the reconstruction of a quantum field in Lorentzian signature works in our setup.
%
%In section 2 we work out in detail expressions of Wick products for the cases of Gauss, Poisson and Gamma fields. The proofs given there unveil the regularization effect of Wick products through concrete formulas. Thereby we shall encounter pretty familiar polynomials, like falling factorials and Hermite polynomials.
%
%Finally, we show in section 3 how our approach can be used to produce a quantum field with quartic interaction in dimension $d\leq 4$.
%
An effective formalism for white noise analysis, conceptually equivalent to Wilsonian renormalization theory, is introduced.
Space-time gets represented by a boolean lattice of coarse regions, energy scales become space-time partitions by lattice regions, and observables are elements of a projective limit with connecting maps %impose a compatibility (martingale) condition on families of effective observables, one for each energy scale.
given by partial integration of high-energy degrees of freedom.
The framework allows for a seamless generalization of the Wick product and the $\mc S$-transform to essentially arbitrary Lévy noises, and we provide a tool to make explicit calculations in several cases of interest, including Gauss, Poisson and Gamma noises (we shall thereby encounter pretty familiar polynomials, like falling factorials and Hermite polynomials).

Armed with this, we turn to constructive quantum field theory. We adopt an Euclidean approach and introduce a sufficient condition for reflection positivity, based on our $\mc S$-transform, enabling us to construct  non-trivial quantum fields by simply specifying compatible families of effective connected $n$-point functions. We exemplify this by producing a field with quartic interaction in dimension $d\leq 8$. Its connected $n$-point functions vanish except for the propagator and the connected $4$-point function, which is that of the $\phi^4$ field up to order $\hbar$. This model satisfies all the physical requirements of a non-trivial quantum field theory.
\medskip \\
{\bf 2010 MSC}:
60H40, % 	White noise theory
81T08 % 	Constructive quantum field theory
(primary);
%81T25 % 	Quantum field theory on lattices
%81T27 % 	Continuum limits
%81T28 % 	Thermal quantum field theory
%28C20 % 	Set functions and measures and integrals in infinite-dimensional spaces (Wiener measure, Gaussian measure, etc.)
%58D20, % 	Measures (Gaussian, cylindrical, etc.) on manifolds of maps
%46T12 % 	Measure (Gaussian, cylindrical, etc.) and integrals (Feynman, path, Fresnel, etc.) on manifolds
%60B99 % 	Probability theory on algebraic and topological structures
%82B30 % 	Statistical thermodynamics
81T16 % 	Nonperturbative methods of renormalization
(secondary).
\end{abstract}

\end{titlepage}

\setcounter{tocdepth}{2}
\tableofcontents

\section{White noise analysis}

\subsection{Continuous product measures}

Let $M$ be a Riemannian manifold (space-time), over which we will study real-valued fields $x:M\rightarrow\R$. %and $S$ an abelian semigroup with an invariant measure $\D s$ (up to an appropriate scale, fields on $M$ will be effectively $S$-valued, as it will be shortly made clear). We assume that $S\subseteq\R$, say $S=\R, \R_+$ or $\N$ (more general cases are treated similarly and have no interest for us here).
We approach our fields by considering their coarse-grained versions, obtained by taking local mean values.\footnote{From this effective perspective, any family of compatible coarse-grained fields could, a priori, be a valid field configuration and we accept them all---although we do expect that typical configurations of a given physically meaningful statistic ensemble can be taken to belong to a suitable space of not-so-general fields.}
In making precise sense of this it is natural to use the projection lattice of the von Neuman algebra $L^\infty(M,\C)$, and we start by recalling the relevant notions.

The space of measurable, essentially bounded functions $x:M\rightarrow\R$ modulo equality almost everywhere forms a (real) vector space $L^\infty(M) = L^\infty(M,\R)$, which becomes a Banach space once equipped with the essential supremum norm
\[
\norm{x}_\infty = \inf\Set{ \vphantom{\hat{\hat A}} C>0 | \abs{x(m)}\leq C\text{ for almost all } m\in M }.
\]
This Banach space is actually a Banach algebra for the pointwise product
\[
xy(m) = x(m)y(m),\quad x,y\in L^\infty(M)
\]
which is well-defined and satisfies $\norm{xy}_\infty \leq \norm x_\infty\norm y_\infty$\lo. A \emph{projection} $p\in L^\infty(M)$ is an element of this algebra satisfying $p^2=p$. It is plain to see that a projection can only take, essentially, the values 0 and 1, and is therefore (the equivalence class of) an indicator function
\[
p_A(m) = \left\{\begin{aligned} &1& &m\in A\\ &0& &\text{otherwise}\end{aligned}\right.
\]
for some measurable set $A\subseteq M$, which is well-defined modulo a set of measure 0 and can be taken to be the essential support of $p$. The set $\Lambda_\mr{meas} = \Lambda_\mr{meas}(M)$ of all projections $p\in L^\infty(M)$ forms a distributive lattice for the two operations
\[
p\wedge q = pq,\quad p\vee q = p+q-pq.
\]
%Moreover, one has the complementation operation $p\mapsto 1-p$, which turns $\Lambda(M)$ into a Boolean algebra.
%As such, it comes equipped with a natural order relation, namely
%\[
%p\leq q\ \Leftrightarrow\ pq = p,
%\]
%which is just inclusion at the level of the essential supports.
A partial order relation on $\Lambda_\mr{meas}$ is then imposed by
\[
p\leq q\ \Leftrightarrow\ p=p\wedge q
\]
for $p,q\in \Lambda_\mr{meas}$\lo, which coincides with set inclusion of the corresponding essential supports, i.e. $p_A\leq p_B\Leftrightarrow A\subseteq B.$
%Here, too, do we consider projections as elements of $L^\infty(M)$, so that set inclusion has to be understood modulo sets of measure
There is also a least element $0=p_{\emptyset}$ and a greatest element $1=p_M$\lo.
%where $p_A$ stands for the projection onto the subset $A\subseteq M.$
Last but not least, each element $p\in \Lambda_\mr{meas}$ has a complement $\neg p=1-p$. Thus, it turns out that $\Lambda_\mr{meas}$ is even a Boolean lattice.

We proceed to define the space of generalized fields. Consider finite \emph{partitions} $P = \set{p_1\lo,\dots,p_m}\subseteq \Lambda_\mr{meas}$\lo. By this, we mean that we require \emph{completeness} and \emph{orthogonality,} in the sense that
\[
\sum p_i = 1,\quad p_ip_j=0 \text{ whenever } i\neq j.
\]
Write $X_P = L^\infty(M;P)\subseteq L^\infty(M)$ for the space of $P$-simple functions, i.e.\ linear combinations of projections in $P$.
Equip the family $\mc P_\mr{meas}$ of all such $P$'s with the partial order given by inclusion of the associated simple function subspaces, i.e.\
\[
Q\succcurlyeq P\ \Leftrightarrow\ L^\infty(M;Q)\supseteq L^\infty(M;P).
\]
If $Q\succcurlyeq P$, one has a projection
\[
\pi_{PQ}:X_Q\rightarrow X_P\lo,\quad
(\pi_{PQ}x)_p = \frac1{\abs p} \sum_{q\leq p} \abs{q}x_q\lo, %\quad x\in X_Q,
\]
where $\abs p = \int_M p$ with respect to the volume measure. Next, choose a directed subset $\mc P\subseteq\mc P_\mr{meas}$\lo. The space of fields, whose topology will depend on a probability measure to be constructed and therefore cannot be specified yet, will be a subset of the (algebraic) projective limit $X = \projlim X_P$ taken over the partitions belonging to $\mc P$.
\begin{rk}
Let us elaborate on the convenience of allowing for the use a subfamily $\mc P\subseteq\mc P_\mr{meas}$\lo, as opposed to \emph{all} of $\mc P_\mr{meas}$\lo. The point is that the geometry of $M$ has a role to play in constructing physically relevant measures on $X$, while $\mc P_\mr{meas}$ encodes just its measure-theoretic structure (as far as $\mc P_\mr{meas}$ knows, $M$ is indistinguishable from either the interval $[0,1]$, if it is compact, or the real line $\R$, if it is not).
%Now, statistical field ensembles will typically depend on the Laplacian operator on $M$ (which encodes the metric) and it is not always possible to incorporate this dependence without restricting the set of allowable partitions.
%A second point is that
Given, as we will shortly see, that the algebra of local observables depends on $\mc P$, it might be desirable to have $\mc P$ reflect the geometry of $M$---specially if the geometric background is fixed, as it will be in all of the theories that we consider here.\footnote{Having the possibility of \emph{not} forcing $\mc P$ to encode the geometry of $M$ might of course be equally important, for instance in searching for models of quantum gravity.} We will shortly introduce one natural way of doing so, by relying on the smooth structure of $M$ as encoded in its possible piecewise smooth cellular decompositions.
\end{rk}

Now, take a convolution semigroup of probability measures $\Set{\nu_\lambda}_{\lambda\geq 0}$ on $\R$. We equip $X_P$ with the reference measure
\[
\D\mu_P(x) = \prod_{p\in P} \D\nu_{\abs p}(\abs px_p).
\]
%where $\D x_p$ is Lebesgue measure if $S=\R$ or $\R_+$ and counting measure if $S=\N$.

\begin{prop}
Given $Q\succcurlyeq P$, one has $(\pi_{PQ})_*\mu_Q = \mu_P$\lo. In particular, the family $\mu = \set{\mu_P}$ defines a cylinder measure on $X$.
\end{prop}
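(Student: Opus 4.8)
The plan is to pass to the natural linear coordinates on each $X_P$ in which the reference measures become product measures, and then to verify the claimed identity at the level of characteristic functions, where it collapses to the defining property of a convolution semigroup together with the additivity of the volume $\abs{\cdot}$. Concretely, for a partition $P$ I would coordinatize $X_P\cong\R^P$ by the \emph{scaled coefficients} $y_p := \abs p\,x_p$, in which variables $\mu_P$ is exactly the product $\bigotimes_{p\in P}\nu_{\abs p}$, and the connecting map reads $(\pi_{PQ}x)_p = \abs p^{-1}\sum_{q\leq p}\abs q\,x_q$; that is, in scaled coordinates $\pi_{PQ}$ simply sends $(y_q)_{q\in Q}$ to $\big(\sum_{q\leq p}y_q\big)_{p\in P}$.

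The one structural input I need is that $Q\succcurlyeq P$ forces each $p\in P$ to be a $Q$-simple projection, hence $p=\sum_{q\leq p}q$; consequently the blocks $\set{q\in Q : q\leq p}$, indexed by $p\in P$, partition $Q$, and additivity of the volume gives $\abs p=\sum_{q\leq p}\abs q$. With this in hand I would compute, for $\xi\in\R^P$, the characteristic function of $(\pi_{PQ})_*\mu_Q$ in the scaled coordinates of $X_P$. Writing $p(q)$ for the unique block containing $q$, the exponent $\sum_{p}\xi_p\sum_{q\leq p}y_q$ rearranges as $\sum_q \xi_{p(q)}y_q$, so integrating against the product measure $\mu_Q=\bigotimes_q\nu_{\abs q}$ factorizes over $q$ and yields $\prod_q\hat\nu_{\abs q}(\xi_{p(q)})=\prod_{p}\prod_{q\leq p}\hat\nu_{\abs q}(\xi_p)$. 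The convolution-semigroup identity $\hat\nu_{\lambda_1}\hat\nu_{\lambda_2}=\hat\nu_{\lambda_1+\lambda_2}$ together with $\abs p=\sum_{q\leq p}\abs q$ collapses the inner product to $\hat\nu_{\abs p}(\xi_p)$, leaving $\prod_p\hat\nu_{\abs p}(\xi_p)$, which is precisely the characteristic function of $\mu_P$. Uniqueness of a finite-dimensional measure given its characteristic function then gives $(\pi_{PQ})_*\mu_Q=\mu_P$.

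For the ``in particular'' clause I would first record that the $\pi_{PQ}$ form a genuine projective system, i.e.\ $\pi_{PR}=\pi_{PQ}\circ\pi_{QR}$ for $R\succcurlyeq Q\succcurlyeq P$: this is the identity $\sum_{q\leq p}\sum_{r\leq q}=\sum_{r\leq p}$, valid because each $r\leq p$ sits below a unique $q\leq p$. Granting this, the compatibility $(\pi_{PQ})_*\mu_Q=\mu_P$ just established is exactly the consistency condition defining a cylinder measure on $X=\projlim X_P$, so the family $\mu=\set{\mu_P}$ descends to such.

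I do not expect a genuine obstacle here; the only points demanding care are the combinatorics of the refinement (ensuring the index blocks are disjoint and exhaust $Q$) and invoking the semigroup property in the form of multiplicativity of characteristic functions, rather than assuming an explicit L\'evy exponent. If one prefers to avoid Fourier analysis, the same argument runs verbatim in probabilistic language: under $\mu_Q$ the $y_q$ are independent with $y_q\sim\nu_{\abs q}$, so each $Z_p:=\sum_{q\leq p}y_q$ has law $\nu_{\abs p}$ by the semigroup property, and the $Z_p$ are mutually independent because they are sums over disjoint blocks of independent variables.
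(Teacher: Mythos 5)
Your argument is correct and is essentially the paper's own proof: the paper also reduces, by independence, to checking that for each block $p=\sum q_i$ the sum $\frac1{\abs p}\sum\abs{q_i}x_{q_i}$ has law $\D\nu_{\abs p}(\abs p x)$, which follows from the convolution semigroup property and $\abs p=\sum\abs{q_i}$ --- exactly the probabilistic phrasing in your final paragraph. Your characteristic-function computation is just a more explicit rendering of the same step, so there is nothing further to reconcile.
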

\begin{proof}
Take an element $p\in P$ and write it as $p=\sum q_i$ with $q_i\in Q$. By independence, it suffices to check, assuming that $x_{q_i}$ has distribution $\D\nu_{\abs{q_i}}(\abs{q_i} x_{q_i})$, that $x_p = \frac1{\abs{p}} \sum \abs{q_i} x_{q_i}$ has distribution $\D\nu_{\abs{p}}(\abs px)$. And this follows from the fact that $\nu_\lambda$ is a convolution semigroup and $\abs p = \sum \abs{q_i}$.
\end{proof}
\begin{rk}
Let $\psi$ be the Lévy characteristic of $\nu_\lambda$\lo, i.e.\ $\hat\nu_\lambda(\xi) = \e^{\lambda\psi(\xi)}$.
One can convince oneself that, formally, the characteristic function of the measure $\mu$ is
\[
\EE{\e^{-\I\int_M\xi(m)x(m)\D m}} = \e^{\int_M \psi(\xi(m))\D m}.
\]
If instead of $X=\projlim X_P$ one takes a nuclear space of distributions on $M$, then $\mu$ could be constructed, as a Radon measure, by applying the Bochner-Minlos theorem to this characteristic function.
\end{rk}

\begin{defn}
Given a function $a\in L^1(X_Q) = L^1(X_Q\lo,\C)$, write $\EE{a}[P]\in L^1(X_P)$ for the conditional expectation of $a(x_Q)$ given $x_P=\pi_{PQ}(x_Q)$ with respect to the measure $\mu_Q$\lo.
A \emph{cylinder density} is a family $a = \set{a_P}$ of integrable functions on $X_P$ which satisfy the martingale condition $\EE{a_Q}[P] = a_P$\lo, so that $a\mu = \set{a_P\mu_P}$ is a (signed) cylinder measure on $X$. Thus, the space of cylinder densities is the (complex) vector space
\[
L^1_\mr{eff}(X) := \projlim L^1(X_P),
\]
where the projection $L^1(X_Q)\rightarrow L^1(X_P)$ for $Q\succcurlyeq P$ is $\EE{\cdot}[P]$. We will write $L^1(X)\subseteq L^1_\mr{eff}(X)$ for the subspace of cylinder densities satisfying
\[
\sup_{P\in\mc P_\mr{geom}} \Set{\Norm{a_P}_{L^1(X_P)}} \leq\infty.
\]
In general, however, this space will be too small to contain all the cylinder densities that we are interested in.
\end{defn}
\begin{rk}
The projective limit above is \emph{algebraic.} It will certainly be interesting to figure out the right topology for it, but we limit ourselves here to develop the purely algebraic aspects of the theory.
\end{rk}

Finally, we specify a partition family $\mc P_\mr{geom}\subseteq\mc P_\mr{meas}$ which seems a good choice in fixed-background situations. We consider projections associated to \emph{(piecewise smooth, regular) cellular structures} on $M$, by which we mean finite, graded partitions
\[
M = \bigcup_{\sigma\in C} \sigma,\quad  C = C^0\cup \cdots\cup C^d
\]
satisfying the following conditions:
\begin{enumerate}
	\item Each so-called \emph{$k$-cell} $\sigma\in C^k$ is homeomorphic to $\R^k$.
    \item The boundary $\partial\sigma$ of a cell $\sigma\in C^k$ belongs to the Boolean algebra generated by $C$, is piecewise smooth and is homeomorphic to $\mb S^{k-1} = \Set{ x\in\R^k | \abs x = 1}$.
\end{enumerate}
Given a cellular structure $C$, we get the partition $P_C = \Set{p_\sigma | \sigma\in C^d}$ where, we recall, $p_\sigma$ is (the equivalence class of) the indicator function of $\sigma\subseteq M$. We let $\mc P_\mr{geom} = \mc P_\mr{geom}(M)$ be the family of all such partitions.
\begin{rk} \label{P geom is directed}
This family is directed, because a smooth manifold admits a unique compatible piecewise linear structure, and a piecewise smooth, regular cellular structure is essentially a choice of piecewise linear chart.\footnote{For a short survey of piecewise linear topology, see~\cite[section 1]{benedetti2016smoothing}. Another brief source of useful information is~\cite[section 1]{forman1995discrete}.} Moreover, in refining a given cellular structure one can restrict oneself to considering (a chain of) single cell bisections~\cite{stallings1967lectures}.
\end{rk}

\subsection{Evaluation observables and their product}

The following rather simple property lies at the heart of our approach.
\begin{prop} \label{E(x_q)}
Given $Q\succcurlyeq P$ and $q\in Q$,
\(
\EE{x_q}[P] = x_p\lo,
\)
where $p\in P$ is uniquely determined by the condition $q\leq p$.
\end{prop}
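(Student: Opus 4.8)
The plan is to reduce the assertion to a single-block computation and then read off the answer from the convolution-semigroup structure. First I would record that, under $\mu_Q$, the variables $\set{\abs{q'}x_{q'}}_{q'\in Q}$ are independent, with $\abs{q'}x_{q'}$ distributed according to $\nu_{\abs{q'}}$. The conditioning $\sigma$-algebra is the one generated by the coordinates of $x_P=\pi_{PQ}(x_Q)$, equivalently by the block sums $S_{p'}:=\abs{p'}x_{p'}=\sum_{q'\le p'}\abs{q'}x_{q'}$ over $p'\in P$. Since $x_q$ is a function of $\abs q x_q$ alone, and $\abs q x_q$ is independent of every $S_{p'}$ with $p'\neq p$ (these involve disjoint families of coordinates), the pair $(x_q,S_p)$ is independent of $\set{S_{p'}}_{p'\neq p}$; hence $\EE{x_q}[P]=\EE{x_q}[S_p]$, and it suffices to prove
\[
\EE{\abs q x_q}[S_p]=\frac{\abs q}{\abs p}\,S_p,
\]
because dividing by $\abs q$ and using $S_p=\abs p x_p$ then gives $\EE{x_q}[P]=x_p$.

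For this single-block identity the cleanest route is through characteristic functions. Writing $S_p=\abs q x_q+R$ with $R=\sum_{q'\le p,\,q'\neq q}\abs{q'}x_{q'}$ independent of $\abs q x_q$ and distributed as $\nu_{\abs p-\abs q}$, I would compute, for every $\xi\in\R$,
\[
\EE{\abs q x_q\,\e^{\I\xi S_p}}=\EE{\abs q x_q\,\e^{\I\xi\abs q x_q}}\;\hat\nu_{\abs p-\abs q}(\xi)=-\I\,\abs q\,\psi'(\xi)\,\e^{\abs p\psi(\xi)},
\]
using $\hat\nu_\lambda(\xi)=\e^{\lambda\psi(\xi)}$ and $\EE{\abs q x_q\,\e^{\I\xi\abs q x_q}}=-\I\,\tfrac{\D}{\D\xi}\hat\nu_{\abs q}(\xi)$. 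The same differentiation gives $\EE{S_p\,\e^{\I\xi S_p}}=-\I\,\abs p\,\psi'(\xi)\,\e^{\abs p\psi(\xi)}$, so both sides of the claimed identity have identical integrals against every $\e^{\I\xi S_p}$. Consequently $\EE{(\abs q x_q-\tfrac{\abs q}{\abs p}S_p)\,\e^{\I\xi S_p}}=0$ for all $\xi$; the finite signed measure that weights the law of $S_p$ by the conditional expectation $\EE{\abs q x_q-\tfrac{\abs q}{\abs p}S_p}[S_p]$ thus has identically vanishing Fourier transform and is therefore null, so this conditional expectation vanishes $\mu_Q$-almost surely. That is the desired identity.

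I expect the only real obstacle to be integrability and the attendant licence to differentiate under the expectation: the step $\EE{\abs q x_q\,\e^{\I\xi\abs q x_q}}=-\I\,\tfrac{\D}{\D\xi}\hat\nu_{\abs q}(\xi)$ requires $\nu_{\abs q}$ to possess a finite first moment, i.e.\ $x_q\in L^1(X_Q)$---which is in any case needed for $\EE{x_q}[P]$ to be defined at all. A purely probabilistic alternative, worth recording as a remark, avoids this Fourier-analytic bookkeeping: when the volumes $\set{\abs{q'}}_{q'\le p}$ are commensurable one uses the semigroup property to split the block into $N$ independent, identically distributed pieces of equal size, applies the elementary symmetry relation $\EE{z_i}[\textstyle\sum_j z_j]=\tfrac1N\sum_j z_j$ valid for i.i.d.\ summands, and collects $\abs{q'}/\abs p$ as the fraction of pieces belonging to $q'$; the incommensurable case then follows by approximation. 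I would nonetheless keep the characteristic-function computation as the main argument, since it dispatches all cases uniformly.
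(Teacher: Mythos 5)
Your argument is correct, and it takes a route that is related to, but more self-contained than, the paper's. The paper simply defers: the statement is read off from the proof of \autoref{cond_exp_of_powers} together with the observation that $R^1(\lambda)=\bigl(\begin{smallmatrix}1&0\\0&1/\lambda\end{smallmatrix}\bigr)$ exists unconditionally, the operative identity being $\hat\nu_{\lambda-1}\hat\nu'=\tfrac1\lambda\hat\nu_\lambda'$. That identity is precisely your computation $\EE{\abs q x_q\,\e^{\I\xi\abs q x_q}}\,\hat\nu_{\abs p-\abs q}(\xi)=\tfrac{\abs q}{\abs p}\EE{S_p\,\e^{\I\xi S_p}}$ viewed on the Fourier side; the difference is that the paper then inverts the transform to get a pointwise identity of densities (\autoref{cond_exp_lem}), which relies on the standing assumption of \autoref{explicit-calc} that $\D\nu_\lambda(s)=\rho_\lambda(s)\D s$ for an invariant reference measure on a semigroup, whereas you stay on the Fourier side and conclude from injectivity of the Fourier transform on finite signed measures. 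This buys you the proposition for an arbitrary convolution semigroup with first moments, with no density hypothesis --- a generality the paper's deferral leaves somewhat implicit --- at the cost of not producing the explicit higher-moment formulas that the $R$-matrix machinery is built to deliver. Your single-block reduction via independence is the same as the paper's ``by independence, we can work locally.'' Two small points of hygiene: the paper's convention is $\hat\nu_\lambda(\xi)=\EE{\e^{-\I\xi x}}$, so your test functions should be $\e^{-\I\xi S_p}$ (immaterial, but worth aligning), and your integrability caveat is exactly the right one --- differentiability of $\hat\nu_{\abs q}$ and the licence to differentiate under the expectation follow from the finite first moment already needed for $\EE{x_q}[P]$ to make sense, together with the non-vanishing of characteristic functions of infinitely divisible laws, which guarantees that $\psi=\log\hat\nu_1$ is a well-defined differentiable branch.
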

\begin{proof}
A more general calculation will be made in \autoref{explicit-calc} under Hypothesis~R. The claim here follows from the proof of \autoref{cond_exp_of_powers} by the fact that $R^1(\lambda)$ exists and equals
\[
\begin{pmatrix} 1& 0\\ 0& 1/\lambda \end{pmatrix}
\]
even if Hypothesis~R does not hold.
\end{proof}

Let $\Lambda \subseteq\Lambda_\mr{meas}(M)$  be the sublattice generated by the projections in $\bigcup\mc P$. We will always think here of $\mc P$ as being $\mc P_\mr{geom}$\lo, but that makes no difference for the general theory.
%i.e.\ the closure of $\bigcup\mc P_\mr{geom}$ for the operations
%\[
%p\wedge q=pq,\quad p\vee q = p + q - pq.
%\]
% The partial order relation on $\Lambda(\mc P)$ is then imposed by $a\leq_\Lambda b$ if $a=a\wedge b$ for $a,b\in \Lambda(\mc P)$, which coincides with set inclusion of the corresponding images, i.e. $a\leq_\Lambda b\Leftrightarrow \mathrm{im}(a)\subseteq \mathrm{im}(b).$ Here, too, do we consider projections as elements of $L^\infty(M)$, so that set inclusion has to be understood modulo sets of measure 0.
% There is also a least element $0=p_{\emptyset}$ and a greatest element $1=p_M$, where $p_A$ stands for the projection onto the subset $A\subseteq M.$ Last but not least each element $a\in \Lambda(\mc P)$ has a complement $a'=1-a$. Thus it turns out that $\Lambda(\mc P)$ is even a Boolean lattice.
$\Lambda$ inherits the order relation $\leq$ and the complementation operation $\neg p=1-p$ of $\Lambda_\mr{meas}$\lo, and is therefore a Boolean lattice too. In order to exploit \autoref{E(x_q)} to define field evaluation observables we need some elements of lattice theory, which we proceed to recall.

\begin{defn}
A \emph{filter} of a lattice $\Lambda$ (actually, just the partial order is required) is a set $\mf f\subseteq\Lambda$ which is:
\begin{enumerate}
	\item Nonempty and proper (i.e.\ not equal to all of $\Lambda$).
    \item Downward directed: given $p,q\in\mf f$, there is some $r\in\mf f$ with $r\leq p\wedge q$. By the next requirement, one can equivalently ask that $p\wedge q\in\mf f$.
    \item Upward saturated: if $q\leq p$ and $q\in\mf f$, then $p\in\mf f$.
\end{enumerate}
An \emph{ultrafilter} is a maximal filter. Equivalently, when $\Lambda$ is a Boolean lattice, an ultrafilter is a filter $\mf m$ such that, for every $p\in\Lambda$, either $p\in\mf m$ or $\neg p\in\mf m$---and it cannot be both, for then $\mf m$ would not be proper.
We will write $\mf M = \mf M(M,\Lambda)$ for the space of ultrafilters of $\Lambda$.
\end{defn}
\begin{prop}
Let $\mf m\in\mf M$ and $P\in \mc P$. The intersection $\mf m\cap P$ contains a unique projection, which will be written $p(\mf m)$.
\end{prop}
\begin{proof}
We show first that $\mf m\cap P$ is nonempty.
%Take some $p\in P$. Either $p\in\mf m$, in which case we are done, or $\neg p\in\mf m$. If the latter holds,
Assume that no element of the partition $P = \set{p_1\lo,\dots,p_n}$ belongs to $\mf m$. By maximality, $\neg p_i = 1-p_i\in\mf m$ and therefore
\[
\prod (1-p_i) = 1 - \sum p_i = 0 \in\mf m,
\]
contradicting properness. It remains to prove that $\mf m\cap P$ is a singleton; but if it had two elements $p_i\neq p_j$ it would also contain their product $p_ip_j=0$, again contradicting properness.
\end{proof}

Fix a projection $p_0\in\Lambda$ in a partition $P_0\in\mc P$.
By \autoref{E(x_q)}, every ultrafilter $\mf m$ of $\Lambda$ containing $p_0$ determines an \emph{ultraviolet completion} of the effective observable $a_{P_0}(x) = x_{p_0}$\loo, namely the cylinder density $\Set{a_P}$ defined by $a_P(x) = x_{p(\mf m)}$\loo. %where $q(\mf m)\in Q$ is defined in \autoref{p(\mf m)}.
%\azul{If I'm understanding correctly, the set of ultrafilters of the projection lattice of $L^\infty(M)$ is just the Gelfand spectrum of $L^\infty(M)$ by a result of de Groote. However, we have here a sublattice encoding geometric information... Wonder what is it that we get.}

\begin{defn}
Given $\mf m\in\mf M$, we write $x(\mf m)\in L^1_\mr{eff}(X)$ for the cylinder density $\set{a_P}$ defined by $a_P(x) = x_{p(\mf m)}$\loo.
\end{defn}
\begin{rk}
It is easy to check that if $\nu_\lambda$ is Gaussian, then $x(\mf m)\nin L^1(X)$. If, on the other hand, $\nu_\lambda$ is supported on the positive reals, then
\[
\Norm{x(\mf m)}_{L^1(X_P)} = \EE{x(\mf m)}
\]
is independent of $P$ and $x(\mf m)\in L^1(X)$.
\end{rk}
\begin{rk}
It would be nice to have a correspondence between evaluation observables and points of $M$. That should be possible, at the price of introducing some extra geometric structure spoiling the invariance of our constructions under the symmetries of $M$. Indeed, instead of considering ``partitions'' of $M$ made of projections in $L^\infty(M)$ (thereby with support defined only up to a set of measure 0), one can take actual partitions, consisting of measurable subsets of $M$ with positive measure which belong to the algebra generated by a piecewise smooth cellular structure on $M$. In that framework, points of $M$ could be put in correspondence with ultrafilters of the Boolean algebra generated by the family of all such partitions. The problem is that this family is not directed (a common refinement of two partitions that differ only by sets of measure 0 must contain sets of measure 0); therefore, the choice of a directed subfamily is introducing extra structure.\footnote{In the case of $M=\R$ one could use, for instance, only partitions by half-open intervals which are closed from the right. This sort of arbitrariness is unavoidable.} This would later be an annoyance in establishing Euclidean invariance, for instance. Thus, we choose to let go of the idea of point evaluations, which at any rate is suspicious in the context of statistical field theory.
\end{rk}

Now take two ultrafilters $\mf m\neq\mf n$ and let $P\in\mc P$ be fine enough to distinguish them, meaning that $p(\mf m)\neq p(\mf n)$. Clearly, refinements $Q\succcurlyeq P$ also have $q(\mf m)\neq q(\mf n)$; therefore, by independence and \autoref{E(x_q)}, the compatibility condition is satisfied by the effective observables $a_Q(x) = x_{q(\mf m)}x_{q(\mf n)}$\loo. Assuming that $\nu_\lambda$ admits moments of all orders, for $Q\preccurlyeq P$ we can define
\[
a_Q(x) = \EE{x_{p(\mf m)}x_{p(\mf n)}}[Q],
\]
thus obtaining a cylinder density which we denote by $x(\mf m)\diamond x(\mf n)$. We extend this product to the case $\mf m=\mf n$ as follows.
\begin{defn}\label{wick-product}
Let $\mf m_1\lo,\dots \mf m_n\in \mf M$. The \emph{Wick product} $x(\mf m_1)\diamond\cdots\diamond x(\mf m_n)$ is defined by
\[
\bigl(x(\mf m_1)\diamond\cdots\diamond x(\mf m_n)\bigr)_P = \EE{x_{q_1}\cdots x_{q_n}}[P]
\]
where $Q\succcurlyeq P$ is fine enough to admit the existence of \emph{pairwise different} $q_i$'s such that $q_i\leq p(\mf m_i)$, for each $i$. %The fact that this is well-defined is a consequence of independence and \autoref{E(x_q)}.
\end{defn}
\begin{prop}\label{prop-wick-product}
The Wick product is well-defined, i.e.\ using the notation above, it does not depend on the choice of $Q$ and $q_i$'s.
\end{prop}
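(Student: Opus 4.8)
The plan is to verify two independences in turn: first, that for a fixed admissible $Q$ the value $\EE{x_{q_1}\cdots x_{q_n}}[P]$ does not depend on the choice of pairwise different cells $q_i\le p(\mf m_i)$; and second, that it does not depend on $Q$ itself. The second will follow painlessly from the first together with an auxiliary \emph{shrinking identity}, so the real work is choice-independence for fixed $Q$. Throughout one uses that $\nu_\lambda$ has moments of all orders, so that every monomial is integrable and the conditional expectations make sense, together with the independence of the $x_q$ across distinct cells of a partition.

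I would first isolate the shrinking identity: if $R\succcurlyeq Q\succcurlyeq P$, the $q_i\in Q$ are pairwise different with $q_i\le p(\mf m_i)$, and $r_i\le q_i$ are cells of $R$, then the $r_i$ are automatically pairwise different with $r_i\le p(\mf m_i)$, and $\EE{x_{r_1}\cdots x_{r_n}}[P]=\EE{x_{q_1}\cdots x_{q_n}}[P]$. To see this I would condition first on $x_Q$: since the $r_i$ lie in distinct, hence independent, cells of $Q$, the conditional expectation factorizes and each factor is $\EE{x_{r_i}}[Q]=x_{q_i}$ by \autoref{E(x_q)}; the tower property $\EE{\cdot}[P]=\EE{\EE{\cdot}[Q]}[P]$ then yields the claim. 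Granting choice-independence, $Q$-independence is then immediate: given two admissible $Q,Q'$, pick a common refinement $R$, which exists and is again admissible since $\mc P_\mr{geom}$ is directed (see \autoref{P geom is directed}); choosing any valid $q_i$ in $Q$ and any $r_i\le q_i$ in $R$, the shrinking identity equates the $Q$-value with an $R$-value, and likewise for $Q'$, while choice-independence makes all $R$-values coincide.

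For the heart of the matter I would note that the monomial depends only on the \emph{set} of cells chosen inside each $P$-cell, and that any two such sets of equal size are joined by a chain of single-cell swaps; it therefore suffices to treat a swap that replaces one cell $q\le p(\mf m_i)=:p$ by another cell $q'\le p$ in the same $P$-cell, keeping the remaining factors $\prod_{j\ne i}x_{q_j}$ fixed. If $\abs q=\abs{q'}$ this is trivial: the reference measure $\mu_R$ is invariant under transposing two equal-sized cells, and such a transposition fixes every component of $x_P$ (within the box $p$ it merely permutes the two equal summands of $\abs p x_p$, and it leaves all other boxes untouched), so it leaves $\EE{\cdot}[P]$ invariant.

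The genuine obstacle is that in general $\abs q\ne\abs{q'}$, which breaks the permutation symmetry outright. I would resolve this by passing to a refinement $R$ that splits $q$ into a subcell $q_a$ and $q'$ into a subcell $q'_a$ of a common area $s\le\min(\abs q,\abs{q'})$, leaving all other cells intact; such bisections into prescribed areas are available within $\mc P_\mr{geom}$ (cf.\ the single-cell bisections invoked in \autoref{P geom is directed}). The shrinking identity then replaces $x_q$ by $x_{q_a}$ and $x_{q'}$ by $x_{q'_a}$ without altering the respective values of $\EE{\cdot}[P]$, and since $q_a$ and $q'_a$ now have equal area the equal-size transposition of the previous paragraph applies, giving $\EE{x_{q_a}\prod_{j\ne i}x_{q_j}}[P]=\EE{x_{q'_a}\prod_{j\ne i}x_{q_j}}[P]$. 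Chaining the three equalities closes the single-swap case, and hence, by connectivity of the swap graph on equal-size subsets, establishes choice-independence and with it the proposition.
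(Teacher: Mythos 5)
Your proof is correct, and its first half---the shrinking identity $\EE{x_{r_1}\cdots x_{r_n}}[P]=\EE{x_{q_1}\cdots x_{q_n}}[P]$ for $r_i\leq q_i$, obtained from the tower property and \autoref{E(x_q)}, followed by the reduction of $Q$-independence to choice-independence via a common refinement---coincides with the paper's. Where you genuinely diverge is in the mechanism for passing between two admissible choices inside a fixed fine partition. The paper reads the shrinking identity in both directions: to move a variable from a cell $s_1$ to a cell $s_2$ it first grows $s_1$ to $s_1\cup s_2$ and then shrinks to $s_2$, and for this it needs the chosen representatives to be surrounded by empty cells and to be joined by chains of cells whose pairwise unions are again cells. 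You instead exploit the exchangeability of the product reference measure: a transposition of two equal-volume cells lying in the same $P$-cell preserves $\mu_R$ and commutes with $\pi_{PR}$, hence leaves $\EE{\cdot}[P]$ invariant, and the unequal-volume case reduces to this by bisecting $q$ and $q'$ into subcells of a common prescribed volume and shrinking onto them. The trade-off is clear: your route dispenses entirely with adjacency and with unions of cells being cells, but it asks of $\mc P_\mr{geom}$ the ability to bisect a cell into pieces of prescribed volume---a mild geometric fact, at about the same level of informality as what the paper's movement argument assumes. Both routes also gloss over the same combinatorial point (ordering the swaps, or movements, so that intermediate configurations remain pairwise distinct), which is harmless once the partition is taken fine enough.
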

\begin{proof}
Assume that there is another family $R\succcurlyeq P$ providing \emph{pairwise different} $r_i$'s such that $r_i\leq p(\mf m_i)$.
We shall whenever necessary, identify a projection $p\in \Lambda_\mr{geom}$ with the cell $\sigma\in CM$ for which $p=p_\sigma$. First let us take an arbitrary $\tilde{q}_i$ obeying $\tilde{q}_i\leq q_i$ and define $\tilde{Q}$ as being a refinement of $Q$, which integrates $\tilde{q}_i$ into the cell complex $Q$. Then we have
\[
\EE{x_{q_1}\cdots x_{\tilde{q}_i}\cdots\ x_{q_n}}[P]=
\EE{\EE{x_{q_1}\cdots x_{\tilde{q}_i}\cdots\ x_{q_n}}[Q]}[P] =\EE{x_{q_1}\cdots x_{q_n}}[P]
\]
because $\EE{x_{q_1}\cdots x_{\tilde{q}_i}\cdots\ x_{q_n}}[Q]=x_{q_1}\cdots\hat{x}_{q_i}\cdots x_{q_n}\EE{x_{\tilde{q}_i}}[Q]=x_{q_1}\cdots x_{q_n}$, where $\hat{\;\cdot\;}$ means that the variable in question is omitted and the last equality is due to \autoref{E(x_q)}.
Since this can be done for each cell, we get
\begin{equation}\label{shrinking}
\EE{x_{\tilde{q}_1}\cdots \ x_{\tilde{q}_n}}[P] =\EE{x_{q_1}\cdots x_{q_n}}[P],
\end{equation}
for every family of $\tilde{q}_i$'s satisfying $\tilde{q}_i\leq q_i$.
Then choose $S$ to be a common refinement of $Q$ and $R$, fine enough to be able to pick in each of the cells $q_i$ and $r_i$ smaller representatives $\tilde{q}_i,\tilde{r}_i\in S$, which are surrounded only by empty cells, i.e. cells not supporting any $x$-variable. In the cell complex $S$ we can then move the $x_{\tilde{q}_i}$'s freely from one cell to any other cell, which is not already occupied by some $x_{\tilde{q}_j}$. More precisely, a movement from cell $s_1$ to a cell $s_2$ can be obtained by applying property (\ref{shrinking}) with $\tilde{q}=s_1,\,q=s_1\cup s_2$ for the move from $s_1$ to $s_1 \cup s_2$ and with $\tilde{q}=s_2,\,q=s_1\cup s_2$ for the move from $s_1 \cup s_2$ to $s_2$. By the movements just constructed we may finally achieve that $\tilde{q}_i =\tilde{r}_i$ for each $i$. Therefore
\[
\EE{x_{q_1}\cdots x_{q_n}}[P]=\EE{x_{\tilde{q}_1}\cdots x_{\tilde{q}_n}}[P]=
\EE{x_{\tilde{r}_1}\cdots x_{\tilde{r}_n}}[P]
=\EE{x_{r_1}\cdots x_{r_n}}[P]. \qedhere
\]
\end{proof}
Note that the Wick product is actually a collection of \emph{effective} Wick products which are compatible, namely
\[
x_{p_1}\diamond\dots\diamond x_{p_n} = (x(\mf m_1)\diamond\dots\diamond x(\mf m_n))_P\Lo,
\]
where $\mf m_i$ is an arbitrary ultraviolet completion of $p_i$\lo.

%\subsection{The observable algebra}

\begin{defn}
Given a scale $P$, let $\mc O_\mr{poly}(X_P)\subseteq L^1_\mr{eff}(X_P)$ be the algebra of effective Wick polynomials, namely the vector space of integrable functions on $X_P$ of the form
\[
x\mapsto \sum_{k=0}^n \sum_{p_1,\dots,p_k\in P} \alpha_{p_1\dots p_k}x_{p_1}\diamond\cdots\diamond x_{p_k} %=: \sum_{k=0}^n \sum_{P^k} \alpha^k x^{\diamond k}
\]
equipped with the linear extension of Wick product.
\end{defn}

% Note that, by Jensen's inequality and independence,
% \[
% \Norm{x_{p_0}\diamond x_{p_1}}_{L^1(X_P)}
% 	= \EE{\vphantom{\hat{\hat{\hat A}}} \Abs{\EE{x_{q_0}x_{q_1}}[P]}}
%     \leq \Norm{x_{q_0}x_{q_1}}_{L^1(X_Q)} = \Norm{x_{q_0}}_{L^1(X_Q)} \Norm{x_{q_1}}_{L^1(X_Q)}
% \]
% where $Q\succcurlyeq P$ and $q_i\leq p_i$\lo, $q_0\neq q_1$\lo.
% \azul{
% By writing $x_p = \frac1{\abs p}\sum_{q\leq p} x_q\abs q$ one can get the perhaps more usable bound
% \[
% \Norm{x_{p}^{\diamond 2}}_{L^1(X_P)} \leq \Norm{x_{p}}_{L^1(X_P)}^2 + \frac1{\abs{p}^2} \sum_{q\leq p} \abs q^2 \Norm{x_q^{\diamond 2}-x_q^2}_{L^1(X_Q)}\LOO.
% \]
% This can perhaps lead to a proof that $\mc O(X_P)$ is a Banach algebra under hypothesis that the Gaussian field satisfies.}

Now, while the effective observable algebra is generated by the $x_p$'s with the Wick product, its ``ultraviolet completion'' does not just consist of (complex) linear combinations of Wick products of field evaluations $x(\mf m)$, because one must allow for the possibility of varying the linear combination with the scale---thereby covering stochastic integrals, too. Indeed, consider a family $\alpha^n=\set{\alpha^n_P | P \in\mc P}$ of tensors
%\[k^P_{p_1\dots p_n}\in \C,\quad p_i\in P\]
$\alpha^n_P = (\alpha_{p_1\dots p_n})\in\C^{P^n}$
satisfying the compatibility condition
\begin{equation} \label{kernel_compatibility_condition}
\alpha_{p_1\dots p_n} = \sum_{q_i\leq p_i} \alpha_{q_1\dots q_n}\LO.
\end{equation}
Let us formally write
\[
\int_{\mf M^n} x^{\diamondtimes n}\D\alpha^n = \int_{\mf m_1,\dots,\mf m_n} x(\mf m_1)\diamond\cdots\diamond x(\mf m_n)\D\alpha^n(\mf m_1\lo,\dots,\mf m_n)
\]
for the family of densities $\set{a_P}$ given by
\[
a_P(x) = \sum_{p_1,\dots,p_n} \alpha_{p_1\dots p_n} x_{p_1}\diamond\cdots\diamond x_{p_n}\Loo.
\]
\begin{prop}
Under the compatibility condition \eqref{kernel_compatibility_condition},
\(
%\int_{\mf m} K(\mf m_1\lo,\dots,\mf m_n)x(\mf m_1)\diamond\cdots\diamond x(\mf m_n)
\int_{\mf M^n} x^{\diamondtimes n}\D\alpha^n
\)
is a cylinder density.
\end{prop}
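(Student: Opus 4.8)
The plan is to verify the martingale condition $\EE{a_Q}[P] = a_P$ for every $Q\succcurlyeq P$, which is precisely what it means for $\set{a_P}$ to lie in $L^1_\mr{eff}(X) = \projlim L^1(X_P)$. Fix such a pair and, using linearity of the conditional expectation together with the finiteness of $Q$, reduce the problem to computing $\EE{x_{q_1}\diamond\cdots\diamond x_{q_n}}[P]$ for a single tuple $(q_1,\dots,q_n)\in Q^n$.

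The central step is to recognise that the effective Wick product $x_{q_1}\diamond\cdots\diamond x_{q_n}$ is nothing but the scale-$Q$ component of the cylinder density $x(\mf n_1)\diamond\cdots\diamond x(\mf n_n)$, where each $\mf n_i\in\mf M$ is an arbitrary ultraviolet completion of $q_i$; this is the identification noted immediately after \autoref{prop-wick-product}. That Wick product is a genuine cylinder density: its martingale property follows from \autoref{wick-product} and the tower property of conditional expectation, since for $Q\succcurlyeq P$ one has $\EE{\EE{x_{q_1}\cdots x_{q_n}}[Q]}[P] = \EE{x_{q_1}\cdots x_{q_n}}[P]$. Consequently, applying $\EE{\cdot}[P]$ merely replaces the scale-$Q$ component by the scale-$P$ component. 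To name that component I would observe that for $q_i\in Q$ there is a unique $p_i\in P$ with $q_i\leq p_i$, and that any ultraviolet completion $\mf n_i$ of $q_i$, being upward saturated, also contains $p_i$, so $p(\mf n_i)=p_i$. Hence $\EE{x_{q_1}\diamond\cdots\diamond x_{q_n}}[P] = x_{p_1}\diamond\cdots\diamond x_{p_n}$, the effective Wick product at scale $P$ with (possibly repeated) arguments $p_i$.

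It then remains to reorganise the sum over $Q^n$ by grouping each tuple $(q_1,\dots,q_n)$ according to the coarse cells $(p_1,\dots,p_n)$ lying above it, the fibre over a fixed $(p_1,\dots,p_n)$ being exactly $\set{(q_1,\dots,q_n) : q_i\leq p_i}$. This yields
\[
\EE{a_Q}[P] = \sum_{p_1,\dots,p_n\in P}\Bigl(\sum_{q_i\leq p_i}\alpha_{q_1\dots q_n}\Bigr)\, x_{p_1}\diamond\cdots\diamond x_{p_n},
\]
and the compatibility condition \eqref{kernel_compatibility_condition} collapses each inner sum to $\alpha_{p_1\dots p_n}$, so that the right-hand side is exactly $a_P$, as required.

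The step I expect to require the most care is the middle one: confirming that the conditional expectation of the scale-$Q$ effective Wick product really does collapse to the scale-$P$ effective Wick product, including the bookkeeping when several distinct fine cells $q_i$ sit above the same coarse cell $p_i$ (producing repeated arguments at scale $P$) and when the original tuple itself has repetitions (genuine Wick powers). Both situations are harmless because the Wick product of \autoref{wick-product} is defined for arbitrary, not necessarily distinct, ultrafilters; the sole substantive input is that it is a cylinder density, everything else being linearity and the regrouping above.
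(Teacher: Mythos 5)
Your proposal is correct and follows essentially the same route as the paper: linearity plus regrouping of the sum over $Q^n$ by the coarse cells above each $q_i$, the identity $\EE{x_{q_1}\diamond\cdots\diamond x_{q_n}}[P]=x_{p_1}\diamond\cdots\diamond x_{p_n}$, and then the compatibility condition to collapse the inner sums. The paper leaves that middle identity implicit, whereas you justify it carefully via ultraviolet completions and the tower property, which is exactly the right supporting argument.
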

\begin{proof}
Indeed, given $P\preccurlyeq Q$ one has
\begin{align*}
\EE{a_Q}[P] &= \sum_{p_1,\dots,p_n}\sum_{q_i\leq p_i} \alpha_{q_1\dots q_n}\EE{ x_{q_1}\diamond\cdots\diamond x_{q_n}}[P] \\
	&= \sum_{p_1,\dots,p_n} \alpha_{p_1\dots p_n} x_{p_1}\diamond\cdots\diamond x_{p_n} = a_P\loo. \qedhere
\end{align*}
\end{proof}

\begin{defn}
Let $\mc O_\mr{poly}(X)$ be the algebra of polynomial cylinder densities (or polynomial \emph{chaos expansions}), namely the vector space of cylinder densities of the form
\[
\sum_{k=0}^n \int_{\mf M^k}x^{\diamondtimes k}\D\alpha^k
\]
for compatible families $\alpha^k=\set{\alpha^k_P}$ of tensors $\alpha^k_P\in\C^{P^k}$. Note that this is indeed an algebra: given a second set of compatible families $\beta^\ell=\set{\beta^\ell_P}$, one has that $\alpha^k\otimes\beta^\ell = \set{\alpha^k_P\otimes \beta^\ell_P}$ satisfies the compatibility condition, too. Therefore, the Wick product
\[
\left( \sum_k \int_{\mf M^k}x^{\diamondtimes k}\D\alpha^k
 \right)\diamond \left( \sum_\ell \int_{\mf M^\ell} x^{\diamondtimes \ell}\D\beta^\ell \right) = \sum_{k,\ell} \int_{\mf M^{k+\ell}} x^{\diamondtimes(k+\ell)} \D(\alpha^k\otimes\beta^\ell)
\]
is well-defined.
\end{defn}
\begin{ex}
Given a bounded operator $A:L^2(M)\rightarrow L^2(M)$, it is straightforward to check that the coefficients
\[
\alpha_{p_1p_2} = \Langle p_1, Ap_2\Rangle,\quad p_1\lo,p_2\in\Lambda
\]
satisfy the compatibility condition \eqref{kernel_compatibility_condition}, and therefore define a polynomial observable $a\in\mc O_\mr{poly}(X)$.
\end{ex}
%\azul{Several questions arise, among which a particularily important one (having the Araki-Woods representation theorem in mind) seems to be whether this algebra contains all factorizable cylinder densities.  A perhaps related question would be whether, under Hypothesis R, the corresponding subalgebra of $\C[X_P]$ (the algebra of polynomials on the $x_p$'s, for $p\in P$) is actually all of $\C[X_P]$.}

\subsection{The $\mc S$ transform and Wick calculus}

Next, we study the Wick product via Fourier transform, much in the spirit of classical white noise analysis~\cite{kondratiev1996wick, janson1997gaussian}. %(from which we take the names of the $\mc T$ and $\mc S$ transforms).
This will enable us to find an enlargement $\mc O(X)\supseteq \mc O_\mr{poly}(X)$ of the algebra of local polynomial observables which supports a much more flexible functional calculus.

Identify $X_P^*\cong X_P$ using the pairing
\[
\xi x = \sum_{p\in P} \xi_px_p\abs p
\]
and define, given $a_P\in L^1(X_P)$,
\(
(\mc Ta_P)(\xi) = \EE{\e^{-\I\xi x} a_P}.
\)
Now take a cylinder density $a = \Set{a_P}\in L^1_\mr{eff}(X)$. We want to compare $\mc Ta_P$ and $\mc Ta_Q$\Lo. In order to do so, consider the inclusion $\iota_{QP} = \pi_{PQ}^*: X_P^*\rightarrow X_Q^*$\LO, explicitly given by
\[
(\iota_{QP}\xi)_q = \xi_p\lo,
\]
where $p\geq q$.
\begin{prop}
The map
\[
\mc T: a\in L^1_\mr{eff}(X) \mapsto \EE{\e^{-\I\xi x}a} = \Set{\mc Ta_P} \in C(X^*) := \projlim C(X_P^*)
\]
is well-defined. In particular, $\mc T$ is well-defined on $\mc O_\mr{poly}(X)$.
\end{prop}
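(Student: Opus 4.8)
The plan is to read off what membership in $\projlim C(X_P^*)$ actually requires and to check the two conditions in turn: that each $\mc Ta_P$ genuinely lies in $C(X_P^*)$, and that the family $\Set{\mc Ta_P}$ is compatible with the connecting maps of the projective system. Continuity is the easy half. Since $P$ is finite, $X_P^*$ is finite-dimensional, and for fixed $a_P\in L^1(X_P)$ the integrand $\e^{-\I\xi x}a_P(x)$ has modulus $\abs{a_P(x)}\in L^1(X_P)$ independently of $\xi$; dominated convergence then makes $\xi\mapsto\EE{\e^{-\I\xi x}a_P}$ continuous, so $\mc Ta_P\in C(X_P^*)$.

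The compatibility is the real content. For $Q\succcurlyeq P$ the connecting morphism $C(X_Q^*)\to C(X_P^*)$ is precomposition with $\iota_{QP}=\pi_{PQ}^*$, so I must establish $(\mc Ta_Q)\circ\iota_{QP}=\mc Ta_P$. I would start from the purely algebraic adjoint identity
\[
(\iota_{QP}\xi)\,x_Q = \xi\,(\pi_{PQ}x_Q),\qquad \xi\in X_P^*,\ x_Q\in X_Q,
\]
which falls out of the explicit formulas for $\iota_{QP}$ and $\pi_{PQ}$ together with $\abs p=\sum_{q\leq p}\abs q$. Its payoff is that $\e^{-\I(\iota_{QP}\xi)x_Q}=\e^{-\I\xi(\pi_{PQ}x_Q)}$ depends on $x_Q$ only through $\pi_{PQ}x_Q$, i.e.\ is measurable for the sub-$\sigma$-algebra generated by $\pi_{PQ}$.

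That measurability is precisely the hook for the martingale condition defining $L^1_\mr{eff}(X)$. Because the exponential factor is $\sigma(\pi_{PQ})$-measurable, I can pull it outside the conditional expectation,
\[
\EE{\e^{-\I\xi(\pi_{PQ}x_Q)}a_Q} = \EE{\e^{-\I\xi(\pi_{PQ}x_Q)}\,\EE{a_Q}[P]},
\]
and then replace $\EE{a_Q}[P]$ by $a_P$. The integrand is now a function of $\pi_{PQ}x_Q$ alone, so the pushforward $(\pi_{PQ})_*\mu_Q=\mu_P$ established earlier collapses the $X_Q$-integral to $\EE{\e^{-\I\xi x}a_P}=(\mc Ta_P)(\xi)$, which is the compatibility sought. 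The final clause is then immediate, since $\mc O_\mr{poly}(X)\subseteq L^1_\mr{eff}(X)$. I expect the only genuinely delicate point to be the bookkeeping of arrow directions: confirming that the connecting map of $\projlim C(X_P^*)$ is pullback along the inclusion $\iota_{QP}$ dual to the averaging projection $\pi_{PQ}$, so that the adjoint identity aligns the exponential with $\pi_{PQ}$ and the three analytic ingredients—conditional expectation, martingale condition, pushforward—combine in the correct direction.
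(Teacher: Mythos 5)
Your argument is correct and is essentially the paper's own proof: the paper's one-line chain $(\mc Ta_Q)(\iota_{QP}\xi)=\EE{\EE{\e^{-\I\xi(\pi_{PQ}x)}a_Q}[P]}=\EE{\e^{-\I\xi x}a_P}$ is exactly your combination of the adjoint identity, the $\sigma(\pi_{PQ})$-measurability of the exponential, and the martingale condition $\EE{a_Q}[P]=a_P$. The only addition is your explicit continuity check via dominated convergence, which the paper leaves implicit.
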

\begin{proof}
Indeed, given $\xi\in X_P^*$\loo,
\[%begin{align*}
(\mc Ta_Q)(\iota_{QP}\xi)
	= \EE{\EE{ \e^{-\I\xi(\pi_{PQ}x)}a_Q }[P] }
    = \EE{ \e^{-\I\xi x} a_P }
    = (\mc Ta_P)(\xi). \qedhere
\] %end{align*}
\end{proof}
\begin{rk}
A priori, $\mc Ta$ might well be unbounded: we only have boundedness on each $X_P^*$ and the bounds might not be uniform.
\end{rk}

\begin{prop}
Let $\hat\mu(\xi) = \EE{\e^{-\I\xi x}}$. One has that
\[
\mc T\bigl(x(\mf m)^{\diamond n}\bigr) = \hat\mu^{1-n} \mc T\bigl(x(\mf m)\bigr)^n.
\]
Thus, defining
\(
\mc Sa = \hat\mu^{-1}\mc Ta,
\)
one has $\mc S(a\diamond b) = \mc S(a)\mc S(b)$, for all $a,b\in\mc O_\mr{poly}(X)$.
\end{prop}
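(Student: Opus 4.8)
The plan is to reduce the identity to a single, sufficiently fine scale, where the Wick monomial becomes an ordinary monomial and the reference measure factorizes completely over cells. Fix $\mf m\in\mf M$ and write $p_0 = p(\mf m)$ for its cell at scale $P$. Since $\mc Ta$ is the compatible family $\set{\mc Ta_P}$ and $\hat\mu = \mc T(1)$ is itself such a family, the assertion is an equality of functions on each $X_P^*$, so I fix $\xi\in X_P^*$ and compute $\mc T(x(\mf m)^{\diamond n})_P(\xi) = \EE{\e^{-\I\xi x}\,(x(\mf m)^{\diamond n})_P}$. By \autoref{wick-product} one may pick $Q\succcurlyeq P$ with pairwise different cells $q_1,\dots,q_n\leq p_0$, so that $(x(\mf m)^{\diamond n})_P = \EE{x_{q_1}\cdots x_{q_n}}[P]$. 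Because $\e^{-\I\xi x}$ depends only on $x_P = \pi_{PQ}x_Q$ and $\xi(\pi_{PQ}x) = (\iota_{QP}\xi)x$, the defining property of conditional expectation lets me lift the whole integral to scale $Q$, giving $\mc T(x(\mf m)^{\diamond n})_P(\xi) = \EE[\mu_Q]{\e^{-\I\eta x}\,x_{q_1}\cdots x_{q_n}}$ with $\eta := \iota_{QP}\xi$, which is constant on the cells of $P$, i.e.\ $\eta_q = \xi_p$ whenever $q\leq p$.

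Now the product structure $\D\mu_Q = \prod_{q\in Q}\D\nu_{\abs q}(\abs q x_q)$ together with $\e^{-\I\eta x} = \prod_q \e^{-\I\eta_q\abs q x_q}$ makes the $Q$-expectation split over cells. An empty cell $q\notin\set{q_1,\dots,q_n}$ contributes $\EE{\e^{-\I\eta_q\abs q x_q}} = \hat\nu_{\abs q}(\eta_q) = \e^{\abs q\psi(\eta_q)}$, whereas an occupied cell $q_i$ contributes, after the substitution $s = \abs{q_i}x_{q_i}$ and differentiation under the integral sign (legitimate since $\nu_\lambda$ has all moments, as already assumed for the Wick product),
\[
\EE{\e^{-\I\eta_{q_i}\abs{q_i}x_{q_i}}x_{q_i}} = \frac1{\abs{q_i}}\,\I\frac{\D}{\D\zeta}\hat\nu_{\abs{q_i}}(\zeta)\Big|_{\zeta=\eta_{q_i}} = \I\psi'(\eta_{q_i})\,\e^{\abs{q_i}\psi(\eta_{q_i})}.
\]
The decisive point is that the ``one-point factor'' $\I\psi'(\eta_{q_i})$ carries no dependence on the mass $\abs{q_i}$, while its companion $\e^{\abs{q_i}\psi(\eta_{q_i})}$ has exactly the shape of the empty-cell factors. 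Hence all the $\hat\nu$-factors recombine over every cell of $Q$ into $\prod_{q\in Q}\e^{\abs q\psi(\eta_q)} = \e^{\sum_p\psi(\xi_p)\sum_{q\leq p}\abs q} = \e^{\sum_p\abs p\psi(\xi_p)} = \hat\mu_P(\xi)$, using $\sum_{q\leq p}\abs q = \abs p$. What survives is $\mc T(x(\mf m)^{\diamond n})_P(\xi) = \hat\mu_P(\xi)\prod_{i=1}^n\I\psi'(\eta_{q_i}) = \hat\mu_P(\xi)\,(\I\psi'(\xi_{p_0}))^n$, since each $\eta_{q_i} = \xi_{p_0}$. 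The case $n=1$ gives $\mc T(x(\mf m))_P = \I\psi'(\xi_{p_0})\,\hat\mu_P$, and dividing yields $\mc T(x(\mf m)^{\diamond n}) = \hat\mu^{1-n}\mc T(x(\mf m))^n$.

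The very same factorization, run with arbitrary $\mf m_1,\dots,\mf m_n$ in place of a single $\mf m$ (one still picks pairwise different $q_i\leq p(\mf m_i)$, possible even when the $p(\mf m_i)$ coincide), yields $\mc S(x(\mf m_1)\diamond\cdots\diamond x(\mf m_n))_P(\xi) = \prod_{i=1}^n\I\psi'(\xi_{p(\mf m_i)})$ for $\mc S = \hat\mu^{-1}\mc T$. Multiplicativity is then immediate: writing $a = \sum_k\int x^{\diamondtimes k}\D\alpha^k$ and $b = \sum_\ell\int x^{\diamondtimes\ell}\D\beta^\ell$, one has $a\diamond b = \sum_{k,\ell}\int x^{\diamondtimes(k+\ell)}\D(\alpha^k\otimes\beta^\ell)$, and applying the monomial formula termwise the sums over the first $k$ and the last $\ell$ indices factor, so that $\mc S(a\diamond b)_P = \mc S(a)_P\,\mc S(b)_P$ at every scale $P$; since $C(X^*) = \projlim C(X_P^*)$ is an algebra under scalewise pointwise product, this is the claim. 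I expect the main obstacle to be the first paragraph's bookkeeping: transporting the pairing correctly under $\pi_{PQ}$/$\iota_{QP}$ and justifying the lift of the conditional expectation to scale $Q$, together with verifying that the $\hat\nu$-factors reassemble into precisely $\hat\mu_P(\xi)$ regardless of which cells are occupied---this last fact being exactly what promotes $\mc S$ from being multiplicative on powers to being an algebra homomorphism.
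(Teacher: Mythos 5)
Your proof is correct and rests on the same mechanism as the paper's: the cell-wise factorization of $\EE{\e^{-\I\xi x}\,x_{q_1}\cdots x_{q_n}}$ under the product measure $\mu_Q$, which the paper carries out for $n=2$ by rearranging the one-cell factors into ratios of $\mc T(x(\mf m))$'s and $\hat\mu$'s and then extends ``by independence'' to general monomials. The only difference is that you push the computation one step further to the closed form $\mc S\bigl(x(\mf m_1)\diamond\cdots\diamond x(\mf m_n)\bigr)_P(\xi)=\prod_i \I\psi'\bigl(\xi_{p(\mf m_i)}\bigr)$, which makes the cancellation of the cell volumes $\abs{q_i}$ explicit and handles all $n$ and all monomials uniformly in one pass --- a slightly more informative route to the same conclusion.
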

\begin{proof}
Assuming the first claim, one sees that
\[
\mc S\bigl(x(\mf m)^{\diamond(n+m)}\bigr) = \mc S\bigl( x(\mf m)^{\diamond n}\bigr) \mc S\bigl( x(\mf m)^{\diamond m}\bigr).
\]
The same holds for the Wick product of two general monomials $x(\mf m_1)^{\diamond n_1}\diamond\cdots\diamond x(\mf m_k)^{\diamond n_k}$ by independence, and by bilinearity for the Wick product of fully general Wick polynomials.

As for the first claim, let us do the case $n=2$. By definition,
\[%begin{align*}
\EE{\e^{-\I\xi x} x_{p_0}^{\diamond 2}} = \EE{\e^{-\I\xi x} x_{q_0}x_{q_1}}
\]%end{align*}
where $q_0\neq q_1$ are both contained in $p_0$\lo.
%Now, the left hand side equals
%\[
%\frac{\EE{\e^{-\I\xi x}}}{\EE{\e^{-\I\xi_{p_0}x_{p_0}\abs{p_0}}}} \EE{\e^{-\I\xi_{p_0} x_{p_0}\abs{p_0}} x_{p_0}^{\diamond 2}},
%\]
%whereas
Now, the right hand side equals
\[
\EE{\e^{-\I\sum_{q\neq q_1}\xi_q x_q\abs q}x_{q_0}} \EE{\e^{-\I\xi_{q_1}x_{q_1} \abs{q_1}} x_{q_1}} = \frac{\EE{\e^{-\I\xi x}x_{q_0}}}{\EE{\e^{-\I\xi_{q_1} x_{q_1}\abs{q_1}}}} \frac{\EE{\e^{-\I\xi x} x_{q_1}}}{\EE{\e^{-\I\sum_{q\neq q_1} \xi_qx_q\abs q}}}
\]
and the desired result follows. The case $n>2$ is done in the same way.
\end{proof}

% \begin{rk}
% The $\mc S$-transform %understood as a map $\mc O_\mr{poly}(X)\rightarrow C(X^*)$,
% satisfies
% \[
% \mc S(f\diamond g) = \mc S(f)\mc S(g),\quad f,g\in\mc O_\mr{poly}(X).
% \]
% \end{rk}

\begin{defn}\label{fourier-stieltjes}
Write $B(X_P^*)$ for the Fourier-Stieltjes algebra of $X_P^*$\lo, i.e.\ the  subalgebra of $C(X_P^*)$ formed by Fourier transforms of  complex Radon measures on $X_P$\loo. Let
\[ %begin{align*}
\quad B_\mu(X_P^*)
	= \Bigl\{\varphi\in C(X_P^*)\,\Bigm|\, \varphi\hat\mu_P^\lambda\in B(X_P^*) \text{ and } (\varphi\hat\mu_P^\lambda)\raisebox{0.2ex}{$\check{\,}$}\ll \mu_P\lo,\ \text{ for all } \lambda>0  %\;\text{with}\\
%&\quad \text{integrable Radon-Nikodym density} %This follows from the first condition with n=1, which requires that (\varphi\hat\mu_P)\check{} is a finite measure.
\Bigr\}.
\] %end{align*}
Note that the inverse of the $\mc S$-transform is well-defined on $B_\mu(X_P^*)$ and define
\[
\mc O(X_P) = \mc S^{-1}B_\mu(X_P^*) \subseteq L^1(X_P).
\]
Finally, let $\mc O(X) \subseteq L^1_\mr{eff}(X)$ be the space of cylinder densities with $a_P\in\mc O(X_P)$.
\end{defn}
\begin{rk}
Under mild conditions on $\nu$, $B_\mu(X_P^*)$ is a subalgebra of $C(X_P^*)$  and
$\mc O_\mr{poly}(X)\subseteq \mc O(X)$, see \autoref{observable algebra}.
\end{rk}

%\azul{Wick calculus:}
\begin{defn}
Let $a_1\lo,\dots,a_n \in\mc O(X)$, say $a_i = \set{a_{iP}}$ with $a_{iP}\in\mc O(X_P)$. Write $\varphi_{iP} = \mc Sa_{iP}\in B_\mu(X_P^*)$, and suppose that $f\in C(\R^n)$ is such that
\[
f(\varphi_{1P}\lo,\dots,\varphi_{nP}):\xi\in X_P^*\mapsto f\bigl( \varphi_{1P}(\xi),\dots,\varphi_{nP}(\xi) \bigr)
\]
belongs to $B_\mu(X_P^*)$, for all $P\in\mc P$. Note that the compatibility condition
\[
f(\varphi_{1P}\lo,\dots,\varphi_{nP})(\xi) = f(\varphi_{1Q}\lo,\dots,\varphi_{nQ})(\iota_{QP}\xi), \quad \xi\in X_P^*
\]
holds trivially.
Thus, we can define
\[
f^\diamond(a_1\lo,\dots,a_n) = \mc S^{-1} f(\mc Sa_1\lo,\dots,\mc Sa_n),
\]
where $f(\mc Sa_1\lo,\dots,\mc Sa_n) = \Set{f(\mc Sa_{1P}\lo,\dots,\mc Sa_{nP})}\in C(X^*)$.
\end{defn}
\begin{rk}
In particular, the Wick product is well-defined on $\mc O(X)$ and satisfies
\(
\mc S(a\diamond b) = \mc S(a)\mc S(b).
\)
\end{rk}

In order to check whether a function $f\in C(\R^n)$ satisfies the conditions required for the definition of $f^\diamond(a_1\lo,\dots,a_n)$ it is useful to know that the Banach-Stieltjes algebra of $X_P^*$ is spanned by the space of positive definite functions. More precisely, one has the following characterization of cylinder probability measures.

\begin{prop}
Let $\varphi\in C(X^*)$ be a \emph{positive definite} function, in the sense that
\[
\sum \bar c_kc_\ell\varphi_P(\xi_\ell-\xi_k) \geq 0
\]
for all $\xi_1\lo,\dots,\xi_n\in X_P^*$ and $c_1\lo,\dots,c_n\in\C$, such that $\varphi_P(0)=1$. Then, $\varphi$ is the characteristic function of a cylinder probability measure $\set{\mu_P}$ on $X$.
%Under which conditions it will have a cylinder density with respect to $\mu$?
\end{prop}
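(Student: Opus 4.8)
The plan is to reduce everything to the classical Bochner theorem applied separately at each scale, and then to glue the resulting marginals using the uniqueness of the characteristic function. Fix a partition $P\in\mc P$. Since $P$ is finite, $X_P = L^\infty(M;P)$ is a finite-dimensional real vector space, and so is its dual $X_P^*$ (identified with $X_P$ through the pairing $\xi x = \sum_{p\in P}\xi_px_p\abs p$). By hypothesis the component $\varphi_P\in C(X_P^*)$ is continuous, positive definite, and normalized by $\varphi_P(0)=1$. Bochner's theorem then furnishes a unique Radon probability measure $\mu_P$ on $X_P$ whose characteristic function is $\varphi_P$, i.e.\ $\varphi_P(\xi) = \int_{X_P}\e^{-\I\xi x}\,\D\mu_P(x)$. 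This produces all the candidate marginals $\set{\mu_P}$.

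It remains to verify the consistency condition $(\pi_{PQ})_*\mu_Q = \mu_P$ for every $Q\succcurlyeq P$, which is exactly what makes $\set{\mu_P}$ a cylinder measure. I would check this at the level of characteristic functions. Fixing $\xi\in X_P^*$ and using the defining relation $\iota_{QP} = \pi_{PQ}^*$, that is $\xi(\pi_{PQ}y) = (\iota_{QP}\xi)(y)$, one computes
\[
\int_{X_P}\e^{-\I\xi x}\,\D\bigl((\pi_{PQ})_*\mu_Q\bigr)(x) = \int_{X_Q}\e^{-\I\xi(\pi_{PQ}y)}\,\D\mu_Q(y) = \varphi_Q(\iota_{QP}\xi).
\]
The projective-limit compatibility built into $\varphi\in C(X^*) = \projlim C(X_P^*)$ says precisely that $\varphi_Q(\iota_{QP}\xi) = \varphi_P(\xi)$, which is the characteristic function of $\mu_P$. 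Hence $(\pi_{PQ})_*\mu_Q$ and $\mu_P$ share the same characteristic function, and by the uniqueness part of Bochner's theorem (injectivity of the Fourier--Stieltjes transform on a finite-dimensional space) they coincide.

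The two ingredients doing the real work are thus Bochner's theorem, for existence at each scale, and uniqueness of the characteristic function, for consistency across scales; everything else is bookkeeping with the duality $\iota_{QP}=\pi_{PQ}^*$. The point demanding genuine care—and the step I expect to be the main obstacle—is the regularity requirement in Bochner's theorem: one must ensure that positive definiteness together with \emph{mere} continuity of $\varphi_P$ already delivers a countably additive probability measure rather than only a finitely additive one. In the present finite-dimensional setting this is exactly the content of the classical theorem, so no extra argument is needed; had the $X_P$ been infinite-dimensional one would instead have to appeal to a Bochner--Minlos type statement and control tightness, but that difficulty does not arise precisely because each $P$ is a finite partition.
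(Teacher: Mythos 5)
Your proposal is correct and follows essentially the same route as the paper: Bochner's theorem applied scale by scale, then the consistency $(\pi_{PQ})_*\mu_Q=\mu_P$ deduced from $\varphi_Q\circ\iota_{QP}=\varphi_P$. The only cosmetic difference is that the paper justifies the final identification of measures by density of trigonometric polynomials in $\Cb{X_P}$, whereas you invoke injectivity of the Fourier--Stieltjes transform directly; these are the same fact.
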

\begin{proof}
Indeed, by Bochner's theorem, $\varphi_P$ is the characteristic function of a probability measure $\mu_P$ on $X_P$. The compatibility condition on trigonometric polynomials $a_P(x) = \sum c_k\e^{-\I\xi_k x}$ is directly verified:
\begin{align*}
\EE{a_P}
	&= \EE{\sum c_k\e^{-\I\xi_k x}}
	= \sum c_k\varphi_P(\xi_k) = \sum c_k\varphi_Q(\iota_{QP}\xi_k) \\
	&= \EE{\sum c_k\e^{-\I\xi_k\pi_{PQ}x}}
    = \EE{\pi_{PQ}^*a_P}.
\end{align*}
We conclude by density of the trigonometric polynomials in $\Cb{X_P}$.
\end{proof}

\subsection{Fields whose Wick polynomials are polynomials}\label{explicit-calc}

Let $\C[X_P]$ be the algebra of polynomials on the variables $\set{x_p | p\in P}$. In order to do explicit calculations it is desirable that the reference measure $\mu$ is such that the expected value of a polynomial $f\in\C[X_Q]$ of degree $n$ is a polynomial $\EE{f}[P]\in\C[X_P]$ of degree $n$, too. This can be ensured by making the following assumption.
\begin{hypo}
For each $\lambda\geq 0$ and $k\in\N$, there are constants $\set{c_{k\ell}(\lambda) | \ell\leq k+1}$ such that
\[\label{hypothesis-R}
\hat \nu_\lambda^{(k)}\hat \nu_\lambda' = \sum_{\ell=0}^{k+1} c_{k\ell}(\lambda) \hat\nu_\lambda^{(\ell)}\hat\nu_\lambda\lo.
\]
Under this assumption, it can be seen~\cite{vargas2017wick} that the vector spaces generated by the sets $\Set{\hat\nu_\lambda^{(\ell)} | \ell\leq k}$ and $\Set{\hat\nu^{(\ell)} \hat\nu_{\lambda-1} | \ell\leq k}$ are equal, so that we can define the matrix $R^k(\lambda)\in\mathrm{GL}(k+1,\C)$ by the equation
\begin{equation}\label{R-matrix}
\hat\nu_{\lambda-1} \left(\begin{matrix} \hat\nu \\ \hat\nu' \\ \vdots \\ \hat\nu^{(k)} \end{matrix}\right)
	= R^k(\lambda) \left(\begin{matrix} \hat\nu_\lambda \\ \hat\nu_\lambda' \\ \vdots \\ \hat\nu_\lambda^{(k)} \end{matrix}\right).
\end{equation}
In particular, $R^1(\lambda) = \begin{pmatrix} 1 &0 \\ 0 &1/\lambda \end{pmatrix}$, but further terms will depend on $\nu$.
Observe that we can safely drop the superscript $k$, because $R^k(\lambda)$ is a lower-triangular matrix obtained from $R^{k+1}(\lambda)$ by simply erasing the last line and column---and the same will apply to their inverses.
We also define $R(\lambda_1\lo,\lambda_2) = R(\lambda_1)^{-1}R(\lambda_2)$, so that
\begin{equation}\label{R-matrix-2}
\underbrace{\hat\nu_{1-\lambda_1} \hat\nu_{\lambda_2-1}}_{\hat\nu_{\lambda_2-\lambda_1}} \begin{pmatrix} \hat\nu_{\lambda_1} \\ \hat\nu_{\lambda_1}' \\ \hat\nu_{\lambda_1}'' \\ \vdots  \end{pmatrix} = R(\lambda_1\lo,\lambda_2) \begin{pmatrix} \hat\nu_{\lambda_2} \\ \hat\nu_{\lambda_2}' \\ \hat\nu_{\lambda_2}'' \\ \vdots \end{pmatrix}
\end{equation}
and $R(\lambda_1\lo,\lambda_2)R(\lambda_2\lo,\lambda_3) = R(\lambda_1\lo,\lambda_3)$.
\end{hypo}

In order to simplify calculations in what follows, suppose that the support of $\nu_\lambda$ is a fixed additive semigroup $S\subseteq \R$ (in the examples that we will consider, $S=\R,\R_+\lo,\Z$ or $\N$)
and that $\D\nu_\lambda(s) = \rho_\lambda(s)\D s$ where $\D s$ is an invariant measure (either Lebesgue or counting measure, whichever is appropriate). Thus,
\[
\int_\R f(x)\D\nu_\lambda(\lambda x) = \int_S f(s/\lambda)\rho_\lambda(s)\D s.
\]
\begin{lem}\label{cond_exp_lem}
Assume Hypothesis R. Convening that the entries of $R$ are indexed starting from 0, one has that
\begin{align*}
& \int t_2^k \mspace{1mu} \rho_{\mu_2}(t_2) \rho_{\lambda-\mu_1-\mu_2}(\lambda s - t_1-t_2 )\D t_2 \\
&\quad = \rho_\lambda(\lambda s-t_1) \sum_{\ell=0}^k (-\I)^k R_{k\ell}(\mu_2,\lambda-\mu_1)\bigl(\I(\lambda s-t_1)\bigr)^\ell.
\end{align*}
In particular,
\( %begin{align*}
\int t^k \rho_\mu(t) \rho_{\lambda-\mu}(\lambda s - t )\D t = \rho_\lambda(\lambda s) \sum_{\ell=0}^k (-\I)^k R_{k\ell}(\mu,\lambda)(\I \lambda s)^\ell.
\) %end{align*}
\end{lem}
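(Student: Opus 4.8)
The plan is to read the left-hand side as a convolution and then pass to the Fourier side, where Hypothesis~R has been tailored to act. Writing $u = \lambda s - t_1$ and splitting the combined scale as $\mu_2 + (\lambda-\mu_1-\mu_2) = \lambda-\mu_1$, the integral is exactly the convolution of $t_2^k\,\rho_{\mu_2}(t_2)$ with $\rho_{\lambda-\mu_1-\mu_2}$, evaluated at $u$. This is the only place where the semigroup structure of $\set{\nu_\lambda}$ enters: the two factors convolve to a distribution of scale $\lambda-\mu_1$, so the density that must eventually reappear is $\rho_{\lambda-\mu_1}$ (which is $\rho_\lambda$ in the particular case $\mu_1=0$).

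First I would Fourier-transform in $u$. Under the convention $\hat\nu_\lambda(\xi) = \int e^{-\I\xi t}\rho_\lambda(t)\,\D t$ fixed earlier, multiplication by $t^k$ becomes differentiation, so $t^k\rho_{\mu_2}$ transforms to $\I^k\hat\nu_{\mu_2}^{(k)}$, while convolution becomes a pointwise product. Hence the transform of the left-hand side is $\I^k\,\hat\nu_{\mu_2}^{(k)}\,\hat\nu_{\lambda-\mu_1-\mu_2}$. The key step is then to recognise this product as an instance of \eqref{R-matrix-2}: taking $\lambda_1 = \mu_2$ and $\lambda_2 = \lambda-\mu_1$, so that $\lambda_2-\lambda_1 = \lambda-\mu_1-\mu_2$ matches the scale of the second factor, Hypothesis~R yields $\hat\nu_{\lambda-\mu_1-\mu_2}\,\hat\nu_{\mu_2}^{(k)} = \sum_{\ell=0}^k R_{k\ell}(\mu_2,\lambda-\mu_1)\,\hat\nu_{\lambda-\mu_1}^{(\ell)}$. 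Substituting and inverting the transform term by term---using that $\hat\nu_{\lambda-\mu_1}^{(\ell)}$ is the transform of $(-\I u)^\ell\rho_{\lambda-\mu_1}(u)$---collects everything into $\rho_{\lambda-\mu_1}(u)$ times a polynomial of degree $k$ in $u$ whose coefficients are the $R_{k\ell}(\mu_2,\lambda-\mu_1)$ dressed by the appropriate powers of $\pm\I$. Setting $\mu_1 = 0$ and $t_1 = 0$ specialises this to the displayed particular case.

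The genuine work is analytic rather than algebraic, since the $R$-matrix identity is handed to us by Hypothesis~R. The Fourier and inverse-Fourier manipulations must be justified: I would use that $\nu_\lambda$ admits moments of all orders, which makes $t^k\rho_{\mu_2}$ integrable and $\hat\nu_\lambda$ smooth, so that $\hat\nu_{\mu_2}^{(k)}$ exists and the product $\hat\nu_{\mu_2}^{(k)}\hat\nu_{\lambda-\mu_1-\mu_2}$ lies in a range where termwise inversion is legitimate. The point demanding the most care is running this argument uniformly over the two admissible base measures: in the continuous (Lebesgue) case the above is the ordinary Fourier transform on $\R$, whereas in the discrete (counting) case---e.g.\ Poisson---``Fourier transform'' means the characteristic function on the circle and the convolution theorem must be invoked for sums, with the same relations $\sum_n n^k\rho(n)e^{-\I\xi n} = \I^k\hat\nu^{(k)}(\xi)$ and inversion producing $(-\I n)^\ell\rho(n)$. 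Once the $R$-matrix identity is placed on the Fourier side, the remaining bookkeeping of the factors of $\I$ and the powers of $u$, which fixes the precise polynomial coefficients in the statement, is routine.
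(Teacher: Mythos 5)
Your proof is correct and is precisely the paper's own argument: read the integral as a convolution, pass to the Fourier side where $t_2^k\rho_{\mu_2}$ becomes a derivative of $\hat\nu_{\mu_2}$, invoke the $R$-matrix relation \eqref{R-matrix-2} with $\lambda_1=\mu_2$ and $\lambda_2=\lambda-\mu_1$, and invert term by term; you merely supply the analytic justifications (existence of moments, Lebesgue versus counting measure) that the paper's two-line computation leaves tacit. One caveat worth recording: your careful bookkeeping yields $\rho_{\lambda-\mu_1}(\lambda s-t_1)$ and the sign pattern $\I^k\bigl(-\I(\lambda s-t_1)\bigr)^\ell$, whereas the displayed statement has $\rho_\lambda(\lambda s-t_1)$ and $(-\I)^k\bigl(\I(\lambda s-t_1)\bigr)^\ell$ --- the first discrepancy vanishes only in the particular case $\mu_1=0$ and the second amounts to a factor $(-1)^{k-\ell}$, and since your version is the one that follows literally from \eqref{R-matrix-2} as printed, the conventions should be reconciled before the displayed constants are fed into \autoref{cond_exp_of_powers}.
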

\begin{proof}
Indeed, $\int t_2^k \mspace{1mu}\rho_{\mu_2}(y_2) \rho_{\lambda-\mu_1-\mu_2}(\lambda s - t_1-t_2 )\D t_2$ equals
\begin{align*}
&\left((-\I)^k\hat\nu_{\mu_2}^{(k)} \hat\nu_{\lambda-\mu_1-\mu_2} \right)\raisebox{0.8ex}{$\check{}$}\, (\lambda s-t_1) \\
&\quad = (-\I)^k \sum_{\ell=0}^k R_{k\ell}(\mu_2\lo,\lambda-\mu_1) \bigl(\I (\lambda x-y_1)\bigr)^\ell \rho_{\lambda}(\lambda s-t_1). \qedhere
\end{align*}
\end{proof}

\begin{theo} \label{cond_exp_of_powers}
Assuming Hypothesis R,
let $q_1\lo,q_2\in Q$ and $P\preccurlyeq Q$ be such that $p:=q_1+q_2\in P$. One has that
\begin{align*}
&\EE{x_{q_1}^{k_1}x_{q_2}^{k_2}}[P] \\ &\quad= %\sum_{\ell=0}^k \bigl(\I\abs q\bigr)^{-k} \bigl(\I\abs p\bigr)^{\ell} R_{k\ell}\bigl(\abs q,\abs p\bigr) x_p^{\ell}\LO.
\left(\frac{\abs{q_1}}{\abs{q_2}}\right)^{k_2} \sum_{j=0}^{k_2} (-1)^j {k_2\choose j} \sum_{\ell=0}^{k_1+j} \bigl(\I\abs{q_1}\bigr)^{-k_1-k_2} \bigl(\I\abs p\bigr)^{k_2+\ell-j} R_{k_1+j,\ell}\bigl( \abs{q_1},\abs p \bigr) x_p^{k_2+\ell-j}\Lo.
\end{align*}
\end{theo}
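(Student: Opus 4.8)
The plan is to reduce the two-variable conditional expectation to the single-variable integral of \autoref{cond_exp_lem} and then carry out the binomial bookkeeping. First I would record the distributions involved. Writing $t_i = \abs{q_i}x_{q_i}$, the reference measure $\mu_Q$ makes $t_1, t_2$ independent with laws $\nu_{\abs{q_1}}, \nu_{\abs{q_2}}$, while $\pi_{PQ}$ gives $\abs p x_p = \abs{q_1}x_{q_1}+\abs{q_2}x_{q_2} = t_1+t_2$, which has law $\nu_{\abs p}$ by the semigroup property (consistently with \autoref{E(x_q)}). Because $x_{q_1}^{k_1}x_{q_2}^{k_2}$ involves only the cell $p=q_1+q_2$ and distinct cells of $P$ are independent, conditioning on the whole vector $x_P$ collapses to conditioning on $x_p$ alone; so it suffices to compute $\EE{x_{q_1}^{k_1}x_{q_2}^{k_2}}[x_p]$ within this one cell.

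Conditioned on $t_1+t_2 = \abs p x_p$ I would eliminate $x_{q_2}$ by substituting $t_2 = \abs p x_p - t_1$ and expanding, so that
\[
x_{q_1}^{k_1}x_{q_2}^{k_2} = \frac{t_1^{k_1}(\abs p x_p - t_1)^{k_2}}{\abs{q_1}^{k_1}\abs{q_2}^{k_2}} = \frac{1}{\abs{q_1}^{k_1}\abs{q_2}^{k_2}}\sum_{j=0}^{k_2}(-1)^j\binom{k_2}{j}(\abs p x_p)^{k_2-j}\,t_1^{\,k_1+j}.
\]
Taking conditional expectation term by term reduces everything to evaluating $\EE{t_1^{k_1+j}}[t_1+t_2=\abs p x_p]$ for each $j$.

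Here the lemma does the analytic work: the conditional density of $t_1$ given the sum is $\rho_{\abs{q_1}}(t_1)\rho_{\abs{q_2}}(\abs p x_p - t_1)/\rho_{\abs p}(\abs p x_p)$, so the ``in particular'' case of \autoref{cond_exp_lem} with $\mu=\abs{q_1}$, $\lambda=\abs p$ gives
\[
\EE{t_1^{k_1+j}}[t_1+t_2=\abs p x_p] = \sum_{\ell=0}^{k_1+j}(-\I)^{k_1+j}R_{k_1+j,\ell}(\abs{q_1},\abs p)(\I\abs p x_p)^\ell,
\]
a polynomial in $x_p$ of degree $k_1+j$. Substituting this back produces a double sum over $j$ and $\ell$ carrying the expected monomial $x_p^{k_2+\ell-j}$.

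The only real obstacle is then the final matching of scalar prefactors, which I would verify group by group rather than all at once. The $x_p$- and $\abs p$-exponents are both $(k_2-j)+\ell = k_2+\ell-j$; the measure factors agree since $(\abs{q_1}/\abs{q_2})^{k_2}\abs{q_1}^{-k_1-k_2} = \abs{q_1}^{-k_1}\abs{q_2}^{-k_2}$, which is exactly the constant emerging from the substitution; and the phase agrees because $(-1)^j(-\I)^{k_1+j}\I^\ell = (-1)^j\I^{\,\ell-k_1-j}$, which is precisely the phase $\I^{-k_1-k_2}\I^{k_2+\ell-j}$ carried by the claimed coefficient $(\I\abs{q_1})^{-k_1-k_2}(\I\abs p)^{k_2+\ell-j}$. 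Reassembling these three groups of factors turns the substituted expression into the stated closed form.
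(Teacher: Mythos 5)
Your proposal is correct and follows essentially the same route as the paper's proof: reduce to the single cell $p$ by independence, pass to the variables $t=\abs{q_1}x_{q_1}$ and $s=\abs p x_p$, expand $(s-t)^{k_2}$ binomially, and apply \autoref{cond_exp_lem} termwise before matching prefactors. The bookkeeping of the powers of $\I$, $\abs{q_i}$ and $\abs p$ checks out exactly as in the paper.
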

\begin{proof}
By independence, we can work locally, i.e.\ on the projection lattice of $pL^\infty(M)$; thus, we assume that $p=1$ and $P=\set{1}$. Now, write $s=\abs px_p\in S$ and $t = \abs{q_1}x_{q_1}\in S$, so that $\abs{q_2}x_{q_2} = s-t$. We have that %\rojo{how do we make sense of this in the counting measure case?}
\[%begin{align*}
\D\nu_{\lambda_1}\bigl(\abs{q_1}x_{q_1}\bigr) \D\nu_{\lambda_2}\bigl(\abs{q_2}x_{q_2}\bigr)
%	&= \rho_{\lambda_1}(s-t) \rho_{\lambda_2}(t) (\D s  - \D t)\D t \\
  = \rho_{\lambda_1}(t) \rho_{\lambda_2}(s-t) \D s\D t.
\]%end{align*}
In terms of these variables %writing $\tilde\rho_\lambda(s)$ for $\lambda\rho_\lambda(\lambda s)$
and applying \autoref{cond_exp_lem},
$\EE{x_{q_1}^{k_1}x_{q_2}^{k_2}}[P]$  equals
\begin{align*}
&\frac1{\rho_{\abs p}(s)} \int_S \left(\frac{t}{\abs{q_1}}\right)^{k_1} \left( \frac{s-t}{\abs{q_2}} \right)^{k_2} \rho_{\abs{q_1}}(t) \rho_{\abs{q_2}}(s-t) \D t \\
%  &\qquad= \frac1{\rho_{\abs p}(s)} \int_S \left(\frac{t}{\abs{q_1}}\right)^{k_1} \left( \frac{2s - t}{\abs{q_2}} \right)^{k_2} \rho_{\abs{q_1}}(t) \rho_{\abs{q_2}}(2s - t)\D t \\
&\quad= \frac1{\abs{q_1}^{k_1}\abs{q_2}^{k_2}} \sum_{j=0}^{k_2}{k_2\choose j} \bigl(\abs px_p\bigr)^{k_2-j} (-1)^j \sum_{\ell=0}^{k_1+j} \bigl(-\I\bigr)^{k_1+j} R_{k_1+j,\ell}\bigl(\abs{q_1}\lo,\abs p\bigr)\bigl(\I\abs px_p\bigr)^\ell \\
    &\quad= \left(\frac{\abs{q_1}}{\abs{q_2}}\right)^{k_2} \sum_{j=0}^{k_2} (-1)^j  {k_2\choose j} \sum_{\ell=0}^{k_1+j} \bigl(\I\abs{q_1}\bigr)^{-k_1-k_2} \bigl(\I\abs p\bigr)^{k_2+\ell-j} R_{k_1+j,\ell}\bigl( \abs{q_1},\abs p \bigr) x_p^{k_2+\ell-j}\loo,
\end{align*}
as claimed.
\end{proof}
\begin{rk}
Iterating, one can obtain explicit formulas for $\EE{x_{q_1}^{k_1}\cdots x_{q_n}^{k_n}}[P]$ for arbitrary $P\preccurlyeq Q$.
\end{rk}

\section{Wick polynomial calculations}

In this section we want to see by means of concrete examples what the Wick product, as introduced in Definition \ref{wick-product}, amounts to. The calculations will make it evident that for Gamma noise this Wick product is the same as multiplicative renormalization, whereas for Poisson and Gauss noises it encodes an additive renormalization and the Wick products themselves are given by appropriately scaled falling factorials and Hermite polynomials, respectively.

\subsection{$\Gamma$ noise}

The calculation of expected values of monomials for a $\Gamma$ reference measure, i.e.\
\[
\hat\rho(\xi) = (1+\I\xi)^{-1} %\quad \azul{(\text{sign convention})}
\]
were computed in~\cite{vargas2017wick}. We recall the results here, in order to emphasize that renormalization can also be multiplicative, as opposed to purely additive.

The $\Gamma$ field is particularily simple. It satisfies Hypothesis R and the $R$ matrix turns out to be diagonal:
\[
R(\lambda) = \begin{pmatrix}
	1 & & & \\
     &\lambda & & \\
     & &\frac{\lambda^{(2)}}{2!} & \\
     & & &\ddots
\end{pmatrix}
\]
where $\lambda^{(n)}$ is the rising factorial
\[
\lambda^{(n)} = \lambda(\lambda+1)\cdots(\lambda + n - 1).
\]
This enables one to do explicit calculations directly, and one finds that
\[
\EE{x_{q_1}^{k_1}\cdots x_{q_n}^{k_n}}[P] = \frac{\abs p^k}{\abs p^{(k)}}\left( \prod_{i=1}^k \frac{\abs{q_i}^{(k_i)}}{\abs{q_i}^{k_i}} \right) x_p^k\loo,\quad k=k_1+\cdots+k_n
\lo,
\]
which diverges as $Q\succcurlyeq P$ gets finer (so that each $\abs{q_i}$ becomes vanishingly small) as soon as some $k_i>1$.
However, it is clear that upon defining $x_{q,\mr{ren}}^k := C_q x_q^k$ with $C_q = \frac{\abs q^k}{\abs q^{(k)}}$ one has
\[
\EE{x_{q,\mr{ren}}^k}[P] = x_{p,\mr{ren}}^k\Lo.
\]
Thus, the factor $C_q$ is taking care of the divergence, while at the same time turning the family $\Set{x_{p(\mf m),\mr{ren}}^k}$ into a cylinder density. In other words, $\Set{x_{p(\mf m),\mr{ren}}^k}$ can play the role of the Wick power $x(\mf m)^{\diamond k}$. From this point of view one is actually recovering the result of applying \autoref{wick-product}, namely
\[
\bigl(x(\mf m)^{\diamond k}\bigr)_P = \EE{x_{q_1}\cdots x_{q_k}}[P] = x_{p(\mf m),\mr{ren}}^k
\]
where $q_1\lo,\dots, q_k\in Q\succcurlyeq P$ are pairwise different projections with $q_i\leq p$.

\subsection{Poisson noise} \label{Poisson}

Let $\hat\rho(\xi)=\e^{\alpha(\e^{-\I\xi}-1)}$ be the Fourier transform of the Poisson mass distribution $\rho(s)=\e^{-\alpha}\alpha^s/s!,\,s\in\N,\,\alpha\in \R_{>0}$\lo.
As before, $\hat\rho_\lambda:=\hat\rho^\lambda$ will denote the corresponding semigroup. By a standard induction argument one can show that
\[
\hat\rho_\lambda^{(k)} = (-\I)^k\sum_{\ell=0}^k {k\brace\ell} \left(\alpha\lambda\e^{-\I\xi}\right)^\ell \hat\rho_\lambda\lo,\quad k\in \N,
\]
where ${k\brace\ell}$ denotes Stirling numbers of the second kind.
Hypothesis R can be verified easily. First, calculate
\[
\hat\rho_\lambda^{(k)}\hat{\rho}'_\lambda=(-\I)^{k+1}\sum_{m=0}^{k+1}\stsecond{k}{m-1}(\alpha\lambda)^m \e^{-\I m\xi}\hat{\rho}^2_\lambda
\]
and
\[\sum_{\ell=0}^{k+1}c_{k\ell}(\lambda)\hat{\rho}^{(\ell)}_\lambda\hat{\rho}_\lambda =
\sum_{m=0}^{k+1}\left(\sum_{\ell=m}^{k+1}c_{k\ell}(\lambda)\I^{-\ell}\stsecond{\ell}{m}(\alpha\lambda)^m\right)\e^{-\I m\xi}\hat{\rho}^2_\lambda\LO.
\]
Then, comparing like coefficients for $\e^{-\I m\xi}$, gives the system of equations
\begin{equation}
\stsecond{k}{m-1}=\sum_{\ell=m}^{k+1}c_{k\ell}(\lambda)\I^{k-\ell+1}\stsecond{\ell}{m},\quad 0\leq m\leq k+1,
\end{equation}
which has to be solved in terms of the unknowns $c_{k\ell}(\lambda),\,0\leq \ell\leq k+1$.
But the square matrix $A=\left(\I^{k-\ell+1}\stsecond{\ell}{m}\right)$, $0\leq m$, $\ell\leq k+1$, has full rank, because it is an upper triangular matrix with $\mathrm{diag}(A)=\left(\I^{k-m+1}\right)_{0\leq m\leq k+1}$\LOO. The system is therefore solved by a unique vector of $c_{k\ell}(\lambda)$'s.

Let us compute the $R$ matrix.
Since $R(\lambda,\mu)$ is a two-parameter semigroup, we can as well focus on its generator.
Recall also that $R(\lambda)=R(1,\lambda)$ is defined by $u(1) = R(\lambda)u(\lambda)$ where
\begin{equation} \label{u(lambda)}
u(\lambda) = \begin{pmatrix} \hat\rho \\ (\hat\rho^\lambda)'\hat\rho^{1-\lambda}\\ \vdots \\ (\hat\rho^\lambda)^{(k)}\hat\rho^{1-\lambda} \end{pmatrix}.
\end{equation}
We want to find out the differential equation that $u$ obeys.

\begin{lem} \label{Stirling sum identity}
One has that
\[
\ell{k\brace \ell} = \sum_{j=\ell}^k {k\choose j-1} (-1)^{k-j} {j\brace \ell}.
\]
\end{lem}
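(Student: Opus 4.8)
The plan is to prove this as an identity of sequences in $k$, with $\ell$ held fixed, via exponential generating functions. The relevant input is the classical formula
\[
E_\ell(x) := \sum_{k\geq\ell}{k\brace\ell}\frac{x^k}{k!} = \frac{(\e^x-1)^\ell}{\ell!}.
\]
The first step is to recognize the right-hand side as a Cauchy product. Writing ${k\choose j-1} = \frac{k!}{(j-1)!\,(k-j+1)!}$ and setting $m=k-j+1\geq1$, the $k$-th coefficient becomes
\[
\frac1{k!}\sum_{j=\ell}^k {k\choose j-1}(-1)^{k-j}{j\brace\ell} = \sum_{j}\frac{{j\brace\ell}}{(j-1)!}\cdot\frac{(-1)^{m-1}}{m!},
\]
which is precisely the coefficient of $x^k$ in the product of
\[
f(x) = \sum_{j\geq\ell}{j\brace\ell}\frac{x^{j-1}}{(j-1)!}, \qquad g(x) = \sum_{m\geq1}\frac{(-1)^{m-1}}{m!}\,x^m.
\]

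The second step is to evaluate these two series in closed form. On one hand $f = E_\ell'$, so that $f(x) = \frac{(\e^x-1)^{\ell-1}\e^x}{(\ell-1)!}$; on the other hand $g(x) = 1-\e^{-x}$. Multiplying, the factor $\e^x(1-\e^{-x}) = \e^x-1$ recombines with $(\e^x-1)^{\ell-1}$ to give
\[
f(x)g(x) = \frac{(\e^x-1)^\ell}{(\ell-1)!} = \ell\,\frac{(\e^x-1)^\ell}{\ell!} = \ell\,E_\ell(x).
\]
Comparing the coefficient of $x^k/k!$ on the two ends yields $\sum_{j=\ell}^k{k\choose j-1}(-1)^{k-j}{j\brace\ell} = \ell{k\brace\ell}$, which is the claim. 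Note that the summation bounds take care of themselves: ${j\brace\ell}=0$ for $j<\ell$ handles the lower limit through $f$, while the constraint $m\geq1$ coming from $g$ forces $j\leq k$ at the upper end.

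The only delicate point is the bookkeeping of the index shift concealed in ${k\choose j-1}$. It is exactly this shift that turns $f$ into the derivative $E_\ell'$ and forces its companion factor to be $g(x)=1-\e^{-x}$ rather than $\e^x-1$; getting both the sign and the shift right is what makes the two copies of $(\e^x-1)$ merge into the clean power $(\e^x-1)^\ell$. Everything else is a manipulation of formal power series (equivalently, of entire functions), so term-by-term coefficient comparison is legitimate and no convergence issue arises. A brief hand check of the boundary cases $k=\ell$ and $k=\ell+1$ serves as a sanity test, and an alternative, purely inductive proof using the recurrence ${j+1\brace\ell} = \ell{j\brace\ell}+{j\brace\ell-1}$ is available should a generating-function-free argument be preferred.
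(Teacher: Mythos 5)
Your proof is correct, but it takes a genuinely different route from the paper's. The paper proves the identity combinatorially: it reads $\ell{k\brace\ell}$ as the number of partitions of $\set{1,\dots,k}$ into $\ell$ blocks with one block designated, and obtains the right-hand side by inclusion--exclusion over which elements lie in the designated block (the argument is reproduced from a MathOverflow answer). You instead fix $\ell$, recognize the sum as the Cauchy product of $E_\ell'(x)=\frac{(\e^x-1)^{\ell-1}\e^x}{(\ell-1)!}$ with $1-\e^{-x}$, and observe that the product collapses to $\ell E_\ell(x)$; extracting the coefficient of $x^k/k!$ gives the claim. I checked the index bookkeeping: with $m=k-j+1$ the factor $\frac{1}{k!}{k\choose j-1}=\frac{1}{(j-1)!\,m!}$, the sign $(-1)^{k-j}=(-1)^{m-1}$, and the constraints $j\ge\ell$, $m\ge1$ reproduce exactly the stated summation range, and the cases $(k,\ell)=(3,2),(4,2)$ confirm the identity numerically. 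What your approach buys is a short, self-contained, purely formal verification that requires no combinatorial interpretation and no external reference, and whose mechanism (differentiate the Stirling EGF, multiply by $1-\e^{-x}$) suggests how to manufacture further identities of the same shape; what the paper's approach buys is a transparent counting explanation of \emph{why} the alternating sum telescopes to $\ell{k\brace\ell}$. Either proof would serve the paper's purpose in \autoref{generator-poisson} equally well.
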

\begin{proof}
We found this identity by working out the first few cases and verified it using Manuel Kauers' Mathematica package\footnote{\url{http://www.kauers.de/software.html}.} ``Stirling''~\cite{kauers2007summation}. %which gives a recurrence relation from which a proof follows easily.
Later, we posted it as a question on MathOverflow and got two nice answers. We reproduce~\cite{287808} here for convenience, but see also~\cite{287807}.

The identity can be interpreted as an instance of inclusion-exclusion. The left hand side counts the number of ways of partitioning $S=\set{1,2,\dots,k}$ into $\ell$ parts and then picking one of the parts as the designated one.
Let $A_i$ denote the set of partitions of $S$ into $\ell$ parts where the designated part contains $i$. It is plain to see that the left hand side is counting $\Abs{ A_1\cup A_2\cup\cdots\cup A_k }$. For the right hand side notice that
\[
\Abs{ A_{i_1}\cap A_{i_2}\cap\cdots\cap A_{i_r} } = {k-r-1\brace \ell}.
\]
So by inclusion-exclusion we get
\[
\Abs{A_1\cup A_2\cup\cdots\cup A_k} = \sum_{r=1}^{k-\ell+1} (-1)^{r-1} {k\choose r} {k-r+1\brace\ell}
\]
and reindexing by $j=k-r+1$ gives the desired identity.
\end{proof}

\begin{prop}\label{generator-poisson}
If $u=u(\lambda)$ is defined by \eqref{u(lambda)}, then $\frac{\D u}{\D\lambda} = \frac1\lambda Au$, where
\[
A = (A_{kj}),\quad A_{kj} = \left\{ \begin{aligned} & \I^{k-j} {k\choose j-1}& &j\leq k, \\ &0& &\text{otherwise.} \end{aligned}\right.
\]
Thus, $u(\lambda) = \e^{\log(\lambda)A}u(1)$.
\end{prop}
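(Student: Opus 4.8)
The plan is to compute each component of $u(\lambda)$ explicitly from the closed form for $\hat\rho_\lambda^{(k)}$ established above, differentiate in $\lambda$, and then recognize the result as $\frac1\lambda Au$ by invoking \autoref{Stirling sum identity}. First I would record that, since $\hat\rho_\lambda = \hat\rho^\lambda$, the factor $\hat\rho_\lambda\hat\rho^{1-\lambda}$ collapses to $\hat\rho$, so that the $k$-th entry (indexing from $0$) becomes
\[
u_k(\lambda) = (\hat\rho^\lambda)^{(k)}\hat\rho^{1-\lambda} = (-\I)^k\hat\rho\sum_{\ell=0}^k {k\brace\ell}(\lambda w)^\ell,\qquad w := \alpha\e^{-\I\xi},
\]
where crucially $w$ does not depend on $\lambda$. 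Differentiating term by term and extracting a factor $1/\lambda$ gives
\[
\frac{\D u_k}{\D\lambda} = \frac1\lambda(-\I)^k\hat\rho\sum_{\ell=0}^k \ell{k\brace\ell}(\lambda w)^\ell.
\]

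The heart of the argument is to substitute the identity $\ell{k\brace\ell} = \sum_{j=\ell}^k {k\choose j-1}(-1)^{k-j}{j\brace\ell}$ from \autoref{Stirling sum identity} and exchange the order of the two summations, so that $j$ runs from $0$ to $k$ and $\ell$ from $0$ to $j$ (the constraint being $\ell\leq j\leq k$). The inner sum $\sum_{\ell=0}^j {j\brace\ell}(\lambda w)^\ell$ is then exactly $(-\I)^{-j}\hat\rho^{-1}u_j(\lambda)$, by the very same closed form applied at index $j$. What remains is coefficient bookkeeping: the prefactors combine as $(-\I)^k(-1)^{k-j}(-\I)^{-j} = \I^{k-j}$, since $(-1)(-\I)=\I$, leaving
\[
\frac{\D u_k}{\D\lambda} = \frac1\lambda\sum_{j=0}^k \I^{k-j}{k\choose j-1}u_j(\lambda) = \frac1\lambda\sum_{j=0}^k A_{kj}u_j(\lambda),
\]
which is the claimed matrix equation. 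The $j=0$ terms drop out automatically because $\binom{k}{-1}=0$, in agreement with the stated form of $A$.

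Finally, to pass from the differential equation to the exponential formula, I would substitute $s=\log\lambda$; since $A$ is a constant matrix this turns $\frac{\D u}{\D\lambda} = \frac1\lambda Au$ into the constant-coefficient linear system $\frac{\D u}{\D s} = Au$, whose unique solution with $u|_{s=0}=u(1)$ is $u = \e^{sA}u(1)$, that is, $u(\lambda) = \e^{\log(\lambda)A}u(1)$. I expect the only genuinely delicate point to be the interchange of summation order together with the sign-and-power simplification $(-1)^{k-j}(-\I)^{k-j}=\I^{k-j}$; everything else is direct substitution, the combinatorial content having already been isolated in \autoref{Stirling sum identity}.
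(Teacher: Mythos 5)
Your proof is correct, and its second half (substituting \autoref{Stirling sum identity}, exchanging the order of summation, and collecting the prefactors into $\I^{k-j}\binom{k}{j-1}$) coincides exactly with the paper's. Where you genuinely diverge is in how you reach the intermediate expression
\[
\frac{\D u_k}{\D\lambda} = \frac1\lambda(-\I)^k\hat\rho\sum_{\ell=0}^k \ell{k\brace\ell}(\alpha\lambda\e^{-\I\xi})^\ell.
\]
The paper differentiates the product $(\hat\rho^\lambda)^{(k)}\hat\rho^{1-\lambda}$ first, obtaining $\bigl(\log(\hat\rho)\hat\rho^\lambda\bigr)^{(k)}\hat\rho^{1-\lambda} - (\hat\rho^\lambda)^{(k)}\log(\hat\rho)\hat\rho^{1-\lambda}$, and then must expand both terms using $\log\hat\rho = \alpha(\e^{-\I\xi}-1)$ and watch the $(\hat\rho^\lambda)^{(k+1)}$ and $-\alpha(\hat\rho^\lambda)^{(k)}$ contributions cancel between them before the difference ${k+1\brace\ell}-\ell{k\brace\ell}$ collapses to $-\ell{k\brace\ell}$ net. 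You instead substitute the closed form $\hat\rho_\lambda^{(k)} = (-\I)^k\sum_\ell{k\brace\ell}(\alpha\lambda\e^{-\I\xi})^\ell\hat\rho_\lambda$ \emph{before} differentiating, so that $u_k(\lambda)$ becomes an explicit polynomial in $\lambda$ with $\lambda$-independent coefficients, and the derivative is a one-line term-by-term computation. This buys a shorter and less error-prone derivation, at the cost of leaning entirely on the closed form stated earlier in the section (which the paper only asserts by ``a standard induction argument''); the paper's route, by contrast, exposes the structural reason the generator exists, namely that $\frac{\D}{\D\lambda}$ acts as the commutator-like difference of multiplication by $\log\hat\rho$ inside and outside the $k$-th derivative. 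Your concluding reduction of $\frac{\D u}{\D\lambda}=\frac1\lambda Au$ to a constant-coefficient system via $s=\log\lambda$ is also correct and slightly more explicit than the paper, which simply states the exponential formula.
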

\begin{proof}
We compute
\begin{equation} \label{derivative}
\frac\D{\D\lambda} (\hat\rho^\lambda)^{(k)} \hat\rho^{1-\lambda} = \left( \log(\hat\rho) \hat\rho^\lambda \right)^{(k)} \hat\rho^{1-\lambda} - (\hat\rho^\lambda)^{(k)} \log(\hat\rho)\hat\rho^{1-\lambda}.
\end{equation}
Since
\(
\log\hat\rho = \alpha(\e^{-\I\xi}-1)
\)
and $(\hat\rho^\lambda)' = -\I\lambda \alpha\e^{-\I\xi} \hat\rho$, we get
\[
\left( \log(\hat\rho)\rho^\lambda \right)^{(k)} = \left( \frac{\I}{\lambda}(\hat\rho^\lambda)' - \alpha\hat\rho^\lambda \right)^{(k)} = \frac{\I}{\lambda}(\hat\rho^\lambda)^{(k+1)} - \alpha(\hat\rho^\lambda)^{(k)}.
\]
On the other hand,
\begin{align*}
\log(\hat\rho) (\hat\rho^\lambda)^{(k)}
	&= \alpha(\e^{-\I\xi}-1)(-\I)^k\sum_{\ell=0}^k {k\brace\ell} (\alpha\lambda\e^{-\I\xi})^\ell\hat\rho^\lambda \\
	&= \frac{\I}{\lambda}(-\I)^{k+1} \sum_{\ell=0}^k {k\brace\ell} (\alpha\lambda\e^{-\I\xi})^{\ell+1} \hat\rho^\lambda - \alpha(\rho^\lambda)^{(k)} \\
	&= \frac{\I}{\lambda}(-\I)^{k+1}\sum_{\ell=1}^{k+1} {k\brace\ell-1} (\alpha\lambda\e^{-\I\xi})^\ell \hat\rho^\lambda - \alpha(\rho^\lambda)^{(k)} \\
	&= \frac{\I}{\lambda}(-\I)^{k+1}\sum_{\ell=0}^{k+1} \left({k+1\brace\ell} - \ell{k\brace\ell} \right) (\alpha\lambda\e^{-\I\xi})^\ell \hat\rho^\lambda - \alpha(\rho^\lambda)^{(k)}
\end{align*}
and therefore, using \autoref{Stirling sum identity},
\begin{align*}
\frac\D{\D\lambda} (\hat\rho^\lambda)^{(k)} \hat\rho^{1-\lambda}
	&= \frac1{\lambda} (-\I)^{k} \sum_{\ell=0}^{k} \ell{k\brace\ell} (\alpha\lambda\e^{-\I\xi})^\ell \hat\rho \\
	&= \frac1{\lambda} (-\I)^{k} \sum_{\ell=0}^{k}  \sum_{j=\ell}^k {k\choose j-1} (-1)^{k-j} {j\brace \ell} (\alpha\lambda\e^{-\I\xi})^\ell \hat\rho \\
	&= \frac1{\lambda}  \sum_{j=0}^{k} (-\I)^{k-j}  {k\choose j-1} (-1)^{k-j} (-\I)^j \sum_{\ell=0}^j {j\brace \ell} (\alpha\lambda\e^{-\I\xi})^\ell \hat\rho \\
	&= \frac1{\lambda}  \sum_{j=0}^{k} (-\I)^{k-j}  {k\choose j-1} (-1)^{k-j} (\hat\rho^\lambda)^{(j)}\hat\rho^{1-\lambda}. \qedhere
\end{align*}
\end{proof}
\begin{lem}
\label{lemma-poisson}
The generator $A$ of Proposition \ref{generator-poisson} is diagonalizable with $A=UDU^{-1}$, where $D=\mathrm{diag}(0,\ldots,n)$ and $U\equiv (U_{kj})$ is given by $U_{kj}=\I^{k-n} {k \brace j}$, $n\in \mathbb{N}$ and $0\leq k,j \leq n$.
\end{lem}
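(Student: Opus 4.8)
The plan is to diagonalize $A$ by exhibiting its eigenvectors explicitly, the point being that the eigenvector equation is, after clearing powers of $\I$, exactly the Stirling identity of \autoref{Stirling sum identity}. I would begin by recording the triangular structure of the matrix $A$ from \autoref{generator-poisson}: since $A_{kj}=\I^{k-j}\binom{k}{j-1}$ vanishes unless $j\le k$, the matrix is lower triangular, and its diagonal entries are $A_{kk}=\binom{k}{k-1}=k$. Hence its spectrum is $\{0,1,\dots,n\}$; the eigenvalues are distinct, so $A$ is automatically diagonalizable and its diagonal form is $D=\mathrm{diag}(0,\dots,n)$, as claimed. This already reduces the statement to checking that the columns of the proposed $U$ are eigenvectors and that $U$ is invertible.

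The crux is the eigen-equation $AU=UD$, which I would verify column by column: the $j$-th column $u^{(j)}$ must satisfy $Au^{(j)}=j\,u^{(j)}$. Writing $(AU)_{kj}=\sum_{m}\I^{k-m}\binom{k}{m-1}U_{mj}$ and inserting the Stirling weights of $U$, the power of $\I$ supplied by $A$ combines with the one carried by $U_{mj}$; through $\I^{-2m}=(-1)^m$ these recombine into the alternating sign $(-1)^{k-m}$, and after cancelling a common power of $\I$ the identity to be proved is precisely
\[
\sum_{m=j}^{k}\binom{k}{m-1}(-1)^{k-m}{m\brace j}=j{k\brace j},
\]
i.e.\ \autoref{Stirling sum identity} with $\ell=j$. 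Thus all of the genuine content is carried by that Stirling identity, which is already in hand; what remains is bookkeeping of the $\I$-powers and of the summation range, the latter being pinned down by the constraints $m\ge j$ (from ${m\brace j}$) and $m\le k$ (from $\binom{k}{m-1}$ together with the support of $A$), so that the range matches that of the identity exactly.

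Finally I would check invertibility of $U$: since ${k\brace j}=0$ for $j>k$ and ${k\brace k}=1$, the matrix $U$ is again lower triangular with nonzero diagonal (a unit times a power of $\I$ in each slot), so $U\in\mathrm{GL}(n+1,\C)$ and $A=UDU^{-1}$ follows at once. I expect no real obstacle inside this lemma: the only delicate point is matching the $\I$-powers and index ranges so that the eigen-equation collapses onto \autoref{Stirling sum identity}, and the actual difficulty---the inclusion--exclusion argument behind that identity---has already been dealt with separately.
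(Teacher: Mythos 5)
Your strategy---check $AU=UD$ column by column and reduce the eigenvector equation to \autoref{Stirling sum identity}---is a genuinely different route from the paper's, which instead realizes $A$ as the matrix of $x\frac{d}{dx}$ in the basis of rescaled Touchard polynomials $\psi_k=\I^{k-n}\phi_k$ and imports two recurrences for $\phi_k$ from umbral calculus; your preliminary observation that $A$ is lower triangular with distinct diagonal entries $0,1,\dots,n$, hence automatically diagonalizable with $D=\mathrm{diag}(0,\dots,n)$, is also an economy the paper does not make explicit. However, the bookkeeping of the $\I$-powers does not work the way you describe, and this is not cosmetic. With $A_{km}=\I^{k-m}\binom{k}{m-1}$ and $U_{mj}=\I^{m-n}{m\brace j}$ as stated, one gets
\[
(AU)_{kj}=\sum_{m}\I^{k-m}\binom{k}{m-1}\,\I^{m-n}{m\brace j}=\I^{k-n}\sum_{m=j}^{k}\binom{k}{m-1}{m\brace j},
\]
so the $m$-dependence of the $\I$-powers cancels outright, the promised factor $\I^{-2m}=(-1)^m$ never appears, and the resulting \emph{unsigned} sum is not $j{k\brace j}$: for $n=2$, $k=2$, $j=1$ it equals $\binom{2}{0}{1\brace 1}+\binom{2}{1}{2\brace 1}=3$, while $j{k\brace j}=1$. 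As written, your verification does not close, and indeed $AU\neq UD$ for the $U$ literally given in the statement.

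The resolution is that your computation is the correct one for $U_{kj}=\I^{n-k}{k\brace j}$ (exponent negated): then $\I^{k-m}\I^{n-m}=\I^{k+n}\I^{-2m}=\I^{k+n}(-1)^m$, the alternating sign does materialize, and after factoring out $(-1)^k\I^{k+n}=\I^{n-k}$ the eigen-equation collapses exactly onto $\sum_{m=j}^{k}\binom{k}{m-1}(-1)^{k-m}{m\brace j}=j{k\brace j}$, i.e.\ \autoref{Stirling sum identity} with the indices renamed as you say. Equivalently, the $U$ of the statement diagonalizes the entrywise conjugate of $A$ rather than $A$ itself; the mismatch traces back to the last display of the paper's own proof, where $(-1)^{k-j}\I^{k-j}$ is simplified to $\I^{k-j}$ instead of $(-\I)^{k-j}$. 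So your approach is sound and arguably more self-contained (it reuses \autoref{Stirling sum identity} instead of importing Touchard-polynomial recurrences), but you must either correct the exponent in $U$ (or conjugate $A$) before the reduction goes through, or explain where an extra $(-1)^m$ would come from---it cannot come from the data as stated. Your triangularity argument for the invertibility of $U$ is fine in either convention.
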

\begin{proof}
Let $n\in\mathbb{N}$ be arbitrary but fixed. First we note that with respect to the basis $\mathcal{B}_1:=(1,x,\ldots,x^n)$ the operator $x\frac{d}{dx}$ is simply given by the diagonal matrix $D=\mathrm{diag}(0,\ldots,n).$ Then let us introduce a new basis $\mathcal{B}_2:=(\psi_0,\ldots,\psi_n),$
where
\begin{equation}\label{basis2}
\psi_k := \I^{k-n}\sum_{j=0}^k {k \brace j} x^j = \sum_{j=0}^k U_{kj}x^j = \left(U\mathcal{B}_1^T\right)_k =: \I^{k-n}\phi_k\lo.
\end{equation}
The $\phi_k$'s so introduced are also called exponential or Touchard polynomials. Now, the following relations hold, see e.g.\ \cite[Ch.\ 4, {S}ection 1.3]{roman2005umbral}:
\begin{equation}
\phi_{k+1}(x)=\left(x+x\frac{d}{dx}\right)\phi_k(x),
\end{equation}
and
\begin{equation}
-x\phi_k(x)=\sum_{j=0}^k {k \choose j-1}(-1)^{k-j}\phi_j-\phi_{k+1}\lo.
\end{equation}
Then, because of $x\frac{d}{dx}=-x+\left(x+x\frac{d}{dx}\right)$, we have $x\frac{d}{dx}\phi_k=-x\phi_k+\phi_{k+1}$\lo, so that
\begin{equation}\label{eq3basis}
x\frac{d}{dx}\phi_k =\sum_{j=0}^k{k \choose j-1}(-1)^{k-j}\phi_j=\sum_{j=0}^k \I^{n-j}{k \choose j-1}(-1)^{k-j}\psi_j\lo.
\end{equation}
Finally, we can show that the matrix in terms of which the operator $x\frac{d}{dx}$ is expressed in the basis $\mathcal{B}_2$ is just the matrix $A$.
Indeed, by the very definition of the $\psi_k$'s, see (\ref{basis2}) and eq. (\ref{eq3basis}), we have
\begin{equation}
x\frac{d}{dx}\psi_k = \I^{k-n}x\frac{d}{dx}\phi_k=\sum_{j=0}^k\I^{k-j}{k \choose j-1}\psi_j.
\end{equation}
At the same time $U:\mathcal{B}_1 \rightarrow \mathcal{B}_2$ establishes a change of the corresponding bases, so that the proof is complete.
\end{proof}
Since $R(\lambda)=\e^{-\log\lambda A},$ we obtain from
Lemma \ref{lemma-poisson} the following expression
$$
R(\lambda)=U \e^{-\log\lambda D}U^{-1},
$$
and consequently
\begin{equation}
R(\mu,\lambda)= U \e^{\log\frac{\mu}{\lambda}D}U^{-1}=UK(\mu,\lambda)U^{-1},
\end{equation}
where $K(\mu,\lambda)_{kj}=\left(\frac{\mu}{\lambda}\right)^k \delta_{kj}$\lo. Using the fact that $U^{-1}_{kj}=\I^{n-j}\stfirst{k}{j}$, with $\stfirst{k}{j}$ denoting the signed Stirling numbers of the first kind, one finds the following entries
\begin{equation}\label{entries-of-R}
R(\mu,\lambda)_{kj}=\sum_{l=0}^k \I^{k-j}\left(\frac{\mu}{\lambda}\right)^l{k \brace l}\,\stfirst{l}{j}.
\end{equation}
With the help of (\ref{entries-of-R}) and Lemma \ref{cond_exp_lem} we can calculate the conditional moments
\begin{align}
\EE{x_q^k}[P] &=\sum_{l=0}^k(-\I \mu^{-1})^k(\mathrm{i}\lambda)^l R(\mu,\lambda)_{kl}x_p^l \nonumber\\
&=\sum_{l=0}^k \I^{l-k}\mu^{-k}\lambda^l\left(\sum_{s=0}^k\I^{k-l}\left(\frac{\mu}{\lambda}\right)^s {k \brace s}\,\stfirst{s}{l}\right)x_p^l \nonumber \\
&= \sum_{l=0}^k\sum_{s=0}^k \left(\mu^{s-k}\lambda^{l-s}{k \brace s}\,\stfirst{s}{l}\right)x_p^l\loo.
\end{align}
In particular, the choice $\mu=\abs{q}$ and $\lambda=\abs{p}$ gives
\begin{align}\label{cond-moments}
\EE{x_q^k}[P]
 &= \abs{q}^{-k}\sum_{s=0}^k (\abs{q}/\abs{p})^s{k\brace s}(\abs{p}x_p)_s\lo,
\end{align}
where $(ax)_s\equiv a^{s}(x)_{s,a}$ and $(x)_{s,a}:=x(x-a^{-1})\cdots (x-(s-1)a^{-1})$, $a\in\mathbb{R}$, denotes the falling factorial with parameter $a$. Furthermore, we have used the relation
$$
(ax)_s=\sum_{l=0}^s \stfirst{s}{l}(ax)^l.
$$
It is clear that in (\ref{cond-moments}) all terms with $s<k$ will diverge in the limit $\abs{q} \rightarrow 0.$
In order to obtain finite results in this limit the moments have to be renormalized by adding appropriate counterterms. For this let us define
\begin{align}\label{counterterms}
x^k_{q,\mathrm{ren}}:=
x_q^k+\stfirst{k}{k-1}\,x_q^{k-1}\abs{q}
+\stfirst{k}{k-2}\,x_q^{k-2}\abs{q}^2+\cdots + \stfirst{k}{1}\,x_q \abs{q}^{k-1}.
\end{align}
The next Proposition shows that the renormalized moments just introduced do the job, since the $\abs{q}$-dependent factors disappear.
\begin{prop}\label{renormalized-powers}
The following relation holds for all $k\in \N$:
$$
\EE{x^k_{q,\mathrm{ren}}}[P]=\abs{p}^{-k}(\abs{p}x_p)_k=(x_p)_{k,\abs{p}}\Lo.
$$
\end{prop}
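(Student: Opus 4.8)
The plan is to recognize the renormalized power (\ref{counterterms}) as a falling factorial with parameter $\abs q$, after which the statement becomes the martingale identity for these observables. Using the expansion $(ax)_s=\sum_l\stfirst{s}{l}(ax)^l$ recalled above, one checks that the counterterms assemble into $x^k_{q,\mathrm{ren}} = (x_q)_{k,\abs q} = \abs q^{-k}(\abs q x_q)_k$, while the asserted right-hand side $(x_p)_{k,\abs p} = \abs p^{-k}(\abs p x_p)_k$ is literally the same object one scale coarser. Thus the proposition is precisely the assertion that the parametrized falling factorials $(x_{p(\mf m)})_{k,\abs{p(\mf m)}}$ form a cylinder density, i.e.\ that they realize the Wick power $x(\mf m)^{\diamond k}$.

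First I would expand $\EE{x^k_{q,\mathrm{ren}}}[P]$ by linearity and substitute the conditional moments (\ref{cond-moments}):
\[
\EE{x^k_{q,\mathrm{ren}}}[P] = \sum_{m=0}^k \stfirst{k}{m}\abs q^{m-k}\,\EE{x_q^m}[P] = \abs q^{-k}\sum_{m,s}\stfirst{k}{m}{m\brace s}\left(\frac{\abs q}{\abs p}\right)^s (\abs p x_p)_s.
\]
The decisive step is the orthogonality of Stirling numbers, $\sum_{m}\stfirst{k}{m}{m\brace s}=\delta_{ks}$ (the signed first-kind matrix inverts the second-kind matrix), which collapses the double sum to the single term $s=k$. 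The leftover powers then combine as $\abs q^{-k}(\abs q/\abs p)^k=\abs p^{-k}$, giving $\abs p^{-k}(\abs p x_p)_k=(x_p)_{k,\abs p}$; this is exactly the mechanism by which the counterterms annihilate the subleading contributions $s<k$ that diverge as $\abs q\to 0$ in (\ref{cond-moments}).

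As a conceptual check, and to see why falling factorials are forced here, I would redo the computation probabilistically: under $\mu_Q$ the integer counts $\abs q x_q$ with $q\leq p$ are independent Poisson variables whose sum is $\abs p x_p$, so conditioning on $x_p$ turns $\abs q x_q$ into a $\mathrm{Binomial}(n,\theta)$ variable with $n=\abs p x_p$ and $\theta=\abs q/\abs p$; since the $k$-th factorial moment of $\mathrm{Binomial}(n,\theta)$ is $(n)_k\theta^k$, one reads off $\EE{(\abs q x_q)_k}[P]=(\abs p x_p)_k(\abs q/\abs p)^k$, which is the same result with the Stirling bookkeeping suppressed. The only real obstacle is that bookkeeping—tracking the powers of $\abs q$ through the two nested Stirling expansions and invoking the correct inversion relation—so the main work is organizational; once $x^k_{q,\mathrm{ren}}$ is identified with $(x_q)_{k,\abs q}$, the martingale property is immediate.
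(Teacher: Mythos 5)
Your core computation is the same as the paper's: expand $\EE{x^k_{q,\mathrm{ren}}}[P]$ term by term via \eqref{cond-moments}, collect the coefficient of $(\abs p x_p)_s$, and collapse the double sum with the Stirling inversion $\sum_m\stfirst{k}{m}\stsecond{m}{s}=\delta_{ks}$, leaving only $s=k$ and the prefactor $\abs q^{-k}(\abs q/\abs p)^k=\abs p^{-k}$. What you add is worthwhile on two counts. First, the identification $x^k_{q,\mathrm{ren}}=\abs q^{-k}(\abs q x_q)_k=(x_q)_{k,\abs q}$ turns the proposition into a pure self-similarity statement---the parametrized falling factorials form a cylinder density---which the paper only makes explicit afterwards via the recursion \eqref{shifted-g}. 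Be aware, though, that this identification requires the counterterms to carry \emph{negative} powers $\abs q^{-j}$, whereas \eqref{counterterms} as printed has $\abs q^{+j}$; with the printed exponents the claim already fails for $k=2$ (one gets $(x_p)_{2,\abs p}+(\abs q^{-1}-\abs q)x_p$), and the powers of $\abs q$ in the paper's own displayed computation only balance under the $\abs q^{-j}$ reading, so your reading is the consistent one and you are implicitly correcting a typo. Second, your binomial-thinning argument---conditioning independent Poisson counts $\abs q x_q$ on their sum $\abs p x_p$ gives a $\mathrm{Binomial}(\abs p x_p,\abs q/\abs p)$ variable, whose $k$-th factorial moment is $(\abs p x_p)_k(\abs q/\abs p)^k$---is a genuinely different and cleaner route to the same identity, bypassing the Stirling bookkeeping entirely; it is specific to the Poisson semigroup, which is exactly the setting here.
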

\begin{proof} Let $c_{p,q}:=\abs{q}/\abs{p}.$
Employing formula (\ref{cond-moments}) for each term in the conditional expectation, gives
\begin{align}\label{renormalized-moments}
& \EE{x^k_{q,\mathrm{ren}}}[P] & \nonumber \\
&= \abs{q}^{-k}\biggl(\sum_{s=0}^k c_{p,q}^s\stfirst{k}{s}(\abs{p}x_p)_s
+ \sum_{s=0}^{k-1} c_{p,q}^s\stfirst{k}{k-1}\stsecond{k-1}{s}(\abs{p}x_p)_s\biggr.\nonumber \\
& \quad +\biggl. \sum_{s=0}^{k-2} c_{p,q}^s\stfirst{k}{k-2}\stsecond{k-2}{s}(\abs{p}x_p)_s +\cdots +\sum_{s=0}^{1} c_{p,q}^s\stfirst{k}{1}\stsecond{1}{s}(\abs{p}x_p)_s \biggr)\nonumber \\
&= \abs{q}^{-k}\biggl(c_{p,q}^0(\abs{p}x_p)_0\underbrace{\sum_{l=0}^{k}\stfirst{k}{k-l}\stsecond{k-l}{0}}_{\delta_{k0}}\biggr. \nonumber\\
&\quad +\biggl. c_{p,q}^1(\abs{p}x_p)_1\underbrace{\sum_{l=0}^{k-1}\stfirst{k}{k-l}\stsecond{k-l}{1}}_{\delta_{k1}}
+ \,\cdots + c_{p,q}^{k}(\abs{p}x_p)_k\underbrace{\sum_{l=0}^{0}\stfirst{k}{k-l}\stsecond{k-l}{k}}_{\delta_{kk}}\biggr).
\end{align}
In the last equality we have collected terms with fixed $s$. The Kronecker deltas pop up because the matrices $\left(\stsecond{k}{l}\right)_{k,l\geq 0}$ and $\left(\stfirst{k}{l}\right)_{k,l\geq 0}$ are inverses of each other, see \cite[Section 1.9.1]{stanley2011combinatorics}.
A glance at (\ref{renormalized-moments}) shows that only the last term survives, which proves the assertion.
\end{proof}
Let us now address the case of general monomials.
First we recall that
$
\rho_\lambda(s)=\e^{-\lambda}\lambda^s/s!,
$
where the constant $\alpha$ has been integrated into $\lambda$. This
immediately gives the relation
\begin{equation}\label{shifted-rho}
\rho_\lambda(s+1)=\frac{\lambda}{s+1}\rho_\lambda(s).
\end{equation}
Let $p=p(\mf m)$ and set
\begin{equation}\label{cond_exp_1}
g_n(x_p):=
\EE{x_{q_1}\cdots\, x_{q_n}x_{q_{n+1}}^0}[P]=
\bigl(x(\mf m)^{\diamond n}\bigr)_P
\end{equation}
with pairwise different $q_i$'s obeying $q_i\leq p$ and $p=q_1+\cdots +q_{n+1}$\lo. To evaluate (\ref{cond_exp_1}) one has to integrate w.r.t.\ the measure
\begin{align*}
& \D\nu_{\abs{q_1}}(\abs{q_1}x_{q_1})\,\cdots\, \D\nu_{\abs{q_{n+1}}}(\abs{q_{n+1}}x_{q_{n+1}}) \\
&\quad =\rho_{\abs{q_1}}(t_1)\,\cdots\, \rho_{\abs{q_n}}(t_n)\rho_{\abs{q_{n+1}}}\left(s-\sum_{\ell=1}^n t_\ell \right)\D t_1 \,\cdots\, \D t_n\lo,
\end{align*}
where $s=\abs{p}x_p$ and $t_i=\abs{q_i}x_{q_i}$\Lo.
Moreover, according to Proposition \ref{prop-wick-product}, Wick polynomials do not depend on the volumes $\abs{q_i}$ and we shall assume here and in section \ref{Gauss}, when dealing with the Gaussian case, that $\abs{q_1}=\ldots=\abs{q_{n+1}}=\abs{p}/(n+1)=:\abs{q}$.
Then we have to
calculate the following integral
\begin{align}\label{int-poisson}
g_n(x_p)
&=\frac{1}{\rho_{\abs{p}}(s)}\int
\left(\prod_{\ell=1}^{n} \frac{t_\ell}{\abs{q}}\rho_{\abs{q}}(t_\ell)\right)
\rho_{\abs{q}}\left(s-\sum_{\ell =1}^n t_\ell\right)\D{t_n}\ldots\D{t_1}
\end{align}
For simplicity we shall often write $x$ instead of $x_p$\lo.
First note that the relation
\begin{equation}\label{shifted-g}
g_{n}(x+\abs{p}^{-1})=(x+\abs{p}^{-1})g_{n-1}(x)
\end{equation}
holds.
Indeed, owing to (\ref{shifted-rho}) we have
\begin{align}
& g_n(x+\abs{p}^{-1})\nonumber \\
&\quad =\frac{1}{\rho_{\abs{p}}(s+1)}\int
\left(\prod_{\ell=1}^{n} \frac{t_\ell}{\abs{q}}\rho_{\abs{q}}(t_\ell)\right)
\rho_{\abs{q}}\left(s+1-\sum_{\ell =1}^n t_\ell\right)\D{t_n}\ldots\D{t_1}\lo. \nonumber
\end{align}
Performing the change of variable $-u=1-t_n$, we find
\begin{align}\label{shiftedE}
& g_n(x+\abs{p}^{-1})=\frac{s+1}{\abs{p}\rho_{\abs{p}}(s)}
\int \left(\prod_{\ell=1}^{n-1} \frac{t_\ell}{\abs{q}}\rho_{\abs{q}}(t_\ell)\right)
\frac{u+1}{\abs{q}}\rho_{\abs{q}}(u+1)\,\cdot \nonumber\\
&\quad\cdot\rho_{\abs{q}}\left(s-\sum_{\ell =1}^{n-1} t_\ell-u\right)\D{u}\D{t_{n-1}}\ldots\D{t_1}\lo.
\end{align}
But $(u+1)\rho_{\abs{q}}(u+1)=\abs{q}\rho_{\abs{q}}(u)u^0$ and Lemma \ref{cond_exp_lem} shows that integration w.r.t.\ the variable $u$ just gives the value $1$, which therefore proves relation (\ref{shifted-g}). As $g_0(x)=1=(x)_{0,\abs{p}}$ and
relation (\ref{shifted-g}) equals the recursive identities of the falling factorials $(x)_{n,\abs{p}}$\lo, we have verified that
$g_n(x)=(x)_{n,\abs{p}}$ for all $n\in \N$.
Comparing this result with the statement of Proposition \ref{renormalized-powers}, entails that
\[
\EE{x^n_{q,\mathrm{ren}}}[P]=\bigl(x(\mf m)^{\diamond n}\bigr)_P=(x_p)_{n,\abs{p}}\loo.
\]
This shows not only that in the Poisson case the Wick product is given by falling factorials but also makes explicit its renormalization effect, which here is given by the subtraction of counterterms as in (\ref{counterterms}).

The same technique can be used to obtain a recursive formula for general monomials. Indeed, owing to (\ref{shifted-rho}) the following holds
\begin{align}\label{shiftedfwithpower}
&\left(\frac{t_n+1}{\abs{q}}\right)^k \rho_{\abs{q}}(t_n+1) =\left(\frac{t_n+1}{\abs{q}}\right)^{k-1}\rho_{\abs{q}}(t_n)\nonumber \\
&\quad =\sum_{\ell=0}^{k-1}\binom{k-1}{\ell}\abs{q}^{-(k-\ell-1)}\left(\frac{t_n}{\abs{q}}\right)^\ell \rho_{\abs{q}}(t_n).
\end{align}
Let $g_{\mathbf{k}_n}(x):=\EE{x_{q_1}^{k_1}\,\cdots\, x_{q_n}^{k_n}x_{q_{n+1}}^0}[P],$ where $\mathbf{k}_n=(k_1,\ldots,k_n)\in \N^n$. Repeating the first step of (\ref{shiftedE}) and inserting equality (\ref{shiftedfwithpower}), gives the following recursive formula
\begin{equation}\label{shiftedEmonomial}
g_{\mathbf{k}_n}(x+\abs{p}^{-1})=(x+\abs{p}^{-1})\sum_{\ell=0}^{k_n-1}\binom{k_n-1}{\ell}\abs{q}^{-(k_n-\ell-1)}g_{(\mathbf{k}_{n-1},\ell)}(x),
\end{equation}
with $(\mathbf{k}_{n-1},\ell)=(k_1,\ldots,k_{n-1},\ell)$, or equivalently
\begin{equation*}
g_{\mathbf{k}_n}(x)=x\sum_{\ell=0}^{k_n-1}\binom{k_n-1}{\ell}\abs{q}^{-(k_n-\ell-1)}g_{(\mathbf{k}_{n-1},\ell)}(x-\abs{p}^{-1}).
\end{equation*}
There are several terms in (\ref{shiftedEmonomial}) that will diverge in the ultraviolet limit $\abs{q}\rightarrow 0$. In fact, for each power $k$ only the term with index $\ell=k-1$ is not affected from any divergence. On the other hand we may define a family of renormalized polynomials $g_{\mathbf{k}_n,\mathrm{ren}}$ by subtracting all $\abs{q}$-dependent terms from $g_{\mathbf{k}_n}$. Of course this has to be done for all orders of exponents $(2),(3),\ldots, (k_1),(k_1,2)$ and so on up to $(k_1,\ldots,k_n)$. As can be seen from equality (\ref{shiftedEmonomial}), the resulting polynomials will obey $g_{(0),\mathrm{ren}}(x)=1$ and
\[
g_{\mathbf{k}_n,\mathrm{ren}}(x+\abs{p}^{-1})=(x+\abs{p}^{-1})g_{(\mathbf{k}_{n-1},k_n-1),\mathrm{ren}}(x),
\]
which again is just the recursion relation of falling factorials with parameter $\abs{p}$, so that
\begin{equation}
g_{\mathbf{k}_n,\mathrm{ren}}(x_p)=(x_p)_{\abs{\mathbf{k}_n},\abs{p}}\loo,\quad \text{for all}\; \mathbf{k}_n\in \N^n,
\end{equation}
where $\abs{\mathbf{k}_n}=k_1+\cdots+k_n$. The findings above can be summarized as follows
\begin{prop}
The Wick-products of the Poisson field are given by
\[
\bigl(x(\mf m)^{\diamond\abs{\mathbf{k}_n}}\bigr)_P=
g_{\mathbf{k}_n,\mathrm{ren}}(x_p)=(x_p)_{\abs{\mathbf{k}_n},\abs{p}}\loo,\quad \mathbf{k}_n\in \N^n.
\]
\end{prop}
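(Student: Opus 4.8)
The plan is to recognize the statement as the meeting point of two computations already performed above, so that the proof becomes a matter of matching them and then explaining why the match is meaningful. Writing $N:=\abs{\mathbf{k}_n}$, the three quantities in the display are the Wick power $\bigl(x(\mf m)^{\diamond N}\bigr)_P$, the renormalized monomial $g_{\mathbf{k}_n,\mathrm{ren}}(x_p)$, and the falling factorial $(x_p)_{N,\abs{p}}$, and I would connect all three through the falling factorial.

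First I would recall that, by \autoref{wick-product}, the Wick power $\bigl(x(\mf m)^{\diamond N}\bigr)_P$ is computed by spreading its $N$ factors over $N$ pairwise distinct cells, each carrying a single power; this is precisely $g_N(x_p)$ in the notation of \eqref{cond_exp_1}, and \autoref{prop-wick-product} guarantees that this value is independent of the cells chosen. The recursion \eqref{shifted-g}, together with $g_0=1$, is the defining recursion of the falling factorial, so $\bigl(x(\mf m)^{\diamond N}\bigr)_P=g_N(x_p)=(x_p)_{N,\abs{p}}$, and in particular this value does not depend on how $N$ is split into the parts $k_1,\dots,k_n$.

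Next I would read off the second equality from the recursion \eqref{shiftedEmonomial}. Stripping every $\abs{q}$-dependent summand from $g_{\mathbf{k}_n}$ leaves only the term $\ell=k_n-1$, so the renormalized family satisfies $g_{\mathbf{k}_n,\mathrm{ren}}(x+\abs{p}^{-1})=(x+\abs{p}^{-1})\,g_{(\mathbf{k}_{n-1},k_n-1),\mathrm{ren}}(x)$ with $g_{(0),\mathrm{ren}}=1$; this is once more the falling-factorial recursion, whence $g_{\mathbf{k}_n,\mathrm{ren}}(x_p)=(x_p)_{N,\abs{p}}$. Combining the two computations yields the asserted three-way equality by transitivity.

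The step I expect to carry the real weight is not the transitivity but the justification that the counterterm prescription defining $g_{\mathbf{k}_n,\mathrm{ren}}$ is the correct one, i.e.\ that it genuinely reproduces the Wick power rather than some other finite quantity. Here the guiding principle is that the Wick power is a scale-independent cylinder density, so its value at scale $P$ cannot depend on the auxiliary cell size $\abs{q}$; discarding exactly the $\abs{q}$-dependent part of $g_{\mathbf{k}_n}$ is thereby forced, and \autoref{renormalized-powers} supplies the single-cell base case, where $\EE{x^k_{q,\mathrm{ren}}}[P]=(x_p)_{k,\abs{p}}=\bigl(x(\mf m)^{\diamond k}\bigr)_P$. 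The combinatorial engine making the counterterms cancel cleanly is the mutual inversion of the Stirling numbers of the two kinds already used in \autoref{renormalized-powers}, together with the Touchard identity of \autoref{Stirling sum identity}; the recursion \eqref{shiftedEmonomial} merely propagates the single-cell cancellation through every exponent pattern up to $(k_1,\dots,k_n)$.
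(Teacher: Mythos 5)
Your proposal is correct and follows essentially the same route as the paper: the proposition there is explicitly a summary of the two preceding computations, namely $g_N(x_p)=(x_p)_{N,\abs p}$ via the recursion \eqref{shifted-g} applied to the defining expression \eqref{cond_exp_1} of the Wick power, and $g_{\mathbf k_n,\mathrm{ren}}(x_p)=(x_p)_{\abs{\mathbf k_n},\abs p}$ via the renormalized form of \eqref{shiftedEmonomial}, combined by transitivity exactly as you describe. The only slight misattribution is that \autoref{Stirling sum identity} enters the computation of the semigroup generator in \autoref{generator-poisson} rather than the counterterm cancellation, which rests solely on the mutual inversion of the two kinds of Stirling numbers as in \autoref{renormalized-powers}; this does not affect the argument.
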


\subsection{Gauss noise} \label{Gauss}

Now $\hat\rho_\lambda(\xi)=\e^{-\lambda\xi^2/2}$ and the probability density itself reads
\[
\rho_\lambda(s)=\frac{1}{\sqrt{2\pi\lambda}}\e^{-s^2/(2\lambda)}.
\]
Let us first make sure that Hypothesis R applies. For this we shall need the Hermite polynomials $H_k^{1/\lambda}$ of variance $1/\lambda$ whose generating function is given by $\displaystyle{\sum_{k=0}^\infty} H^{1/\lambda}_k(\xi)t^k/k!=\e^{\xi t-t^2/(2\lambda)}$. Also, $\hat{\rho}^{(k)}_\lambda=(-1)^k H^{1/\lambda}_k \hat{\rho}_\lambda$ and the following product formula holds
\[
H_n^{1/\lambda}H_m^{1/\lambda}=\sum_{\ell=0}^m\binom{n}{l}\binom{m}{l}\frac{\ell!}{\lambda^\ell}H_{n+m-2\ell}^{1/\lambda}\LO,
\]
see \cite[Ch.\ 4, section 2.1]{roman2005umbral}.
This leads to $\hat{\rho}_\lambda^{(k)}\hat{\rho}'=(-1)^{k+1}H_k^{1/\lambda}H_1^{1/\lambda}\hat{\rho}_\lambda^2$
and since
\[
H_k^{1/\lambda}H_1^{1/\lambda}=H_{k+1}^{1/\lambda}+\frac{k}{\lambda}H_{k-1}^{1/\lambda}\LO,
\]
it follows that
\[
\hat{\rho}_\lambda^{(k)}\hat{\rho}'=\left(\hat{\rho}_\lambda^{(k+1)}+\frac{k}{\lambda}\hat{\rho}_\lambda^{(k-1)}\right)\hat{\rho}_\lambda,
\]
which in turn confirms Hypothesis R.

We obviously have
\begin{equation}\label{derofGauss}
\rho_\lambda'(s)=-\frac{1}{\lambda}s \rho_\lambda(s).
\end{equation}
Let us define
$h_n(x_p):=\EE{x_{q_1}\cdots\,x_{q_n}x_{q_{n+1}}^0}[P]$. Then, as starting point take the relation
\begin{align}\label{Gauss1}
& F_n(x):=
\int \left(\prod_{\ell=1}^n \frac{t_\ell}{\abs{q}}\rho_{\abs{q}}(t_\ell)\right)\rho_{\abs{q}}\left(\abs px-\sum_{\ell =1}^n t_\ell\right)\D{t_n}\ldots\D{t_1}\nonumber \\
&\quad =\int \left(\prod_{\ell=1}^{n} \frac{t_\ell}{\abs{q}}\rho_{\abs{q}}(t_\ell)\right)\rho_{\abs{q}}(t_{n+1})\rho_{\abs{q}}\left(\abs px-\sum_{\ell =1}^{n+1} t_\ell\right)\D{t_{n+1}}\ldots\D{t_1}\lo.
\end{align}
The definitions just introduced allow us to write $h_n(x)=1/\rho_{\abs{p}}(\abs px) F_n(x)$.
Note that the multiple integral in $(\ref{Gauss1})$ is just a convolution product of $n+1$ rapidly decreasing functions, so that differentiation w.r.t.\ $x$ is the same as differentiation w.r.t.\ $s=\abs px$, or w.r.t.\ any of the $t$ variables, provided the result is multiplied by $\abs{p}$ with the appropriate sign.
For this reason the derivative can be performed w.r.t.\ the variable $t_{n+1}$ applied to $\rho_{\abs{q}}(t_{n+1})$. Since $\rho'_{\abs{q}}(t_{n+1})=-1/\abs{q}t_{n+1}\rho_{\abs{q}}(t_{n+1})$, we find
\[F_n'(x)=-\abs{p} F_{n+1}(x)\]
and therefore
\begin{align}
h_n'(x) &=\left(\frac{1}{\rho_{\abs{p}}(s)}\right)' F_n(x)+\frac{1}{\rho_{\abs{p}}(s)} F_n'(x) \nonumber \\
&= \frac{s}{\rho_{\abs{p}}(s)}F_n(x)-\frac{\abs{p}}{\rho_{\abs{p}}(s)}F_{n+1}(x)\nonumber \\
&= \abs{p}\left(x h_n(x)-h_{n+1}(x)\right), \nonumber
\end{align}
or equivalently
\begin{equation}\label{rec-hermite}
h_{n+1}(x)=x h_n(x)-\abs{p}^{-1}h'_n(x).
\end{equation}
As $h_0(x)=1$ and relation (\ref{rec-hermite}) is precisely that of Hermite polynomials with parameter $1/\abs{p}$, we have verified that $h_n(x)=H_n^{1/\abs{p}}(x)$  for all $n\in \mathbb{N}$.

Finally, let us address expectations of general monomials
$$
h_{\mathbf{k}_n}(x):=\EE{x_{q_1}^{k_1}\cdots\, x_{q_n}^{k_n}x_{q_{n+1}}^0}[P],
$$
where $\mathbf{k}_n=(k_1,\ldots,k_n)\in \N^n.$
We need to calculate the multiple integral
\begin{align}
& F_{\mathbf{k}_n}(x):=\int \left(\prod_{\ell=1}^{n-1} \left(\frac{t_\ell}{\abs{q}}\right)^{k_l}\rho_{\abs{q}}(t_\ell)\right)
\left(\frac{t_n}{\abs{q}}\right)^{k_n}\rho_{\abs{q}}(t_n)
\rho_{\abs{q}}\left(s-\sum_{\ell =1}^{n} t_\ell\right)\D{t_{n}}\ldots\D{t_1},
\end{align}
in terms of which we may write $h_{\mathbf{k}_n}(x)=1/\rho_{\abs{p}}(s)F_{\mathbf{k}_n}(x).$ We now differentiate $h_{\mathbf{k}_n}(x)$ and repeat the argument from above by applying the derivative w.r.t. $t_n$ to the factor $(t_n/\abs{q})^{k_n}\rho_{\abs{q}}(t_n)$. From
\[
\left(\left(\frac{t_n}{\abs{q}}\right)^{k_n}\rho_{\abs{q}}(t_n)\right)'=-\abs{p}\left(\frac{t_n}{\abs{q}}\right)^{k_n+1}\rho_{\abs{q}}(t_n)+\frac{\abs{p}}{\abs{q}}k_n\left(\frac{t_n}{\abs{q}}\right)^{k_n-1}\rho_{\abs{q}}(t_n)
\]
it follows that
\begin{align*}
h_{\mathbf{k}_n}'(x) &=\left(\frac{1}{\rho_{\abs{p}}(s)}\right)' F_{\mathbf{k}_n}(x)+\frac{1}{\rho_{\abs{p}}(s)} F_{\mathbf{k}_n}'(x)\\
&=\frac{\abs{p}}{\rho_{\abs{p}}(s)}\left(xF_{\mathbf{k}_n}(x)-F_{\mathbf{k}_n +1}(x)+\frac{1}{\abs{q}}k_n F_{\mathbf{k}_n -1}(x)\right)
\end{align*}
where $\mathbf{k}_n +1=(k_1,\ldots,k_n,k_n+1)$ and likewise for $\mathbf{k}_n -1$,
so that
\begin{equation}\label{rec-Gauss}
h_{\mathbf{k}_n +1}(x)=x h_{\mathbf{k}_n}(x)-\abs{p}^{-1}h'_{\mathbf{k}_n}(x)+k_n\abs{q}^{-1}h_{\mathbf{k}_n -1}(x).
\end{equation}
Formula (\ref{rec-Gauss}) shows that the polynomials $h_{\mathbf{k}_n}$ {\it would} obey the same recursion relations as the Hermite polynomials if there was not the last term. Due to the factor $\abs{q}^{-1}$ it will also diverge in the limit $\abs{q}\rightarrow 0$. In order to guarantee finiteness in this ultraviolet limit, we define a family of renormalized polynomials $h_{\mathbf{k}_n,\mathrm{ren}}$ by subtracting all $\abs{q}$-dependent terms from $h_{\mathbf{k}_n}$\Loo. Proceeding as in the Poisson case, one gets a new family of polynomials that necessarily obeys the recurrence relations of Hermite polynomials with parameter $1/\abs{p}$, so that
\[
h_{\mathbf{k}_n,\mathrm{ren}}(x)=H^{1/\abs{p}}_{\abs{\mathbf{k}_n}}(x),\quad \text{for all}\;\mathbf{k}_n\in \N^n.
\]
Therefore we may state the following
\begin{prop}
The Wick-products of Gauss fields are given by
\[
\bigl(x(\mf m)^{\diamond\abs{\mathbf{k}_n}}\bigr)_P=
h_{\mathbf{k}_n,\mathrm{ren}}(x_p)=H^{1/\abs{p}}_{\abs{\mathbf{k}_n}}(x_p), \quad
\mathbf{k}_n \in \N^n.
\]
\end{prop}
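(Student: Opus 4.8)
The plan is to realize all three expressions as the unique solution of a single recursion. Recall that the text has already shown, via \eqref{rec-hermite}, that the simple moments satisfy $h_{m+1}=x\,h_m-\abs p^{-1}h_m'$ with $h_0=1$, and that this is precisely the recursion (together with the normalization $H^{1/\abs p}_0=1$) defining the Hermite polynomials $H^{1/\abs p}_m$ of parameter $1/\abs p$. Since such a two-term recursion has a unique solution, it suffices to show that both the renormalized moment $h_{\mathbf k_n,\mathrm{ren}}$ and the effective Wick power $\bigl(x(\mf m)^{\diamond\abs{\mathbf k_n}}\bigr)_P$ coincide with $H^{1/\abs p}_{\abs{\mathbf k_n}}(x_p)$.

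First I would prove $h_{\mathbf k_n,\mathrm{ren}}(x_p)=H^{1/\abs p}_{\abs{\mathbf k_n}}(x_p)$ by induction on the total degree $\abs{\mathbf k_n}$, with base case $h_{(0),\mathrm{ren}}=1$. The raw moments obey \eqref{rec-Gauss}, whose right-hand side is the clean Hermite combination $x\,h_{\mathbf k_n}-\abs p^{-1}h_{\mathbf k_n}'$ together with the lone anomalous term $k_n\abs q^{-1}h_{\mathbf k_n-1}$. The renormalization keeps, by definition, only the part of each moment that is independent of the ultraviolet volume $\abs q$ (treated as a free scale, distinct from the fixed $\abs p$). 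Multiplication by $x$ and the operator $-\abs p^{-1}\partial_x$ respect this splitting, so their $\abs q$-independent parts are $x\,h_{\mathbf k_n,\mathrm{ren}}$ and $-\abs p^{-1}h_{\mathbf k_n,\mathrm{ren}}'$; the anomalous term, carrying an explicit $\abs q^{-1}$, contributes nothing $\abs q$-independent provided $h_{\mathbf k_n-1}$ has no term linear in $\abs q$. Granting this, taking $\abs q$-independent parts throughout \eqref{rec-Gauss} leaves exactly $h_{\mathbf k_n+1,\mathrm{ren}}=x\,h_{\mathbf k_n,\mathrm{ren}}-\abs p^{-1}h_{\mathbf k_n,\mathrm{ren}}'$, the Hermite recursion, closing the induction.

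For the remaining equality I would invoke the case in which every exponent equals $1$: then the $q_i$ may be taken pairwise different, no anomaly is generated, and \autoref{wick-product} identifies $\bigl(x(\mf m)^{\diamond m}\bigr)_P=\EE{x_{q_1}\cdots x_{q_m}}[P]=h_m(x_p)=H^{1/\abs p}_m(x_p)$. Setting $m=\abs{\mathbf k_n}$ and combining with the previous step gives $\bigl(x(\mf m)^{\diamond\abs{\mathbf k_n}}\bigr)_P=H^{1/\abs p}_{\abs{\mathbf k_n}}(x_p)=h_{\mathbf k_n,\mathrm{ren}}(x_p)$. That the effective Wick power is insensitive to the data used along the way—the refinement $Q$, the representatives $q_i$, and the convenient normalization $\abs{q_1}=\cdots=\abs{q_{n+1}}=\abs p/(n+1)$ adopted to derive \eqref{rec-Gauss}—is guaranteed by \autoref{prop-wick-product}.

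The delicate point is the parenthetical hypothesis used to close the recursion: that, when the Gaussian conditional moments are expressed via \autoref{cond_exp_of_powers} (iterated over the active variables), the $\abs q$-dependence of $h_{\mathbf k_n}$ is \emph{purely singular}, i.e.\ only non-positive powers of $\abs q$ occur, so that no term is linear in $\abs q$ and the $\abs q^{-1}$ of the anomalous term cannot manufacture a spurious finite contribution. Establishing this structural fact—equivalently, that the counterterm subtracted at each level cancels the divergence propagated from below without leaving any finite residue—is the crux; once it is secured, the uniqueness of the Hermite recursion finishes the proof exactly as in the Poisson computation leading to \autoref{renormalized-powers}.
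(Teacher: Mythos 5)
Your argument follows the paper's own route: the recursion \eqref{rec-Gauss}, the definition of $h_{\mathbf k_n,\mathrm{ren}}$ by discarding the $\abs q$-dependent terms, the observation that the renormalized family then necessarily satisfies the Hermite recursion with parameter $1/\abs p$, and the identification of the effective Wick power via the all-exponents-one case \eqref{rec-hermite} together with \autoref{prop-wick-product}. The ``delicate point'' you flag---that the $\abs q$-dependence of the raw moments is purely singular, so the anomalous term $k_n\abs q^{-1}h_{\mathbf k_n-1}$ leaves no finite residue---is precisely the step the paper also passes over with ``proceeding as in the Poisson case,'' so your proposal matches both the approach and the level of detail of the published proof.
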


\section{Quantum field theory}
\subsection{Reflection positivity}

Consider a (possibly signed) measure of the form $a\mu$, for some cylinder density $a\in L^1(X)$. If $a\mu$ is a probability measure (in particular, if it is positive) then it can describe a statistic field theory.
When constructing a quantum field, what matters instead is the \emph{reflection positivity}~\cite{simon2015p, glimm2012quantum} of $a\mu$, for that property enables the reconstruction of a non-commutative observable algebra acting on a Hilbert space by understanding one coordinate as ``imaginary time'' and going back to ``real time''---a trick based on Wick rotation, arguably making Gian-Carlo Wick the single scientist who has most influenced this work. We briefly sketch how the reconstruction theorem works in our setup, entering along the way into the basics of implementing space-time symmetries on the field space.

\begin{defn}
Let $\tau_t:M\rightarrow M$ be a one-parameter group of isometries, thought of as (imaginary) time evolution. We extend it to $\mf M$ by
\[
\tau_t(\mf m) = \set{p\circ\tau_{-t} | p\in\mf m},%\quad \tau_{-t}^*p = p\circ\tau_{-t},
\]
thus getting the one-parameter group $\tau^t:\mc O(X)\rightarrow \mc O(X)$ induced by $\tau^t x(\mf m) = x(\tau_t\mf m)$, i.e.\
\begin{align*}
\tau^t %\left( \int_{\mf M^k}  x(\mf m_1)\diamond\cdots\diamond x(\mf m_k) \D\alpha(\mf m_1\lo,\dots,\mf m_k) \right) \\
\left( \int_{\mr M^k} x^{\diamondtimes k}\D\alpha^k \right)
	&= \int_{\mf M^k}  x(\tau_{t}\mf m_1)\diamond\cdots\diamond x(\tau_{t}\mf m_k) \D\alpha(\mf m_1\lo,\dots,\mf m_k) \\
%    &\qquad= \int_{\mf M^k} x(\mf m_1)\diamond\cdots\diamond x(\mf m_k)\D\alpha(\tau_{-t}\mf m_1\lo,\dots,\tau_{-t}\mf m_k).
	&= \int_{\mf M^k} x^{\diamondtimes k}\D\beta^k,\quad \beta_{p_1\cdots p_k} = \alpha_{p_1\circ\tau_{t}\cdots p_k\circ\tau_{t}}\LOO.
\end{align*}
Next, given a time slice $M_0\subseteq M$ such that $M = \bigcup \tau_t(M_0)$ with disjoint union, write
\[
M_t = \tau_t(M_0),\quad M_I = \bigcup_{t\in I\subseteq\R} M_t
\]
and define the involution $\theta:M\rightarrow M$ by $\theta \tau_t(m) = \tau_{-t}(m)$, $m\in M_0$\lo. Similarily, this is extended to $\mf M$ by $\theta\mf m = \set{p\circ\theta|p\in\mf m}$ and induces on $\mc O(X)$ the involution %$\theta^* = ()^\dagger$,
\[
\left( \int_{\mf M^k} x^{\diamondtimes k}\D\alpha^k \right)^\dagger = \int_{\mf M^k} x^{\diamondtimes k}\D\beta^k,\quad \beta_{p_1\cdots p_k} = \overline{\alpha_{p_1\circ\theta\cdots p_k\circ\theta}}\Lo.
\]
%Again, this can also be written
%\begin{align*}
%&\left( \int_{\mf M^k} x(\mf m_1)\diamond\cdots\diamond x(\mf m_k)\D\alpha(\mf m_1\lo,\dots,\mf m_k)  \right)^\dagger \\
%	&\qquad= \int_{\mf M^k} x(\mf m_1)^\dagger\diamond\cdots\diamond x(\mf m_k)^\dagger \D\overline{\alpha(\mf m_1\lo,\dots,\mf m_k)} \\
%	&\qquad= \int_{\mf M^k} x(\theta\mf m_1)\diamond\cdots\diamond x(\theta\mf m_k) \D\overline{\alpha(\mf m_1\lo,\dots,\mf m_k)}  \\
%    &\qquad= \int_{\mf M^k} x(\mf m_1)\diamond\cdots\diamond x(\mf m_k) \D\overline{\alpha(\theta\mf m_1\lo,\dots,\theta\mf m_k)}.
%\end{align*}
\end{defn}

\begin{defn}
Given $P\in\mc P$ and $I\subseteq\R$ a possibly unbounded interval, let
\[
P_I = \Set{p_{M_I}p | p\in P},\quad X_I = \projlim \Set{X_{P_I} | P\in\mc P}.
\]
We define $\mc O_\mr{poly}\bigl(X_I\bigr)$ just as we defined $\mc O_\mr{poly}(X)$. Formally, its elements can be written as
\[
\sum_{k=0}^n \int_{\mf M_I^k} x^{\diamondtimes k} \D\alpha^k,\quad
\mf M_I = \Set{ \mf m\in\mf M | %p\leq p_{M_I} \text{ for all fine enough } p\in\mf m
	p_{M_I}\in\mf m}.
\]
Now, using the projection $X\rightarrow X_I$ given at scale $P$ by
\[
x\in X_P\mapsto p_{M_I}x \in X_{P_I}\LO,
\]
we get a canonical inclusion $\mc O_\mr{poly}\bigl(X_I\bigr) \subseteq \mc O_\mr{poly}(X)$. Under this identification, $\mc O_\mr{poly}\bigl(X_I\bigr)$ is the algebra of polynomials $a = \sum_k\int_{\mf M^k} x^{\diamondtimes k}\D\alpha^k$ with kernels $\alpha^k$ satisfying
\[
\alpha_{p_1\cdots p_k}=0 \text{ whenever } \supp p_i\subseteq M\setminus M_I \text{ for some } i\leq k.
\]
We define $\mc O\bigl(X_I\bigr)$ to be the closure under Wick calculus of $\mc O_\mr{poly}\bigl(X_I\bigr)$, i.e.\ the algebra of observables of the form
\[
f^\diamond(a_1\lo,\dots,a_n)\in L^1_\mr{eff}(X),\quad a_i\in\mc O_\mr{poly}\bigl(X_I\bigr)
\]
where $f\in C(\R^n)$ is such that
\[
f(\mc Sa_{1P}\lo,\dots,\mc Sa_{nP}) \in B_\mu(X_P^*),\quad P\in\mc P.
\]
\end{defn}
\begin{rk}
If $I\subseteq J\subseteq\R$, then $\mc O\bigl(X_I\bigr)\subseteq \mc O\bigl(X_J\bigr)$.
\end{rk}

\begin{defn}
The partial algebra of \emph{time-ordered local observables}, denoted by $\cat T\mc O(X)$, is equal, as a vector space, to $\mc O(X)$, but comes equipped with the partial product
\[
ba = b\diamond a,\quad a\in\mc O\bigl(X_I\bigr),\ b\in\mc O\bigl(X_J\bigr)
\]
where $I<J$, i.e.\ $s<t$ for all $s\in I$ and $t\in J$. Note that this is just the pointwise product
\[
(ba)_P(x) = b_P(x)a_P(x),
\]
which is well-defined as a cylinder density by independence.
We emphasize that although the Wick product $b\diamond a$ is always well-defined, the time-ordered product $ba$ makes sense only if $a\in\mc O\bigl(X_I\bigr)$ and $b\in\mc O\bigl(X_J\big)$ for some $I<J$.
Note that $a\mapsto a^\dagger$ is an involution on $\cat T\mc O(X)$, meaning that $ba$ makes sense if, and only if, $a^\dagger b^\dagger$ makes sense and
\[
(ba)^\dagger = a^\dagger b^\dagger.
\]
Note, finally, that $aa^\dagger$ makes sense if $a\in \mc O(X_+)$, where $X_+ = X_{(0,\infty)}$\lo.
\end{defn}
\begin{rk}
Given $a\in\mc O\bigl( X_I \bigr)$, one has $\tau^t(a)\in\mc O\bigl( X_{I+t} \bigr)$ where $I+t = \Set{s+t|s\in I}$. Thus,
\[
\tau^t:\cat T\mc O(X_+)\rightarrow \cat T\mc O(X_+).
\]
for all $t\geq 0$, i.e.\ $\set{\tau^t}_{t\geq 0}$ is a one-parameter semigroup of automorphisms of $\cat T\mc O(X_+)$.
\end{rk}

\begin{defn}
We say that $a\in\mc O(X)$ is \emph{reflection positive} if, and only if,
\[
\EE{bb^\dagger \diamond a} \geq 0,\quad b\in\mc O(X_+),
\]
i.e.\ $\omega(b) = \EE{b\diamond a}$ is a \emph{state} of the partial algebra $\cat T\mc O(X)$.
\end{defn}
\begin{rk}
By independence, $a=1$ is reflection positive. Indeed, for $b\in\mc O(X_+)$ one has that
\[
\EE{bb^\dagger} = \EE{b}\EE{b^\dagger} = \Abs{\EE{b}}^2 \geq 0,
\]
because $\mu$ is $\theta$-invariant.
\end{rk}
\begin{prop} \label{a is reflection positive}
If $a\in\mc O(X)$ is such that $\mc Sa(0)\geq 0$, then $a$ is reflection positive.
\end{prop}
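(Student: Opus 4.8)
The plan is to reduce the whole statement to the multiplicativity of the $\mc S$-transform, together with the elementary identity $\mc Sc(0) = \EE c$ valid for every cylinder density $c$. Indeed, since $\hat\mu(0) = \EE 1 = 1$ and $\mc Tc(0) = \EE c$, one has $\mc Sc(0) = \hat\mu(0)^{-1}\mc Tc(0) = \EE c$; by the martingale condition defining $L^1_\mr{eff}(X)$ this number does not depend on the scale $P$, so it is well-defined. In particular the hypothesis $\mc Sa(0)\geq 0$ says precisely that $\EE a\geq 0$.

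First I would rewrite the time-ordered factor as a Wick product. Since $b\in\mc O(X_+)$ and $b^\dagger\in\mc O\bigl(X_{(-\infty,0)}\bigr)$ have supports in disjoint time regions, their time-ordered product coincides with their Wick product, $bb^\dagger = b\diamond b^\dagger$, exactly as in the definition of the partial product on $\cat T\mc O(X)$ (the two agree by independence whenever the arguments are time-separated). Hence $bb^\dagger\diamond a = b\diamond b^\dagger\diamond a\in\mc O(X)$, and applying $\EE{\,\cdot\,} = \mc S(\,\cdot\,)(0)$ together with the multiplicativity $\mc S(u\diamond v) = \mc Su\,\mc Sv$ on $\mc O(X)$ gives
\[
\EE{bb^\dagger\diamond a} = \mc S\bigl(b\diamond b^\dagger\diamond a\bigr)(0) = \mc Sb(0)\,\mc Sb^\dagger(0)\,\mc Sa(0).
\]

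The heart of the argument is then to show that $\mc Sb^\dagger(0) = \overline{\mc Sb(0)}$, i.e.\ $\EE{b^\dagger} = \overline{\EE b}$. Writing $b = \sum_k\int_{\mf M^k} x^{\diamondtimes k}\D\alpha^k$, the involution replaces each coefficient $\alpha_{p_1\cdots p_k}$ by $\overline{\alpha_{p_1\circ\theta\cdots p_k\circ\theta}}$. Because $\theta$ is a measure-preserving involution, $\mu$ is $\theta$-invariant and the volumes $\abs p$ are preserved, so $\EE{x_{p_1}\diamond\cdots\diamond x_{p_k}} = \EE{x_{p_1\circ\theta}\diamond\cdots\diamond x_{p_k\circ\theta}}$; moreover, since the field is real-valued, each such expectation is a real number (a conditional expectation of a product of real variables). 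Re-indexing the sum by $q_i = p_i\circ\theta$ then yields $\EE{b^\dagger} = \overline{\EE b}$, so that $\mc Sb(0)\,\mc Sb^\dagger(0) = \Abs{\mc Sb(0)}^2\geq 0$. Combined with $\mc Sa(0)\geq 0$ this gives $\EE{bb^\dagger\diamond a} = \Abs{\mc Sb(0)}^2\,\mc Sa(0)\geq 0$, as required; for $a = 1$ it specializes to the computation $\EE{bb^\dagger} = \Abs{\EE b}^2$ recorded in the preceding remark.

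I expect the identity $\EE{b^\dagger} = \overline{\EE b}$ to be the one point that must be handled carefully, since it is where the reflection $\theta$, the $\theta$-invariance of the reference measure, and the reality of the underlying field must all be used in concert. The identification $bb^\dagger = b\diamond b^\dagger$ also deserves a word, as it is exactly what allows the single multiplicative $\mc S$-transform to absorb both the time-ordered product and the subsequent Wick product with $a$ into one clean factorization.
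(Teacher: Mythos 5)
Your proof is correct and follows essentially the same route as the paper, whose entire argument is the one-line identity $\EE{bb^\dagger\diamond a} = \abs{\mc Sb}^2\mc Sa|_{\xi=0}$ obtained from the multiplicativity of $\mc S$ and the evaluation $\mc Sc(0)=\EE c$. You have merely made explicit the steps the paper leaves implicit, namely that the time-ordered product $bb^\dagger$ is by definition the Wick product $b\diamond b^\dagger$ and that $\mc Sb^\dagger(0)=\overline{\mc Sb(0)}$ via the $\theta$-invariance of $\mu$ and the reality of the field.
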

\begin{proof}
Indeed, given $b\in\mc O(X_+)$ one has that
\(
\EE{bb^\dagger\diamond a} = \abs{\mc Sb}^2\mc Sa|_{\xi=0}\lo.
\)
\end{proof}

\begin{theo}\label{reconstruction-theo}
Let $\omega: \cat T\mc O(X)\rightarrow \C$ be a $\tau^t$-invariant state
% i.e.\
% \[
% \omega\bigl(\tau^t(a)\bigr) = \omega(a),\quad a\in\cat T\mc O(X),\, t\in\R,
% \]
for which $\tau^t$ is \emph{pointwise weakly continuous,} in the sense that
\[
\lim_{t\rightarrow 0} \omega\bigl(\tau^t(a)b^\dagger\bigr) = \omega(ab^\dagger) %\quad a,b\in A.
\]
for all $a,b\in \cat T\mc O(X)$.
Then, there exists a Hilbert space $\mc H$, a partial algebra representation
\[
\pi: \cat T\mc O(X_+)\rightarrow \Set{ A:\dom(A)\subseteq\mc H\rightarrow\mc H },
\]
a self-adjoint operator $H$ on $\mc H$ whose spectrum is bounded from below, and a cyclic, unit vector $\Omega\in\mc H$ such that:
\begin{enumerate}
	\item $\omega(a) = \langle\Omega, \pi(a)\Omega\rangle$ for all $a\in \cat T\mc O(X_+)$.
	\item $\pi\left(\tau^t(a)\right)\Omega = \e^{-t H}\pi(a)\Omega$ for all $a\in \cat T\mc O(X_+)$ and $t\geq 0$. Thus, in particular, $H\Omega = 0$.
\end{enumerate}
\end{theo}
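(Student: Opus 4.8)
The plan is to carry out an Osterwalder--Schrader reconstruction adapted to the partial algebra $\cat T\mc O(X)$: build a GNS-type Hilbert space from $\mc O(X_+)$ with $\omega$ playing the role of vacuum expectation, represent time-ordered observables by (generally unbounded) left-multiplication operators, and realize $\set{\tau^t}_{t\geq0}$ as a self-adjoint contraction semigroup whose generator is $H$.

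First I would equip $\mc O(X_+)$ with the sesquilinear form $\langle a,b\rangle := \omega(b\diamond a^\dagger)$. This is well-defined because $a^\dagger$ is supported at negative times while $b\in\mc O(X_+)$, so the time-ordered product $b\diamond a^\dagger$ makes sense; its positivity $\langle b,b\rangle=\omega(bb^\dagger)\geq0$ is precisely the reflection positivity guaranteed by $\omega$ being a state, while conjugate symmetry follows from the reality $\omega(c^\dagger)=\overline{\omega(c)}$ together with $(b\diamond a^\dagger)^\dagger = a\diamond b^\dagger$. Quotienting by the null space $\mc N=\Set{b\in\mc O(X_+) | \langle b,b\rangle=0}$, which is a subspace by Cauchy--Schwarz, and completing yields $\mc H$; the class $\Omega:=[1]$ is a unit vector since $\omega(1)=1$, and it is cyclic by construction. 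For $a\in\mc O(X_I)$ with $I\subseteq(0,\infty)$ I would define $\pi(a)[b]=[a\diamond b]$ on the dense domain of classes $[b]$ with $b$ supported at times strictly earlier than $I$, where the time-ordered product is available; a short positivity estimate shows that $\pi(a)$ descends to the quotient, producing the asserted partial-algebra representation into $\Set{A:\dom(A)\subseteq\mc H\rightarrow\mc H}$.

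Next I would define the candidate semigroup $\hat T_t[b]=[\tau^t b]$ for $t\geq0$, which preserves $\mc O(X_+)$ and, I claim, $\mc N$. Using the reflection relation $\theta\tau_t=\tau_{-t}\theta$ (so that $(\tau^t a)^\dagger=\tau^{-t}(a^\dagger)$) together with the $\tau^t$-invariance of $\omega$, one computes $\langle\hat T_t a,b\rangle=\langle a,\hat T_t b\rangle$, so that $\hat T_t$ is symmetric, and moreover $\norm{\hat T_t v}^2=\langle v,\hat T_{2t}v\rangle$. This last identity is the heart of the matter: it exhibits each $\hat T_{2t}$ as a positive operator and, by Cauchy--Schwarz, gives $\norm{\hat T_t v}^2\leq\norm{v}\,\norm{\hat T_{2t}v}$; iterating yields $\norm{\hat T_t v}\leq\norm{v}^{1-2^{-n}}\norm{\hat T_{2^n t}v}^{2^{-n}}$, and combining this with the a priori boundedness of $\hat T_t$ coming from the measure-preserving nature of $\tau_t$ lets me send $n\to\infty$ to conclude $\norm{\hat T_t}\leq1$.

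Finally, the pointwise weak-continuity hypothesis states exactly that $t\mapsto\langle a,\hat T_t b\rangle$ is continuous at $0$; since a weakly continuous contraction semigroup is strongly continuous, the spectral theorem for self-adjoint contraction semigroups (equivalently Hille--Yosida) furnishes a unique self-adjoint $H$ with spectrum in $[0,\infty)$ such that $\hat T_t=\e^{-tH}$. Because $\tau^t$ fixes constants we have $\hat T_t\Omega=\Omega$, whence $H\Omega=0$. Property~(1) is then immediate, as $\pi(a)\Omega=[a\diamond1]=[a]$ and $\langle\Omega,[a]\rangle=\omega(a)$; property~(2) reads $\pi(\tau^t a)\Omega=[\tau^t a]=\hat T_t[a]=\e^{-tH}\pi(a)\Omega$. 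I expect the main obstacle to lie in the partial-algebra bookkeeping---because the time-ordered product is defined only under a time-separation condition, $\pi(a)$ is genuinely unbounded and densely defined, so one must verify both that $\hat T_t$ preserves $\mc N$ and that $\pi$ respects the partial product on matching domains---together with the contraction estimate for $\hat T_t$, whose self-adjointness and boundedness rest on the reflection symmetry and the invariance of $\omega$.
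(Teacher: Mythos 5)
Your proposal follows essentially the same route as the paper: a GNS-type construction on $\cat T\mc O(X_+)/N$ with the form $\omega(ab^\dagger)$, left multiplication as a partial-algebra representation, and the transfer semigroup $T^t(a+N)=\tau^t(a)+N$ whose generator is $H$, with symmetry and the weak-continuity hypothesis doing the work. The only difference is that you additionally sketch the multiple-reflection contraction estimate to get semiboundedness of $H$, whereas the paper delegates that to the general theory of symmetric strongly continuous semigroups via its references.
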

\begin{proof}
This is a GNS-like construction. Consider the vector space
\[
\mc H_0 = \cat T\mc O(X_+)/N, \quad
N = \Set{ a\in \cat T\mc O(X_+) | \omega(aa^\dagger)=0 }.
\]
Using the Cauchy-Schwartz inequality associated to the positivity of $\omega$, we  see that
\[
a\in \cat T\mc O(X_+),\, b\in N \text{ and } ab \text{ makes sense } \Rightarrow ab\in N.
\]
Thus, the (partial) action of $\cat T\mc O(X_+)$ on $\mc H_0$ given by  $\pi(a)(b+N) = ab+N$ on
\[
\dom\pi(a) = \Set{b+N | b\in \cat T\mc O(X_+) \text{ and } ab \text{ makes sense}},
\]
is well defined. The Hilbert space $\mc H$ is the completion of $\mc H_0$ with respect to the positive, sesquilinear form $\langle a+N,b+N\rangle = \omega(ab^\dagger)$. The unit vector is $\Omega = 1+N$ and, by definition, $\omega(a) = \langle\pi(a)\Omega,\Omega\rangle$ for all $a\in\cat T\mc O(X_+)$.

Let us construct $H$. It will be the generator~\cite{davies1980one} of the one-parameter semigroup $\set{T^t}$ given by
\[
T^t(a+N) = \tau^t(a)+N,\quad a\in \cat T\mc O(X_+),
\]
which is well defined because
\begin{align*}
\omega\bigl( \tau^t(a)\tau^t(a)^\dagger \bigr) &= \omega\bigl( \tau^t(a)\tau^{-t}(a^\dagger) \bigr) 	= \omega\bigl( a\tau^{-2t}(a^\dagger) \bigr) \\
	&\leq \omega(aa^\dagger)^{1/2} \omega\bigl( \tau^{2t}(a)\tau^{-2t}(a^\dagger) \bigr)^{1/2} = 0
\end{align*}
whenever $a\in N$.
In order for $H$ to exist
we need $\set{T^t}$ to be:
\begin{itemize}
\item Strongly continuous, or, equivalently~\cite[Corollary 3.1.8]{bratteli2012operator}, weakly continuous, which follows immediately from the hypothesis.
\item Symmetric:
\begin{align*}
\left\langle T^t(a+N), b+N\right\rangle &= \omega\bigl( \tau^{t}(a) b^\dagger \bigr) = \omega\bigl( a \tau^{-t}(b^\dagger) \bigr) = \omega\bigl( a\tau^{t}(b)^\dagger \bigr) \\
	&= \left\langle a+N,T^t(b+N) \right\rangle.
\end{align*}
\end{itemize}
Note that the condition
\(
\pi\left(\tau^t(a)\right)\Omega = \e^{-t H}\pi(a)\Omega
\)
for all $a\in\cat T\mc O(X_+)$ and $t\geq 0$ holds by construction.
\end{proof}
\begin{rk}
A finite inverse temperature version of this theorem can be obtained, along the lines of~\cite{klein1981stochastic}, by using a finite interval as time domain and the theory of one-parameter \emph{local} semigroups~\cite{klein1981construction}.
\end{rk}

\subsection{The free field}

%We think of $T$ as the renormalized kinetic energy of a field $x:M\rightarrow\R$ which, neglecting the diffusion, distributes as a white noise with measure $\mu$.
%Otherwise said, we consider the Gibbs measure formally given by $\exp^\diamond(-T)\mu$.
%As we will shortly see, a typical class of operators that could be used here are positivity preserving resolvents $A = (H+m)^{-1},\,m>0,$ with $H$ being a Dirichlet operator on $L^2(M)$, see~\cite[Proposition 4.3]{ma1992dirichlet}.
%Now, the characteristic function of  $\exp^\diamond(-T)\mu$ is
%\[
%\EE{\e^{-\I\xi x}\exp^\diamond(-T)} = \e^{-\mc ST}\hat\mu.
%\]
%This will make sense as a probability measure whenever $\mc ST$ is negative definite~\cite{berg1984harmonic}, thus ensuring that $\e^{-\mc ST}$ is positive definite. %Note that the result will essentially be a family of Ising models having a continuum limit by definition, because compatibility under coarse graining is built-in. In this section we explore some of the possibilities offered by this construction.

%\subsubsection{The free field}

Consider the Gaussian measure $\mu_\mr{free}$ on $L^2(M)$ with covariance
\[
C(\xi,\eta) = \frac12 \Langle\xi, (-\Delta+1)^{-1}\eta\Rangle.
\]
We want to express this as $\exp^\diamond(a)\mu$ for some
\(
a = \int_{\mf M^2} x^{\diamondtimes 2}\D\alpha^2 % Tx = \int_{\mf M^2}\alpha x^{\diamond 2}
\in\mc O_\mr{poly}(X),
\)
where $\mu$ is a Gaussian white noise. In order to do so, start by computing
\begin{align*}
\mc S(x_p) &= \hat\mu^{-1} \EE{x_p\e^{-\I\xi x}}
	= \frac1{\hat\mu} \int x_p\e^{-\I\sum_{q}\abs q\xi_q x_q} \D\mu(x) \\
    &= \e^{\sum_q\abs q\xi_q^2/2} \left( \I\frac1{\abs p}\frac\D{\D\xi_p} \e^{-\sum_q\abs q\xi_q^2/2} \right)
    = -\I\xi_p\loo.
\end{align*}
It follows that the characteristic function of $\exp^\diamond(a)\mu$ is
\begin{align*}
\varphi(\xi)
	&= \hat\mu\,\mc S\exp^\diamond(a) = \hat\mu\exp\mc S(a) \\
	&= \e^{-\frac12\xi^2} \e^{-\sum_{p_1,p_2\in P} \alpha_{p_1p_2} \xi_{p_1}\xi_{p_2}},\quad \xi\in X_P^*\subseteq L^2(M)
\end{align*}
where $\xi^2 = \sum \abs p\xi_p^2$\Lo. %and $\alpha$ is the kernel defining $D$.
Thus, we recover the free field if
\[
\sum \alpha_{p_1p_2} \xi_{p_1}\xi_{p_2} = \frac12\Langle \xi,\bigl((-\Delta+1)^{-1}-1\bigr)\xi\Rangle,
\]
i.e.\ $\alpha_{p_1p_2} = \frac12 \Langle p_1\lo,\left((-\Delta+1)^{-1}-1\right)p_2\Rangle$.
\begin{rk}
Observe, however, that if we simply use $\alpha_{p_1p_2} = \frac12 \Langle p_1\lo,(-\Delta+1)^{-1}p_2\Rangle$, then the effective 2-point functions
\[
\frac{\partial^2}{\partial\xi_{p_1}\partial\xi_{p_2}} \varphi(0),\quad p_1\neq p_2,
\]
do not get modified because the factor $\hat\mu(\xi) = \e^{-\frac12\xi^2}$ is diagonal (its logarithm has vanishing crossed derivatives). In other words, the physics of the corresponding quantum theory does not depend on the variance of the reference noise (which we have arbitrarily set to 1 for cells of volume 1), and \emph{the coefficients $\alpha_{p_1p_2}$ are given by the desired propagator.}
\end{rk}

So, at this point we ask ourselves what happens if we change the reference noise in this construction. Take, for instance, a Poisson reference and let  $\mathrm{Poi} := \exp^\diamond(a)\mu$, where $a$ is as above.
As we will show, $\mathrm{Poi}$ has a positive definite characteristic function and therefore qualifies as the Gibbs measure of a statistical mechanical system---but its $n$-point functions (evaluated at pairwise different arguments) are just those of the free field. In other words, again \emph{the reference noise drops out of the quantum model.} This seems to always be the case, and could be interpreted as follows: the reference noise is a choice of regularization of ``Lebesgue measure'' on the space of fields, and the physics of the quantum models constructed using functional integration with respect to it is independent of this choice.

Let us come back to the stochastic positivity of $\mr{Poi}$.
The characteristic function of $\mu$ reads
\[
\hat{\mu}_P(\xi)=\prod_{p\in P}\e^{\abs{p}\left(\e^{-\I\xi_p}-1\right)}
\]
and the $\mc S$-transform becomes
\[
\mc S(x_p)=\e^{-\abs{p}\left(\e^{-\I\xi_p}-1\right)}\EE{x_p \e^{-\I\abs{p}\xi_p x_p}}=\e^{-\abs{p}\left(\e^{-\I\xi_p}-1\right)}\left(\I\frac{1}{\abs{p}}\frac{\D}{\D\xi_p}\e^{\abs{p}\left(\e^{-\I\xi_p}-1\right)}\right)=\e^{-\I\xi_p}.
\]
As characteristic function of $\mathrm{Poi}$ we therefore obtain
\[
\varphi(\xi)
	= \prod_{p\in P}\e^{\abs{p}\left(\e^{-\I\xi_p}-1\right)}\e^{-\sum_{p_1,p_2\in P} \alpha_{p_1p_2} \e^{-\I\xi_{p_1}}\e^{-\I\xi_{p_2}}}
	=: \varphi_{a}(\xi)\varphi_{b}(\xi).
\]
\begin{rk}
We can see that the connected $n$-point functions evaluated at different arguments coincide with those of the free field, as claimed above.
\end{rk}

It is convenient to approach the problem of positive-definiteness for $\mr{Poi}$ from the perspective of Laplace transforms.
Note that here the characteristic function is holomorphic on $\C^{\abs{P}}$ and analytic continuation is for free.
Therefore, consider
\[
\varphi_{L}(\xi)=\varphi_{La}(\xi)\varphi_{Lb}(\xi):=\varphi_a(-\I\xi)\varphi_{b}(-\I\xi),\quad \text{with} \;\xi\in X^\ast_{P,>0} \cong \R_{>0}^{\abs P}\LO.
\]
Not surprisingly, the first factor happens to be the Laplace transform of $\mu_P$ and is therefore positive-definite. For the second factor one needs to see whether it is completely monotone (CM)~\cite[Ch.\ 4, Theorem 6.13]{berg1984harmonic}, more explicitly, whether
$(-1)^{\abs{\gamma}}D^{\gamma}\varphi_{Lb}(\xi)\geq 0$, for all multi-indices $\gamma\in \N^{\abs{P}}$ and $\xi\in X^\ast_{P,>0}$\LO. For $\abs{\gamma}=1$, the condition is simply
\begin{equation}\label{poisson-first-der}
-\frac{\partial}{\partial\xi_q}\varphi_{Lb}(\xi)=-\frac{\partial}{\partial\xi_q}\e^{\sum_{k,l} \alpha_{kl}\e^{-\xi_k-\xi_l}}= 2\sum_l\alpha_{ql}\e^{-\xi_q-\xi_l}\varphi_{Lb}(\xi)\geq 0.
\end{equation}
Since the coefficients $\alpha_{ql}$ are non-negative by hypothesis,  (\ref{poisson-first-der}) is certainly satisfied. As second derivatives we get
\[
\frac{\partial^2}{\partial\xi_q^2}\varphi_{Lb}(\xi)=\left(4\alpha_{qq}\e^{-2\xi_q}+2\sum_{l\neq q}\alpha_{ql}\e^{-\xi_q-\xi_l}\right)\varphi_{Lb}(\xi),
\]
whereas $\frac{\partial^2}{\partial\xi_r\partial\xi_q}\varphi_{Lb}(\xi)$ equals
\[%begin{align*}
2\alpha_{qr}\e^{-\xi_q-\xi_r}\varphi_{Lb}(\xi)+4\left(\sum_{l}\alpha_{ql}\e^{-\xi_q-\xi_l}\right)\left(\sum_{l}\alpha_{rl}\e^{-\xi_r-\xi_l}\right)\varphi_{Lb}(\xi)
\]%end{align*}
for $q\neq r$.
Both expressions obey CM. Higher derivatives will contribute just an extra negative sign to each summand that is produced, which is duly compensated by a negative factor from $(-1)^{\abs{\gamma}}.$  We have thus found that the Laplace transform of $\mathrm{Poi}$ fulfills condition CM.

For Gamma fields the same conclusion holds as for Poisson fields, which moreover can be proved along the same lines of reasoning. We shall therefore confine ourselves to provide the ingredients. The characteristic  function and the $\mc S$-transform are now given by
\[
\hat{\mu}_P(\xi)=\prod_{p\in P}(1+\I\xi_p)^{-\abs{p}},\;\;
\text{and}\;\;
\mc S(x_p)=(1+\I\xi_p)^{-1},
\]
respectively. The characteristic function of $\Gamma_G:=\exp^\diamond(a)\mu$ is
\[
\varphi(\xi)
	=\prod_{p\in P}(1+\I\xi_p)^{-\abs{p}}\e^{\sum_{p_1,p_2\in P} \alpha_{p_1p_2}(1+\I\xi_{p_1})^{-1}(1+\I\xi_{p_2})^{-1}},
\]
which is holomorphic on $\C^{\abs{P}}\backslash \{\I\mathbf{1}\}.$
Taking again recourse to the Laplace transform and complete monotonicity, one finds that $\Gamma_G$ represents a Gibbs measure.

\subsection{Models with quartic interaction}\label{phi4-interaction}

Guided by the experience we have gained upon studying the free field, we now consider the possibility of having a self-interacting model with Euclidean (signed) measure
\[
\exp^\diamond(a)\mu,\quad a = \int_{\mr M^2} x^{\diamondtimes}\D\alpha^2 + \int_{\mf M^4} x^{\diamondtimes 4}\D\alpha^4,
\]
where $\alpha_{p_1p_2} = \frac12\Langle p_1,(-\Delta+1)^{-1}p_2\Rangle$ and the coefficients $\alpha_{p_1\cdots p_4}$ specify the connected 4-point function.
In order for $\exp^\diamond(a)$ to exist we need $\mc Sa$ to be bounded from above, and in order for it to be reflection positive we need that $\mc Sa(0)\in\R$.
Besides compatibility, those are the only restrictions on the $\alpha_{p_1\cdots p_4}$'s, and that leaves us with a great deal of freedom. We can take, for instance, %$M=\T^d$ and
\[
\alpha_{p_1\cdots p_4} = -\int_M %\langle p_1\lo,e_s\rangle \cdots \langle p_4\lo, e_s\rangle\D s,\quad e_s(t) = \sum_{k\in\Z^d} \frac{\e^{\I k(t-s)}}{k^2+1}
	(-\Delta+1)^{-1}p_1(m)\cdots (-\Delta+1)^{-1}p_4(m)\D m,
\]
which can be seen to be well-defined up to $d=8$ using the Sobolev embedding $H^2(M)\hookrightarrow L^p(M)$, $p=2d/(d-4)$. The compatibility conditions
\[
\alpha_{p_1\cdots p_4} = \sum_{q_i\leq p_i} \alpha_{q_1\cdots q_4}
\]
hold by multilinearity and
\[%begin{align*}
\mc Sa_P(\xi)
	= \sum \alpha_{p_1\cdots p_4}(\I\xi_{p_1})\cdots (\I\xi_{p_4})
    = -\int_M \left( (-\Delta+1)^{-1}\sum\xi_pp \right)^4\D m \leq 0,
\]%end{align*}
so that $\exp^\diamond(a)$ exists and is reflection positive. This defines a model which is a truncation of the $\phi^4$ field---but makes perfectly good physical sense by itself.

\begin{rk}
If one feels so inclined, higher order Feynman diagrams can be incorporated.
\end{rk}
\begin{rk}
The restriction $d\leq 8$ might be rather easy to remove---the only thing that seems to happen for $d>8$ is that polynomials on the $x_p$'s stop having finite expectations, but \emph{smooth} smearings $x_\rho = \int_M x(m)\rho(m)\D m$ should answer the call of duty.
\end{rk}

\subsection{Outlook}
The model with quartic interaction, as defined above, already provides a very interesting application to the AdS/CFT correspondence. AdS/CFT in its most elementary form amounts to the assertion that a (quantum) field theory on AdS-space gives rise to a conformal field theory on its conformal boundary. One possible description of the Riemannian version of $d$-dimensional AdS-space is given by the manifold
\[
M:=\{m=(z,\boldsymbol{\zeta})=(z,\zeta_1,\ldots,\zeta_{d-1})\in\R^d \,|\, z>0\},
\]
equipped with the metric
\[
ds^2=(dz^2+d\zeta_1^2+\cdots+d\zeta_{d-1}^2)/z^2.
\]
In this parametrization the boundary at infinity, denoted $\partial \mathrm{AdS}$, corresponds to the one-point compactification of the hyperplane $z=0$, which thus can be identified with the $(d-1)$-dimensional unit sphere $\mathbb{S}^{d-1}$. The isometry group of AdS acts by means of conformal transformations on $\partial \mathrm{AdS}$.

Suppose now we had a family of Schwinger-functions $(S_n)_{n\in \N}$ on $\mathrm{AdS}$, obeying the OS-axioms plus the existence of certain scaled limits
\[
S_n^\infty(\boldsymbol{\zeta}_1,\ldots,\boldsymbol{\zeta}_n)=\lim_{(z_1,\ldots,z_n)\rightarrow \mathbf{0}}(z_1\cdots z_n)^\chi S_n((z_1,\boldsymbol{\zeta}_1),\ldots,(z_n,\boldsymbol{\zeta}_n)).
\]
then the boundary Schwinger functions $(S_n^\infty)_{n\in \N}$ themselves satisfy the OS-axioms plus conformal invariance, as shown in \cite{antidesitter2000moschella}.
The real parameter $\chi$ is related to the scaling dimension of the boundary conformal field.
To see how this fits in our setting, let
\begin{equation*}
G(m,m')=(-\Delta +1)^{-1}(m,m')
\end{equation*}
denote the integral kernel of $(-\Delta+1)^{-1}$. In terms of the latter we may write
\begin{equation}\label{alpha-two}
\alpha_{p_1,p_2} = \Big\langle p_1\otimes p_2, G(\cdot,\cdot)\Big\rangle
\end{equation}
and
\begin{equation}\label{alpha-four}
\alpha_{p_1 \cdots p_4} = -\Big\langle p_1\otimes\cdots\otimes p_4, \int_M G(\cdot,m)\cdots G(\cdot,m)\D m\Big\rangle
\end{equation}
and similarly for $\alpha_{p_1,p_2}$. Recall that $\alpha_{p_1,p_2}$ and $\alpha_{p_1 \cdots p_4}$ are just the connected two- and four-point functions of our model, so that in a  more common jargon one would read eq. \ref{alpha-four} by saying that $-\int_M G(\cdot,m)\cdots G(\cdot,m)\D m$ is the connected four-point function evaluated at the ``test'' function $p_1\otimes\cdots\otimes p_4.$ Now evaluating the latter on $\delta_{p_1 z}\otimes\cdots\otimes \delta_{p_4 z},$ a delta function supported on the set $\{m=z\}\cap p_1\cap \ldots \cap p_4$, we may heuristically perform the limit
\begin{equation}\label{bd-four-point}
-\lim_{z\rightarrow 0} z^{4\chi}\int_M G(\delta_{p_1z},m)\cdots G(\delta_{p_4z},m)\D m =-c_1\int_M H(\tilde{p}_1,m)\cdots H(\tilde{p}_4,m)\D m,
\end{equation}
with the $\tilde{p}_i$'s being projections on the boundary corresponding to the $p_i$'s. $H$ is the bulk-boundary propagator encoding the way fluctuations on AdS propagate to the boundary, see~\cite{gottschalk2008ads}. A similar limit for the two-point function
\begin{equation}\label{bd-two-point}
\lim_{z\rightarrow 0}z^{2\chi}G(\delta_{p_1z},\delta_{p_2z})=c_2 G_{\mathrm{bd}}(\tilde{p}_1,\tilde{p}_2), \quad c_1,c_2\in \R,
\end{equation}
gives the boundary two-point function $G_{\mathrm{bd}}$.
Performing these limit operations for all $n$-point functions that can be build from $\alpha_{p_1,p_2}$ and $\alpha_{p_1\cdots p_4}$, we should get a family of $n$-point functions on $\partial \mathrm{AdS}$, that comprise a conformal field on the boundary.

Treatments of the AdS/CFT-correspondence in terms of well-defined and OS-positive functional integrals have given up to now only trivial results, see \cite{gottschalk2009ads1,gottschalk2013ads2}. It would therefore be very interesting to see whether the difficulties encountered so far can be overcome in our framework.

\appendix

\section{The white noise observable algebra}\label{observable algebra}

Here we prove that $B_\mu(X_P^\ast )$ is a subalgebra of $C(X_P^\ast)$ %for the exemplary cases of Gamma, Poisson and Gauss fields. It has already been checked that the semigroups specifying the measures $\mu_P$ satisfy Hypothesis R.
assuming, as in \autoref{explicit-calc}, that the support of $\nu_\lambda$ is a fixed additive semigroup $S\subseteq \R$
and that $\D\nu_\lambda(s) = \rho_\lambda(s)\D s$ where $\D s$ is an invariant measure.

It is clear that $B_\mu(X_P^\ast )$ is closed under linear combinations. In order to check that it is closed under multiplication, we need to show that for arbitrary  $\varphi_1,\varphi_2\in B_\mu(X_P^\ast )$ the first two conditions of Definition \ref{fourier-stieltjes} are satisfied. \\
{\it Condition 1.} Since
\[
\varphi_1\varphi_2=\frac{1}{2}\Bigl((\varphi_1+\varphi_2)^2-\varphi_1^2-\varphi_2^2\Bigr),
\]
it is sufficient to verify that $\varphi^2\in B_\mu(X_P^\ast )$, whenever $\varphi\in B_\mu(X_P^\ast )$. But this is true, because  $\varphi^2\hat{\mu}_P^\lambda = \left(\varphi\hat{\mu}_P^{\lambda/2}\right)^2$ and $\varphi\hat{\mu}_P^{\lambda/2}\in B(X_P^\ast).$ One concludes by the fact that $B(X_P^\ast)$ itself is a complex algebra.
\\
{\it Condition 2.} Need to check that $\left(\varphi_1\varphi_2 \hat{\mu}_P^\lambda\right)\raisebox{0.5ex}{$\check{\,}$} \ll \mu_P$ for every $\lambda>0$. Now
\[
\left(\varphi_1\hat{\mu}_P^{\lambda/2}\varphi_2\hat{\mu}_P^{\lambda/2}\right)\raisebox{0.5ex}{$\check{\,}$}
	= \left(\varphi_1\mu_P^{\lambda/2}\right)\raisebox{0.5ex}{$\check{\,}$} \ast \left(\varphi_2\mu_P^{\lambda/2}\right)\raisebox{0.5ex}{$\check{\,}$}
\]
and by hypothesis we know that $\left(\varphi_i\mu_P^{\lambda/2}\right)\raisebox{0.5ex}{$\check{\,}$} \ll \mu_P\Leftrightarrow \left(\varphi_i\mu_P^{\lambda/2}\right)\raisebox{0.5ex}{$\check{\,}$} = f_i\mu_P$ for certain $f_i\in L^1(X_P)$. On the other hand we have for every Borel set $B\subseteq X_P$
\begin{align} \label{absolute}
&f_1\mu_P \ast f_2\mu_P(B) \nonumber \\
& \quad =\int_{X_P}\left(\int_{X_P}1_B(x+y)f_1(y)\D\mu_P(y)\right)f_2(x)\D\mu_P(x).
\end{align}
Recall that $\mu_P$ is the product of the measures $\D\nu_{\abs{p}}(\abs{p}x_p)=\rho_{\abs{p}}(s_p)\D s_p$\lo.
By the invariance of $\D s$, it is clear from (\ref{absolute}) that $f_1\mu_P \ast f_2\mu_P\ll \mu_P$\lo.
%Choosing $B=X_P$ in (\ref{absolute}), one finds by hypothesis that the resulting measure is finite. This implies that the Radon-Nikodym density is integrable and leads to $f_1\mu_P \ast f_2\mu_P= h\mu_P$ with $h\in L^1(X_P)$.
Therefore the properties of an algebra hold true.
%If $\rho_{\abs{p}}$ is the Gamma or Gaussian density, the same argument applies.

Let us now prove that $\mc O_{\mathrm{poly}}(X_P)\subseteq \mc O(X_P).$ The $\mc S$-transform is by definition a homomorphism from $\mc O_{\mathrm{poly}}(X_P)$ to $B_\mu(X^\ast_P)$. Therefore
it suffices to verify this for the elementary Wick monomials $\left(x(\mf m)^{\diamond n}\right)_P=x_p^{\diamond n}$. For the latter one has $\mc S\bigl(x_p^{\diamond n}\bigr)=\hat{\mu}^{-n}\mc T\bigl(x_p\bigr)^n.$ The first two conditions of Definition \ref{fourier-stieltjes} now regard the expression
\[
\hat{\mu}^{-n}\mc T\bigl(x_p\bigr)^n \hat{\mu}^\lambda = \left(\hat{\mu}^{-1}\mc T\bigl(x_p\bigr) \hat{\mu}^{\lambda/n}\right)^n,\quad \lambda>0,
\]
whose single factor can be rewritten as
\begin{align*}
&\hat{\mu}^{-1}\mc T(x_p)\hat{\mu}^{\lambda/n} = \hat{\mu}^{\lambda/n-1}\widehat{x_p\mu}
=\hat{\mu}^{\lambda/n-1}\frac{\I}{\abs{p}}\frac{\partial}{\partial\xi_p}\hat{\mu}  \\
	&\quad= \frac{\I}{\abs{p}}(\lambda/n)^{-1} \frac{\partial}{\partial\xi_p}\hat{\mu}^{\lambda/n}
	= (\lambda/n)^{-1} \left( x_p\mu^{*\lambda/n} \right)\raisebox{0.5ex}{$\hat{}$}\ .
\end{align*}
This is the Fourier transform of a complex measure which is absolutely continuous w.r.t.\ $\D s$, implying that $\hat{\mu}^{-1}\mc T(x_p)\in B_\mu(X_P^\ast).$ But $B_\mu(X_P^\ast)$ is an algebra, so that likewise $\hat{\mu}^{-n}\mc T(x_p)^n\in B_\mu(X_P^\ast)$ and the assertion holds.

\bibliographystyle{plain}
\bibliography{biblio}

\end{document}